\def\a{\alpha}
\def\b{\beta}
\def\g{\gamma}
\def\d{\delta}
\def\e{\epsilon}
\def\ve{\varepsilon}
\def\D{\Delta}
\def\l{\lambda}
\def\s{\sigma}
\def\brR{\bar{R}}
\def\beq{\begin {equation}}
\def\eeq{\end {equation}}
\def\beqar{\begin {eqnarray*}}
\def\eeqar{\end {eqnarray*}}
\DeclareMathOperator*{\argmax}{argmax}
\newcommand{\id}{\ensuremath{\mathbbm{1}}}
\newcommand{\eg}{\emph{e.g.}}
\newcommand{\ie}{\emph{i.e.}}
\newcommand{\etal}{\emph{et al}}
\newcommand{\expect}{\mathbb{E}}
\newcommand{\real}{\mathbb{R}}
\newcommand{\prob}{\mathbb{P}}
\newcommand{\bw}{\mathbf{w}}
\newcommand{\bv}{\mathbf{v}}
 \newcommand{\bb}{\mathbf{b}}
\newcommand{\bd}{\mathbf{d}}
\newcommand{\bp}{\mathbf{p}}
\newcommand{\bx}{\mathbf{x}}
\newcommand{\by}{\mathbf{y}}
\newcommand{\bY}{\mathbf{Y}}
\newcommand{\bI}{\mathbf{I}}
\newcommand{\bu}{\mathbf{u}}
\newcommand{\ba}{\mathbf{a}}
\newcommand{\bk}{\mathbf{k}}
\newcommand{\bK}{\mathbf{K}}
\newcommand{\blambda}{\mathbf{\lambda}}
\newcommand{\bmu}{\mathbf{\mu}}
\def\rI{\mathrm{I}}
\newtheorem{thm}{Theorem}%[section]
\newtheorem{mydef}[theorem]{Definition}%[section]
\newtheorem{assumption}[theorem]{Assumption}%[section]
\newtheorem{cor}{Corollary}
\newtheorem{lem}{Lemma}
\newtheorem{prop}{Proposition}
\newcommand{\knapsack}{{knapsack}}
\title{Privacy Auctions for Recommender Systems}
\author{Pranav Dandekar\inst{1}
\and
Nadia Fawaz\inst{2}
\and
Stratis Ioannidis\inst{2}
 }
 \institute{Stanford University \email{ppd@stanford.edu}
\and
Technicolor \email{\{nadia.fawaz, stratis.ioannidis\}@technicolor.com}
}
\begin{document}

\maketitle

\begin{abstract}
%We extend the study of markets for private data initiated by Ghosh and Roth~\cite{ghosh-roth:privacy-auction}.

We study a market for private data in which a data analyst publicly releases a statistic over a database of private information.
Individuals that own the data incur a cost for their loss of privacy proportional to the differential privacy guarantee given by the analyst at the time of the release.
The analyst incentivizes individuals by compensating them, giving rise to a  \emph{privacy auction}.
Motivated by recommender systems, the statistic we consider is a linear predictor function with publicly known weights.
The statistic can be viewed as a prediction of the unknown data of a new individual, based on the data of individuals in the database.
We formalize the trade-off between privacy and accuracy in this setting, and show that a simple  class of estimates achieves an order-optimal trade-off.  It thus suffices to focus on auction mechanisms that output such estimates.
We use this observation to design a truthful, individually rational, proportional-purchase mechanism under a fixed budget constraint. We show that our mechanism is 5-approximate in terms of accuracy compared to the optimal mechanism, and that no truthful mechanism can achieve a $2-\ve$ approximation, for any $\ve > 0$.

%We study a market for private data in which a data analyst publicly releases a statistic over a database of private information.
%Individuals that own the data incur a cost for their loss of privacy proportional to the differential privacy guarantee given by the analyst at the time of the release.
%The analyst incentivizes individuals by compensating them, giving rise to a  \emph{privacy auction}.
%Motivated by recommender systems and, more generally, prediction problems, the statistic we consider is a linear predictor function with publicly known weights.
%The statistic can be viewed as a prediction of the unknown data value of an individual outside the database based on the database entries.
%We formalize the trade-off between privacy and accuracy in this setting, and show that a simple  class of estimates achieves an order-optimal trade-off.  It thus suffices to focus on auction mechanisms that output such estimates.
%We use this observation to design a truthful, individually rational, proportional-purchase mechanism under a fixed budget constraint. We show that our mechanism is 5-approximate in terms of accuracy compared to the optimal mechanism, and that no truthful mechanism can achieve a $2-\ve$ approximation, for any $\ve > 0$.

\end{abstract}

%\category{F.0}{Theory of Computation}{General}

%\terms{Theory, Economics, Privacy}

%\keywords{mechanism design, privacy auction, inner product}

%\acmformat{Dandekar, P., Fawaz, N., and Ioannidis, S. 2012. Privacy Auctions for Inner Product Disclosures.}

%\begin{bottomstuff}
%Author's addresses: P. Dandekar, Department of Management Science \& Engineering, Stanford University. Email: \href{mailto:ppd@stanford.edu}5{ppd@stanford.edu}. Research done in part while interning at Technicolor, Palo Alto, CA.
%N. Fawaz, Technicolor, Palo Alto, CA. Email: \href{mailto:nadia.fawaz@technicolor.com}{nadia.fawaz@technicolor.com}.
%S. Ioannidis, Technicolor, Palo Alto, CA. Email: \href{mailto:stratis.ioannidis@technicolor.com}{stratis.ioannidis@technicolor.com}.
%\end{bottomstuff}

\section{Introduction}

Recommender systems are ubiquitous on the Internet,  lying at the heart of some of the most popular Internet services, including Netflix, Yahoo, and Amazon. These systems use algorithms to predict, \emph{e.g.}, a user's rating for a movie,  her propensity to  click on an advertisement or to purchase a product online.
By design, such prediction algorithms rely on access to large training datasets, typically comprising data from thousands (often millions) of individuals.
%The monetization of user data has become a commonplace---yet, controversial---aspect of the Internet economy. A large ecosystem of companies (\emph{e.g.}, BlueKai and Acxiom) collect and aggregate data about individuals %(both online and offline)
%that they subsequently trade with third parties, while popular Internet services (\emph{e.g.},  Google, Facebook, and Netflix) routinely  mine user data to personalize their services to their customers.
%One of the most ubiquitous examples of this personalization is recommender systems.
%These systems are powered by algorithms that predict, \emph{e.g.}, the rating a user gives to a movie, or propensity to purchase a product or click on an ad, and rely heavily on training prediction models over datasets comprising behavioral data of thousands (often millions) of individuals.
This large-scale collection %and aggregation
of user data  has raised serious privacy concerns among researchers and consumer advocacy groups. Privacy researchers have shown that access to seemingly non-sensitive data (\emph{e.g.}, movie ratings) can lead to the leakage of potentially sensitive information when combined with de-anonymization techniques~\cite{deanon}. Moreover, a spate of recent lawsuits \cite{lawsuit1,lawsuit2,lawsuit3} as well as behavioral studies \cite{tailored} have demonstrated the increasing reluctance of the public to allow the unfettered collection and monetization of user data.

As a result, researchers and advocacy groups have argued in favor of legislation protecting individuals, by ensuring they can ``opt-out'' from data collection if they so desire \cite{DNT2011}.
However, a widespread restriction on data collection would be detrimental to profits of the above companies.
One way to address this tension between the value of data and the users' need for privacy is through \emph{incentivization}.
In short, companies releasing an individual's data ought to appropriately compensate her for the violation of her privacy, thereby incentivizing her consent to the release.

We study the issue of user incentivization through \textit{privacy auctions}, as introduced by Ghosh and Roth \cite{ghosh-roth:privacy-auction}. In a privacy auction, a data analyst has access to a database $\bd\in \real^n$ of private data $d_i$,  $i=1,\ldots,n$, each corresponding to a different individual. This data may represent information that is to be protected, such as an individual's propensity to click on an ad or purchase a product, or the number of visits to a particular website.
  The analyst wishes to publicly release an estimate $\hat{s}(\bd)$ of a statistic $s(\bd)$ evaluated over the database.
In addition, each individual incurs a privacy cost $c_i$ upon the release of the estimate $\hat{s}(\bd)$, and must be appropriately compensated by the analyst for this loss of utility. The analyst has a budget, which limits the total compensation paid out. As such, given a budget and a statistic $s$, the analyst  must (a) solicit the costs of individuals $c_i$ and (b) determine the estimate $\hat{s}$ to release as well as the appropriate compensation to each individual.

%Ghosh and Roth propose doing so through the use
Ghosh and Roth employ \emph{differential privacy}~\cite{dwork:sensitivity} as a principled approach to quantifying the privacy cost $c_i$.
 %, provides a rigorous and quantifiable definition of the privacy loss incurred by an individual.
% It has been successfully applied to the design of privacy-preserving algorithms for a wide range of  statistical queries and machine learning operations (\eg, \cite{dwork:differentialprivacy,dwork:sensitivity,Dwork04privacy-preservingdatamining,McSherry:2009}).
% Informally, a randomized function $\hat{s}(\bd)$ is $\epsilon$-differentially private with respect to individual $i$ if changing the data $d_i$ alters the probability distribution of the function output by at most an $e^\epsilon$ factor.
% As such, the parameter $\epsilon$ quantifies the privacy guarantee the analyst provides to the individual: a small $\epsilon$ corresponds to better privacy since it guarantees that $\hat{s}(d)$ is essentially independent of $d_i$.
Informally, ensuring that $\hat{s}(\bd)$  is $\epsilon$-differentially private with respect to individual $i$ provides a guarantee on the privacy of this individual; a small $\epsilon$ corresponds to better privacy since it guarantees that $\hat{s}(\bd)$ is essentially independent of the individual's data $d_i$.
Privacy auctions incorporate this notion by assuming that each individual $i$ incurs a cost $c_i=c_i(\epsilon)$, that is a function of the privacy guarantee $\epsilon$ provided by the analyst.
\subsection{Our Contribution}
Motivated by recommender systems, we focus in this paper on a scenario where the statistic $s$ takes the form of a \emph{linear predictor}:
 \begin{align}\textstyle s(\bd) := \langle\bw, \bd\rangle = \sum_{i=1}^n w_i d_i,\label{def:s}\end{align}
 where $\bw\in \real^n$, is a publicly known vector of real (possibly negative) weights.
Intuitively, the public weights $w_i$ serve as measures of the similarity between each individual $i$ and a new individual, outside the database. The function $s(\bd)$ can then be interpreted as a prediction of the value $d$ for this new individual.

Linear predictors of the form~\eqref{def:s} include many well-studied methods of statistical inference, such as the $k$-nearest-neighbor method, the Nadaranya-Watson weighted average, ridge regression, as well as support vector machines. We provide a brief review of such methods in Section~\ref{sec:prediction}. Functions of the form~\eqref{def:s} are thus of particular interest in the context of recommender systems \cite{sarwar:grouplens,item-cf}, as well as other applications involving predictions (\eg, polling/surveys, marketing). %and  motivate our study of releases of form \re{def:s}.
In the sequel, we ignore the provenance of the public weights $\bw$, keeping in mind that any of these methods apply.
Our contributions are as follows:
\begin{enumerate}
\item{\bf Privacy-Accuracy Trade-off.} We characterize the accuracy of the estimate $\hat{s}$ in terms of the \emph{distortion} between the linear predictor $s$ and $\hat{s}$ defined as  $\d(s,\hat{s}) := \max_{\bd} \expect\left[|s(\bd) - \hat{s}(\bd)|^2\right]$, \ie,
 the maximum mean square error between $s(\bd)$ and $\hat{s}(\bd)$ over all databases  $\bd$.
 We define a \emph{privacy index} $\beta(\hat{s})$ that captures the amount of privacy an estimator $\hat{s}$ provides to individuals in the database.
 %Our definitions of privacy and distortion enable us to show that.
 We show that any estimator $\hat{s}$ with low distortion must also have a low privacy index (Theorem~\ref{thm:acc-lb}).
% A lower distortion corresponds to better accuracy. Interpreted as a worst-case mean squared error, $\delta(s,\hat{s})$
%is a natural metric to consider. It is similar in form to the notion of expected utility loss considered in \cite{grs:universal-privacy}.
\item {\bf Laplace Estimators Suffice.} %Another benefit of our definition of distortion is that when $\hat{s}$ is a \emph{Laplace estimator} \cite{dwork:sensitivity,dwork:differentialprivacy} (\ie, $\hat{s}$ uses noise drawn from a Laplace distribution to guarantee privacy), we can transform the problem of minimizing distortion into one that resembles the knapsack problem.
%However, in order to justify designing a privacy auction that outputs a Laplace estimator,  we must argue that among all possible estimators of the inner product, focusing on Laplace estimators suffices.
We  show that a special class of \emph{Laplace estimators} \cite{dwork:sensitivity,dwork:differentialprivacy} (\ie, estimators that use noise drawn from a Laplace distribution), which we call Discrete Canonical Laplace Estimator Functions (DCLEFs),  exhibits an order-optimal trade-off between privacy and distortion (Theorem~\ref{thm:acc-ub}).
This  allows us to restrict our focus on privacy auctions that output DCLEFs as estimators of the linear predictor $s$.
\item {\bf Truthful, 5-approximate Mechanism, and Lower bound.}
We design a \emph{truthful},  \emph{individually rational}, and \emph{budget feasible} mechanism that outputs a DCLEF as an estimator of the linear predictor (Theorem~\ref{thm:budget-constr}).
Our estimator's accuracy is a 5-approximation with respect to the DCLEF output by an optimal, individually rational, budget feasible mechanism.
We also prove a lower bound (Theorem~\ref{thm:hardness}): there is no truthful DCLEF mechanism that achieves an approximation ratio $2-\ve,$ for any $\ve > 0$.
\end{enumerate}
%The mechanism we design is a \emph{proportional-purchase} mechanism: it selects a subset of individuals and sets their differential privacy guarantee $\e_i$ as well as their payment $p_i$ to be proportional to their absolute weight $|w_i|$.
%It guarantees perfect privacy (\ie, $\e_i = 0$) to all other individuals.

In our analysis, we exploit the fact that   when $\hat{s}$ is a Laplace estimator minimizing distortion under a budget resembles the \knapsack{} problem.
As a result, the problem of designing a privacy auction that outputs a DCLEF $\hat{s}$ is similar in spirit to the \knapsack{} auction mechanism~\cite{budget-feasible-mechanisms}. However, our setting poses an additional challenge because
the privacy costs %$c_i$
 exhibit \emph{externalities}: the cost incurred by an individual %$i$
 is a function of %the differential privacy guarantee $\epsilon_i$, which in turn depends on
which other individuals are being compensated.
Despite the externalities in costs, we achieve the same approximation as the one known for the \knapsack{} auction mechanism \cite{budget-feasible-mechanisms}.
%Moreover, the approximation ratio is independent of input parameters, such as the size of the domain in which the database entries $d_i$ take values.

\subsection{Related Work}

{\bf Privacy of behavioral data.} Differentially-private algorithms have been developed for the release of several different kinds of online user behavioral data such as click-through rates and search-query frequencies \cite{KorolovaKMN09}, as well as movie ratings \cite{Mcsherry:2009}. As pointed out by McSherry and Mironov \cite{Mcsherry:2009}, the reason why the release of such data constitutes a privacy violation is not necessarily that, \emph{e.g.}, individuals perceive it as embarrassing, but that it renders them susceptible to \emph{linkage} and \emph{de-anonymization attacks} \cite{deanon}.
Such linkages could allow, for example, an attacker to piece together an individual's address stored in one database with his credit card number or social security number stored in another database. %As a result, disclosing such online behavioral data could have serious privacy implications,  and therefore i
It is therefore natural to attribute a loss of utility to the disclosure of such data.

%A natural question to ask is whether individuals care about the privacy of seemingly non-sensitive online behavioral data such as ratings, clicks and websites visited.
%However, such data can be linked to other publicly available data to create powerful de-anonymization attacks which link multiple online identities belonging to an individual \cite{deanon,Mcsherry:2009}.
%Such linkages could allow an attacker to piece together, for example, an individual's address stored in one database with his credit card number or social security number stored in another database.
%As a result, disclosing such online behavioral data could have serious privacy implications,  and therefore it is natural to attribute a loss of utility to the privacy loss associated with such disclosures.

{\bf Privacy auctions.} Quantifying the cost of privacy loss allows one to study privacy in the context of an economic transaction.
Ghosh and Roth initiate this study of privacy auctions in the setting where the data is binary and the statistic reported is the sum of bits, \emph{i.e.}, $d_i\in\{0,1\}$ and  $w_i = 1$ for all $i=1,\ldots,n$ \cite{ghosh-roth:privacy-auction}.
%Using a slightly different notion of distortion\footnote{Ghosh and Roth quantify the accuracy of the released estimator $\hat{s}$ through a tail bound, termed $k$-accuracy, which is weaker than the definition of distortion we consider here (see Lemma \ref{lem:acc-lb}).}, they show that to maximize an estimator's accuracy it suffices to consider estimators $\hat{s}$ that provide an identical, positive privacy guarantee $\epsilon>0$ to a subset of individuals, and perfect privacy ($\epsilon=0$) to the remaining individuals. As such, a privacy auction in this setup reduces to a \emph{multi-unit procurement auction}: the analyst needs only to decide which individuals to ``purchase'' privacy from; all such individuals will receive precisely the same privacy guarantee $\epsilon$.
%Ghosh and Roth design such a mechanism, and show it is optimal (\emph{i.e.}, maximizing accuracy) among all \emph{truthful}, \emph{individually-rational}, \emph{budget-feasible}, and \emph{envy-free} mechanisms; the latter implies that no individual would rather exchange its payment and privacy guarantee with another individual.
%, and coincides with our mechanism in the case where the data assume bit values and weights are all equal.
 %the latter buys an identical amount of privacy from each of a subset of individuals, each of which receive an identical payment.
%Their mechanism is budget-feasible, individually rational, truthful, and \emph{envy free}: no individual would rather exchange its payment and privacy guarantee with another individual.
Unfortunately, the Ghosh-Roth auction mechanism cannot be readily generalized to asymmetric statistics such as \eqref{def:s}, which, as discussed in Section~\ref{sec:prediction}, have numerous important applications including recommender systems. Our Theorems~\ref{thm:acc-lb} and \ref{thm:acc-ub}, which parallel the characterization of order-optimal estimators in \cite{ghosh-roth:privacy-auction}, imply that to produce an accurate estimate of $s$, the estimator $\hat{s}$ \emph{must provide different privacy guarantees to different individuals}. This is in contrast to the multi-unit procurement auction of \cite{ghosh-roth:privacy-auction}. In fact, as discussed the introduction, a privacy auction outputting a DCLEF $\hat{s}(\bd)$ has many similarities with a \knapsack{} auction mechanism \cite{budget-feasible-mechanisms}, with the additional challenge of externalities introduced by the Laplacian noise (see also  Section~\ref{sec:mechanism}). %However, in our setting the costs incurred by individuals are coupled through the Laplace noise that is added to ensure differential privacy. As a result our analysis is more involved, and our approximation ratio somewhat more surprising than that for the \knapsack{} auction mechanism.

% identifying what privacy guarantee to provide to each individual turns into a knapsack-type of problem. Indeed, to produce an accurate estimate of $s$, the mechanism must provide different privacy guarantees to different individuals.   %In contrast to multi-unit procurement auctions, when weights are unequal the ``privacy goods" sold by individuals are no longer identical. Informally,  to produce an accurate estimate of $s$, the mechanism must differentiate among individuals. In Section~\ref{sec:mechanism}, we show that a privacy auction outputting a DCLEF as an estimator of the inner product has many similarities with this \knapsack{} mechanism.

{\bf Privacy and truthfulness in mechanism design.} A series of interesting results follow an orthogonal direction, namely, on the connection between privacy and truthfulness %in mechanism design
when individuals have the ability to misreport their data. Starting with the work of McSherry and Talwar \cite{mcsherrytalwar} followed by Nissim \etal\  \cite{approximatemechanismdesign}, Xiao  \cite{xiao:privacy-truthfulness} and most recently Chen \etal\  \cite{chen:privacy-truthfulness}, these papers design mechanisms that are simultaneously truthful and privacy-preserving (using differential privacy or other closely related definitions of privacy).
As pointed out by Xiao \cite{xiao:privacy-truthfulness},  all these papers consider an \textit{unverified} database, \ie, the mechanism designer cannot verify the data reported by individuals and therefore must incentivize them to report truthfully.
Recent work on truthfully eliciting private data through a \emph{survey} \cite{roth-liggett,roth-schoenebeck} %, whereb individuals first decide whether to participate in the survey and then report their private data,
 also fall under the unverified database setting \cite{xiao:privacy-truthfulness}.
In contrast, our setting, as well as that of Ghosh and Roth, is that of a \textit{verified} database, in which individuals cannot lie about their data.
This setting is particularly relevant to the context of online behavioral data: information on clicks, websites visited and products purchased is collected and stored in real-time and cannot be retracted after the fact.

%Another setting that is closely related to the setting of an unverified database is truthfully eliciting private data through a survey \cite{roth-liggett,roth-schoenebeck} where individuals first decide whether to participate in the survey and then report their private data.
%As pointed out by Xiao \cite{xiao:privacy-truthfulness}, incentivizing participation is closely related to incentivizing truth-telling in a mechanism design setting.

{\bf Correlation between privacy costs and data values.} An implicit assumption in privacy auctions as introduced in \cite{ghosh-roth:privacy-auction} is that the privacy costs $c_i$ are \emph{not} correlated with the data values $d_i$. This might not be true if, \emph{e.g.}, the data represents the propensity of an individual to contract a disease. %A possible criticism of our setting is that if individuals' privacy costs $c_i$ are correlated with their data values $d_i$, then participants in the auction might learn something about each others' data values by observing the estimator and their respective payments.
%In fact,
Ghosh and Roth \cite{ghosh-roth:privacy-auction} show that when the privacy costs are correlated to the data no individually rational direct revelation mechanism can simultaneously achieve non-trivial accuracy and  differential privacy. %with respect to both the data values as well as the privacy costs.
As discussed in the beginning of this section, the privacy cost of the release of behavioral data is predominantly due to the risk of a linkage attack.  It is reasonable in many cases to assume that this risk (and hence the cost of privacy loss) is not correlated to, \emph{e.g.}, the user's movie ratings.
%Therefore, it is reasonable to assume that in the context of online behavioral data, privacy costs are uncorrelated with data values.
Nevertheless, due to its importance in other settings such as medical data, more recent privacy auction models aim at handling such correlation  \cite{roth-liggett,roth-schoenebeck,fleischer:privacy-auctions-correlated};
%Our work is the first studying releases of asymmetric functions;
we leave generalizing our results to such privacy auction models as future work.

\section{Preliminaries}\label{sec:prelim}
Let $[k]=\{1,\cdots, k\}$, for any integer $k>0$, and define $\rI:= [R_{\min},R_{\max}]\subset \real$ to be a bounded real interval.
Consider a database containing the information of $n>0$ individuals.
In particular, the database comprises a vector $\bd$, whose entries $d_i\in \rI$, $i \in [n],$  represent the private information of individual $i$.
Each entry $d_i$ is \emph{a priori} known to the database administrator, and therefore individuals do not have the ability to lie about their private data.
A data analyst with access to the database would like to publicly release an estimate of the statistic $s(\bd)$ of the form \eqref{def:s}, \emph{i.e.}
%\beq\label{def:s}\textstyle
$s(\bd)=\sum_{i\in [n]} w_id_i$,
%\eeq
for some publicly known weight vector $\bw=(w_1,\dotsc,w_n) \in \real^n$.
For any subset $H\subseteq [n]$, we define $w(H) := \sum_{i\in H} |w_i|$, and  denote by $W :=w([n])= \sum_{i=1}^{n} |w_i|$ the $\ell_1$ norm of vector $\bw$.
 We denote the length of interval $\rI$ by $\D := R_{\max} - R_{\min}$, and its midpoint by $\brR := (R_{\min}+R_{\max})/2$.
Without loss of generality, we assume that $\ w_i \ne 0$ for all $i\in [n]$; if  not, since entries for which $w_i = 0$ do not contribute to the linear predictor,  it suffices to consider the entries of $\bd$ for which $w_i \neq 0$.

%For any subset $H=\{i_1,\dotsc,i_{|H|}\}\subseteq [n]$, we define the restriction of $\bw$ to H as the vector $\bw_{H}=(w_{i_1},\dotsc,w_{i_{|H|}})$. We denote the $L_1$-norm of vector $\bw$ by $W:=\|\bw \|_1= \sum_{i=1}^{n} |w_i|$, and similarly, the $L_1$-norm of vector $\bw_{H}$ by $W_H := \|\bw_{H} \|_1=\sum_{i\in H} |w_i|$.

%, and we denote by $W :=w([n])$ the total weight of all bits.

 %Some additional notation: for a subset $H\subseteq [n]$, we define $w(H) := \sum_{i\in H} w_i$.

%An \emph{estimator function} for inner product is any function $\hat{s}: \{0,1\}^n\rightarrow \real_+$ that takes as input a database $\bd$ and outputs a real number as the estimate of the inner product of the bits in $\bd$. A ``good" estimator function is one that, informally speaking, is accurate and also provides a good differential privacy guarantee. We will make these notions more precise shortly.
\subsection{Differential Privacy and Distortion}
Similar to \cite{ghosh-roth:privacy-auction}, we use the following generalized definition of differential privacy:
\begin{mydef}{\bf (Differential Privacy).}
A (randomized) function $f: \rI^n\rightarrow \real^m$ is $(\e_1,\dotsc, \e_n)$-differentially private if for each individual $i\in[n]$  and for any pair of data vectors $\bd, \bd^{(i)}\in \rI^n$ differing in only their $i$-th entry, $\e_i$ is the smallest value such that
${\prob[f(\bd)\in S]} \le e^{\e_i}{\prob[f(\bd^{(i)})\in S]}$ for all $S\subset \real^m$.
\end{mydef}
This definition differs slightly from the usual definition of $\e$-differential privacy~\cite{dwork:differentialprivacy}, as the latter is stated in terms of the \emph{worst case} privacy across all individuals.
More specifically, according to the notation in~\cite{dwork:differentialprivacy}, an  $(\e_1,\dotsc, \e_n)$-differentially private function is $\e$-differentially private,  where  $\e=\max_{i} \e_i$.

Given a deterministic function $f$, a well-known method to provide $\e$-differential privacy is to add random noise drawn from a Laplace distribution to this function  \cite{dwork:differentialprivacy}.
This readily extends to $(\e_1,\dotsc,\e_n)$-differential privacy.
%In particular, define the sensitivity of $f$ w.r.t.~the $i$-th bit as follows:
%\begin{mydef}{\bf (Sensitivity).}
%The sensitivity, $S_i(f)$, of a function $f: \{0,1\}^n\rightarrow \real$ with respect to  bit $d_i$,  $i\in [n]$, is given by
%\[
%S_i(f) := \max_{\bd, \bd^{(i)}\in \{0,1\}^n} |f(\bd) - f(\bd^{(i)})|
%\]
%where $\bd,\bd^{(i)}$ differ only in their $i$-th bit.
%\end{mydef}
%Informally speaking, $S_i(f)$ captures the extent to which the $i$-th bit affects the value of $f$; if $S_i(f)$ is high, flipping the $i$-th bit results in a large change in the function value.
%This definition generalizes the definition of $L_1$-sensitivity $S(f)$ appearing in \cite{dwork:sensitivity}; in particular, $S(f)$ is simply $\max_i S_i(f)$.
%Using the generalized definition of sensitivity, it is a straightforward extension of \cite{dwork:differentialprivacy}  to show that adding Laplace noise to $f$ guarantees $(\e_1,\dotsc,\e_n)$ differential privacy:
 \begin{lem}[\cite{dwork:differentialprivacy}]\label{lem:sensitivity}
Consider a deterministic function $f:\rI^n\rightarrow \real$. Define $\hat{f}(\bd) := f(\bd) + Lap(\s)$, where $Lap(\s)$ is a random variable sampled from the Laplace distribution with parameter $\sigma$. Then, $\hat{f}$ is $(\e_1,\dotsc, \e_n)$-differentially private, where
% \[
$\e_i = {S_i(f)}/{\sigma}$,  and %\quad i\in[n], \text{ and}
%\]
%\[
$S_i(f) := \max_{\bd, \bd^{(i)}\in \rI^n} |f(\bd) - f(\bd^{(i)})|$, % $i\in[n]$,
%\]
 is the sensitivity of $f$ to the i-th entry $d_i$, $i\in[n]$.
%where $S_i(f)$ is the sensitivity of $f$ with respect $d_i$.
\end{lem}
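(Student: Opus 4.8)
The plan is to argue directly from the density of the Laplace distribution; this is the standard Laplace-mechanism computation, carried out per individual. Recall that $Lap(\sigma)$ has density $\frac{1}{2\sigma}e^{-|x|/\sigma}$ on $\real$, which is strictly positive everywhere, so for a fixed database $\bd$ the output $\hat f(\bd)=f(\bd)+Lap(\sigma)$ has density $p_{\bd}(y)=\frac{1}{2\sigma}e^{-|y-f(\bd)|/\sigma}$. First I would fix an individual $i$ and a pair $\bd,\bd^{(i)}\in\rI^n$ differing only in coordinate $i$, and bound the pointwise ratio of their output densities. Using $|a|-|b|\le|a-b|$ with $a=y-f(\bd^{(i)})$ and $b=y-f(\bd)$,
\[
\frac{p_{\bd}(y)}{p_{\bd^{(i)}}(y)} = e^{\left(|y-f(\bd^{(i)})|-|y-f(\bd)|\right)/\sigma} \le e^{|f(\bd)-f(\bd^{(i)})|/\sigma} \le e^{S_i(f)/\sigma},
\]
the last step being the definition of $S_i(f)$. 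Integrating over an arbitrary measurable $S\subseteq\real$ gives $\prob[\hat f(\bd)\in S]\le e^{S_i(f)/\sigma}\,\prob[\hat f(\bd^{(i)})\in S]$, so $\hat f$ is $(\e_1,\dots,\e_n)$-differentially private with $\e_i\le S_i(f)/\sigma$.

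Since the paper's definition requires $\e_i$ to be the \emph{smallest} such value, I would then verify the matching lower bound by exhibiting a witness. Choose $\bd,\bd^{(i)}$ (differing only in coordinate $i$) attaining the supremum that defines $S_i(f)$, and assume without loss of generality $f(\bd)\ge f(\bd^{(i)})$. Taking the half-line $S=[f(\bd),\infty)$ gives $\prob[\hat f(\bd)\in S]=\tfrac12$, whereas $\prob[\hat f(\bd^{(i)})\in S]=\prob[Lap(\sigma)\ge f(\bd)-f(\bd^{(i)})]=\tfrac12\,e^{-S_i(f)/\sigma}$, so the ratio equals exactly $e^{S_i(f)/\sigma}$ and no smaller exponent satisfies the inequality for this $S$. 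Combined with the previous paragraph, this pins down $\e_i=S_i(f)/\sigma$.

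The only point needing a word of care is whether the maximum defining $S_i(f)$ is attained: when $f$ is continuous — which holds for the linear predictor $s$ of \eqref{def:s}, and for all estimators used later — this is automatic because $\rI^n$ is compact; otherwise one replaces the single witness by a sequence of pairs whose $f$-gaps converge to $S_i(f)$ and passes to the limit in the ratio above. Thus the substance of the proof is the one-line density-ratio bound, and everything else is bookkeeping to reconcile it with the ``smallest $\e_i$'' phrasing of the differential-privacy definition; I do not anticipate a genuine obstacle here.
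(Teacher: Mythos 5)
The paper does not prove this lemma at all --- it cites it directly from Dwork et al.\ --- so there is nothing to compare against except the standard argument, which is exactly what you give, correctly. Your density-ratio bound and the integration over $S$ are the textbook Laplace-mechanism proof, and your extra step establishing tightness (the witness pair attaining $S_i(f)$ together with $S=[f(\bd),\infty)$, giving ratio exactly $e^{S_i(f)/\sigma}$) is genuinely needed here and often omitted: the paper's Definition~1 takes $\e_i$ to be the \emph{smallest} constant for which the inequality holds, so the statement $\e_i = S_i(f)/\sigma$ with equality requires precisely the lower-bound witness you supply. Your handling of attainment of the maximum via compactness (or a limiting sequence) is also fine; there is no gap.
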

Intuitively, the higher the variance $\sigma$ of the Laplace noise added to $f$, the smaller $\epsilon_i$, and hence, the better the privacy guarantee of  $\hat{f}$.
Moreover, for a fixed $\sigma$, entries $i$ with higher sensitivity $S_i(f)$ receive a worse privacy guarantee (higher~$\e_i)$.

There is a natural  tradeoff between the amount of noise added and the accuracy of the perturbed function $\hat{f}$. To capture this, we introduce the notion of \emph{distortion}  between two (possibly randomized) functions:
%The weight $w_i$ determines the extent to which bit $d_i$ affects $s(\bd)$. Since all weights are not identical in general, different bits contribute to the sum $s(\bd)$ differently. Therefore, estimating $s(\bd)$ while providing differential privacy guarantees requires us to differentiate between individuals (e.g. those with high weights versus those with low weights). The definitions of function sensitivity and differential privacy above give us a formal way to make this differentiation.

%We define \emph{distortion} between two (possibly randomized) functions as follows.
\begin{mydef}{\bf (Distortion).}\label{def:distortion}
Given two functions $f: \rI^n\rightarrow \real$ and $\hat{f}: \rI^n\rightarrow \real$, the \emph{distortion}, $\d(f,\hat{f})$,  between $f$ and $\hat{f}$ is given by
\[
\d(f,\hat{f}) := \max_{\bd\in\rI^n} \expect\left[|f(\bd) - \hat{f}(\bd)|^2\right].
\]
\end{mydef}

In our setup, the data analyst wishes to disclose an \emph{estimator function} $\hat{s}: \rI^n\rightarrow \real$ of the linear predictor $s$.
%that takes as input a database $\bd$ and outputs a real number as the estimate of the inner product of the bits in $\bd$. A ``good" estimator function is one that, informally speaking, is accurate and also provides a good differential privacy guarantee. We will make these notions more precise shortly.
% Given an estimator function $\hat{s}: \{0,1\}^n\rightarrow \real_+$, we measure the accuracy of $\hat{s}$ by $\d(s,\hat{s})$.
Intuitively, a good  estimator $\hat{s}$ should have a small distortion $\delta(s,\hat{s})$, while also providing good differential privacy guarantees.

\subsection{Privacy Auction Mechanisms}
Each individual $i\in [n]$ has an associated cost function $c_i:\real_+\rightarrow \real_+$,  which determines the cost $c_i(\e_i)$ incurred by $i$ when an $(\e_1,\dotsc,\e_n)$-differentially private estimate $\hat{s}$ is released by the analyst. As in \cite{ghosh-roth:privacy-auction}, we consider linear cost functions, \ie,  $c_i(\e) = v_i\e$, for all  $i\in [n].$ We refer to $v_i$ as the \emph{unit-cost} of individual $i$. The unit-costs $v_i$ are not \emph{a priori} known to the data analyst.  Without loss of generality, we assume throughout the paper that $v_1 \le \dotsc \le v_n$.

Given  a weight vector $\bw = (w_1, \dotsc, w_n)\in \real^n$, let $M_s$ be a mechanism compensating individuals in $[n]$ for their loss of privacy from the release of an estimate $\hat{s}$ of the linear predictor $s(\bd)$. Formally, $M_s$ takes as input a vector of reported unit-costs $\bv = (v_1, \dotsc, v_n)\in \real_+^n$ and a budget $B$, and outputs 
%a payment $p_i\in \real_+$ for every $i\in [n]$, and
%an estimator function $\hat{s}: \{0,1\}^n\rightarrow \real_+$.
%
\begin{enumerate}
\item
a payment $p_i\in \real_+$ for every $i\in [n]$, and
%\item
%For each $i\in [n]$, the differential privacy guarantee $\e_i(\bv, \bw)\in\real_+$.
\item
an estimator function $\hat{s}: I^n\rightarrow \real_+$.
\end{enumerate}

Assume that  the estimator $\hat{s}$ satisfies $(\e_1,\dotsc, \e_n)$-differential privacy.
A mechanism is \emph{budget feasible} if $\sum_{i\in [n]} p_i\leq B$, \ie, the payments made by the mechanism are within the budget $B$.
Moreover, a mechanism is \emph{individually rational} if for all $i\in [n],\ p_i \ge c_i(\e_i) = v_i\e_i$, \ie, payments made by the mechanism exceed the cost incurred by individuals.
Finally, a mechanism is \emph{truthful} if  for all $i\in [n],\ p_i(v_i,v_{-i})  - v_i\e_i(v_i,v_{-i}) \ge p_i(v'_i,v_{-i})  - v_i\e_i(v'_i,v_{-i})$, \ie, no individual can improve her utility by misreporting her private unit-cost.

\subsection{Outline of our approach}
We denote by $\d_{M_s}:= \d(s, \hat{s})$ the distortion between $s$ and the function output by the mechanism $M_s$. Ideally, a mechanism should output an estimator that has small distortion. However, the smaller the distortion, the higher the privacy violation and, hence, the more money the mechanism needs to spend. As such, the objective of this paper is to design a mechanism with minimal distortion, subject to the constraints of  truthfulness, individual rationality, and budget feasibility.

To address this question, in Section~\ref{sec:priv-acc-tradeoff}, we first  establish a privacy-distortion tradeoff for differentially-private estimators of the linear predictor. We then introduce a family of estimators,  Discrete Canonical Laplace Estimator Functions (DCLEFs), and show that they achieve a near-optimal privacy-distortion tradeoff.
This result allows us to limit our attention to DCLEF privacy auction mechanisms, \ie,  mechanisms that output a DCLEF $\hat{s}$. In Section~\ref{sec:mechanism}, we present a mechanism that is truthful, individually rational, and budget feasible, while also being near-optimal in terms of distortion.

\section{Privacy-Distortion Tradeoff and Laplace Estimators}\label{sec:priv-acc-tradeoff}

Recall that a good estimator should exhibit low distortion and simultaneously give good privacy guarantees. In this section, we  establish the privacy-distortion tradeoff for differentially-private estimators of the linear predictor. Moreover, we introduce a family of estimators that exhibits a near-optimal tradeoff between privacy and distortion. This will motivate our focus on privacy auction mechanisms that output estimators from this class in Section~\ref{sec:mechanism}.

\subsection{Privacy-Distortion Tradeoff}

%We first establish formally a trade-off between the privacy and distortion of an estimator function $\hat{s}$.
%We show that there exists a simple, natural class of estimators, namely DCLEFs,  that attains a near-optimal trade-off between privacy and distortion.
%This motivates our study of DCLEF auction mechanisms.

There exists a natural tension between privacy and distortion, as highlighted by the following two examples.

\noindent {\bf Example 1.} Consider the estimator $\hat{s} := \brR\sum_{i=1}^n w_i$, where recall that $\brR = (R_{\min}+R_{\max})/2$. This estimator guarantees perfect privacy (\ie, $\e_i = 0$), for all individuals. However, $\d(s,\hat{s}) =  (W\D)^2/4$.

\noindent {\bf Example 2.} Consider the estimator function $\hat{s} := \sum_{i=1}^n w_id_i$. In this case, $\d(s,\hat{s}) = 0$. However, $\e_i = \infty$ for all $i\in [n]$.

In order to formalize this tension between privacy and distortion, we define the \emph{privacy index} of an estimator as follows.
\begin{mydef}\label{def:priv-index}
Let $\hat{s}: \rI^n\rightarrow \real$ be any $(\e_1, \dotsc, \e_n)$-differentially private estimator function for the linear predictor. We define the \emph{privacy index}, $\b(\hat{s})$, of $\hat{s}$ as
\begin{align}\label{privindex}
\b(\hat{s}) := \max\left\{ w(H) : H \subseteq [n] \text{ and } \sum_{i\in H} \e_i < 1/2\right\}.
\end{align}
\end{mydef}
$\b(\hat{s})$ captures the weight of the individuals that have been guaranteed good privacy by $\hat{s}$. Next we characterize the impossibility of having an estimator with a low distortion but a high privacy index. Note that for Example 1, $\beta(\hat{s})=W$, \ie, the largest value possible, while for Example 2, $\beta(\hat{s})=0$. We stress that  the selection of 1/2 as an upper bound in \eqref{privindex} is arbitrary; Theorems \ref{thm:acc-lb} and \ref{thm:acc-ub} still hold if another value is used, though the constants involved will differ.

Our first main result, which is proved in Appendix~\ref{app:proofofacc-lb}, establishes a trade-off between the privacy index and the distortion of an estimator.
\begin{thm}[Trade-off between Privacy-index and Distortion]
\label{thm:acc-lb}
Let $0 < \a < 1$. Let $\hat{s}: \rI^n\rightarrow \real$ be an arbitrary estimator function for the linear predictor. If $\d(s,\hat{s}) \le (\a W \D)^2/48$ then $\b(\hat{s}) \le 2\a W$.
\end{thm}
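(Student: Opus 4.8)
The plan is to argue by contradiction: suppose $\d(s,\hat s)\le(\a W\D)^2/48$ but $\b(\hat s)>2\a W$, and derive a contradiction with differential privacy. By Definition~\ref{def:priv-index} the second assumption furnishes a set $H\subseteq[n]$ with $w(H)>2\a W$ and $\sum_{i\in H}\e_i<1/2$; in particular $H\neq\emptyset$. The intuition is that the individuals in $H$ enjoy such strong privacy that $\hat s$ is essentially blind to their data, yet flipping that data moves the true statistic $s$ by more than $2\a W\D$, so $\hat s$ cannot be accurate on both configurations.

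Concretely, I would exhibit two databases $\bd,\bd'\in\rI^n$ that agree outside $H$ (placed at $\brR$ there, say) and, for $i\in H$, sit at opposite endpoints of $\rI$ according to the sign of $w_i$: $d_i=R_{\min},\,d_i'=R_{\max}$ when $w_i>0$, and $d_i=R_{\max},\,d_i'=R_{\min}$ when $w_i<0$. Writing $a:=s(\bd)$ and $b:=s(\bd')$, this yields $b-a=\sum_{i\in H}|w_i|\D=w(H)\D>2\a W\D$. Let $m:=(a+b)/2$ and $t:=(b-a)/2=w(H)\D/2$, so that $t>\a W\D$ and hence $t^2>(\a W\D)^2$.

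The next step is an anti-concentration argument from the distortion bound. Since $\expect[|\hat s(\bd)-a|^2]\le\d(s,\hat s)\le(\a W\D)^2/48<t^2/48$, Chebyshev's inequality gives $\prob[|\hat s(\bd)-a|\ge t]<1/48$; because $a+t=m$, this forces $\prob[\hat s(\bd)\ge m]<1/48$. Applying the identical reasoning at $\bd'$ and using $b-t=m$ gives $\prob[\hat s(\bd')\ge m]>47/48$. I would then link the two distributions through differential privacy: walking from $\bd'$ to $\bd$ one coordinate of $H$ at a time and applying the definition of $(\e_1,\dots,\e_n)$-differential privacy with $S=\{x\in\real:x\ge m\}$ at each of the $|H|$ steps, the multiplicative factors telescope to $e^{\sum_{i\in H}\e_i}\le e^{1/2}$, so $\prob[\hat s(\bd')\ge m]\le e^{1/2}\prob[\hat s(\bd)\ge m]$. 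Combining the three estimates gives $47/48<e^{1/2}/48$, i.e.\ $47<e^{1/2}$, which is absurd; hence $\b(\hat s)\le 2\a W$.

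The one point that needs care is the composition in the last step: one must use that, by definition, each $\e_i$ validly bounds the likelihood ratio for \emph{every} pair of databases differing only in coordinate $i$ (and in either direction), so that telescoping over the $|H|$ single-coordinate changes is legitimate and the exponents add with no loss — this is precisely the clean composition of per-individual guarantees enjoyed by Laplace-type estimators. Everything else — choosing $\bd,\bd'$ to maximise the swing of $s$, and tracking the constant $48$ against the threshold $1/2$ in the privacy index — is routine.
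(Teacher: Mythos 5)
Your proof is correct, and while it rests on the same core strategy as the paper's — exhibit two databases that differ only on a heavily-weighted, well-protected set, telescope the per-coordinate differential-privacy guarantees to force the two output distributions close together, and contradict this with an anti-concentration bound derived from the distortion hypothesis — your execution is leaner in three respects. First, you work directly with the maximizing set $H$ from the definition of $\b(\hat{s})$ (which by construction satisfies $\sum_{i\in H}\e_i<1/2$), whereas the paper routes through a greedily-constructed set $H(\hat{s})$ and a separate knapsack $2$-approximation step (Lemma~\ref{lem:approx-knapsack}); a factor of $2$ appears in both arguments, but in yours it is absorbed into the hypothesis $w(H)>2\a W$ rather than into the greedy approximation. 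Second, because both the distortion and the privacy guarantee are worst-case over databases, you may place the two databases at opposite endpoints of $\rI$ on the coordinates of $H$, obtaining the full swing $w(H)\D$ in the true statistic; the paper instead starts from an arbitrary $\bd$, can shift each coordinate by only $\D/2$, and needs the $H^+/H^-$ partition to prevent cancellation, ending with a swing of only $\frac{\D}{4}w(H(\hat{s}))$. Third, you apply Markov's inequality to the second moment directly rather than passing through the intermediate $k$-accuracy notion (the paper's Lemmas~\ref{lem:acc-lb} and~\ref{lem:dist-acc}). The constants close with large slack either way (your contradiction is $47<e^{1/2}$, the paper's is $\frac{2}{3\sqrt{e}}>\frac13$), and your flagged point about composition is handled correctly: the definition bounds the likelihood ratio for every pair of databases differing in one coordinate, so the hybrid argument over $|H|$ single-coordinate changes is legitimate. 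One cosmetic remark: the inequality you invoke is Markov applied to $|\hat{s}(\bd)-a|^2$ rather than Chebyshev (the second moment is taken about $a=s(\bd)$, not about the mean of $\hat{s}(\bd)$), but the bound is valid as written.
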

In other words, if an estimator has low distortion, the weight of individuals with a good privacy guarantee (\ie, a small $\e_i$) can be at most an $\a$ fraction of $2W$.

%It is intuitive that such a result should hold; observe, for example, that the estimator in example 1 guarantees perfect privacy while achieving a very poor distortion whereas that in example 2 achieves zero distortion while giving no privacy.

%It is intuitive that a such a trade-off between privacy and distortion should exist.

\subsection{Laplace Estimator Functions}

Consider the following family of estimators for the linear predictor $\hat{s}: \rI^n\rightarrow \real$:
\beq\label{def:hats}
\hat{s}(\bd; \ba,\bx, \s) := \sum_{i=1}^n w_id_ix_i + \sum_{i=1}^n w_ia_i(1-x_i) + \text{Lap}(\s)
\eeq
where $x_i\in [0,1]$, and each $a_i\in \real$ is a constant independent of the data vector $\bd$. This function family is parameterized by $\bx, \ba$ and $\s$. The estimator $\hat{s}$ results from distorting $s$ in two ways: (a) a randomized distortion by the addition of the Laplace noise, and (b) a deterministic distortion through a linear interpolation between each entry $d_i$ and some constant $a_i$. Intuitively, the interpolation parameter $x_i$ determines the extent to which the estimate $\hat{s}$ depends on  entry $d_i$.
Using Lemma~\ref{lem:sensitivity} and the definition of distortion, it is easy to characterize the privacy and distortion properties of such estimators.

\begin{lem}\label{lem:privdistlap} Given $w_i$, $i\in [n]$, let $s(\bd)$ be the linear predictor given by \eqref{def:s}, and $\hat{s}$ an estimator of $s$ given by \eqref{def:hats}. Then,
\begin{enumerate}
\item
$\hat{s}$ is $(\e_1,\dotsc,\e_n)$-differentially private, where
$
\e_i = \frac{\D|w_i| \: x_i}{\s},
$ $i\in [n]$.
\item
The distortion satisfies
$\delta(s,\hat{s})  \geq \big(\frac{\D}{2}\sum_{i=1}^n |w_i|(1-x_i)\big)^2 +  2\s^2,$
with equality attained when $a_i=\bar{R}$, for all $i\in [n]$.
\end{enumerate}
\end{lem}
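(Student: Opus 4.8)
The two parts are essentially independent, and neither is deep; the plan is to (1) reduce part~1 to Lemma~\ref{lem:sensitivity} by a one-line sensitivity computation, and (2) for part~2, first evaluate the distortion exactly in terms of a deterministic quantity plus the variance of the Laplace noise, then lower-bound that deterministic quantity by a ``range of an affine map over a box'' argument, and finally check that $a_i=\brR$ makes the bound tight.

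\textbf{Part 1 (privacy).} I would write $\hat{s}(\bd;\ba,\bx,\s) = f(\bd) + \text{Lap}(\s)$ where $f(\bd) := \sum_{i=1}^n w_i d_i x_i + \sum_{i=1}^n w_i a_i(1-x_i)$ is deterministic, so Lemma~\ref{lem:sensitivity} applies verbatim and it only remains to compute $S_i(f)$. If $\bd$ and $\bd^{(i)}$ differ only in coordinate $i$, every term of $f$ cancels except $w_i d_i x_i$, so $f(\bd) - f(\bd^{(i)}) = w_i x_i (d_i - d_i')$; since $d_i, d_i' \in \rI$ we have $|d_i - d_i'| \le \D$, with equality for $\{d_i,d_i'\} = \{R_{\min},R_{\max}\}$. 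Hence $S_i(f) = \D|w_i|x_i$, and Lemma~\ref{lem:sensitivity} gives $\e_i = S_i(f)/\s = \D|w_i|x_i/\s$.

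\textbf{Part 2 (distortion).} Fix $\bd$ and let $Z := \text{Lap}(\s)$, so $\expect[Z]=0$ and $\expect[Z^2] = 2\s^2$ (the variance of the Laplace distribution with scale $\s$, matching the parametrization of Lemma~\ref{lem:sensitivity}). Setting $g(\bd) := s(\bd) - \expect[\hat s(\bd)] = \sum_{i=1}^n w_i(1-x_i)(d_i - a_i)$, which is deterministic, I would expand $\expect\big[|s(\bd)-\hat s(\bd)|^2\big] = \expect\big[(g(\bd)-Z)^2\big] = g(\bd)^2 + 2\s^2$, and then take the max over $\bd\in\rI^n$ to get $\delta(s,\hat s) = \max_{\bd\in\rI^n} g(\bd)^2 + 2\s^2$. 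To lower-bound $\max_\bd g(\bd)^2$, note $g$ is affine in $\bd$ and $\rI^n$ is a box, so choosing each $d_i\in\{R_{\min},R_{\max}\}$ according to the sign of $w_i(1-x_i)$ shows $\max_\bd g(\bd) - \min_\bd g(\bd) = \D\sum_{i=1}^n|w_i|(1-x_i)$ (the constant $\sum_i w_i(1-x_i)a_i$ drops out of the difference). Since $\max(|\min_\bd g|,|\max_\bd g|) \ge \tfrac12(\max_\bd g - \min_\bd g)$, we get $\max_\bd|g(\bd)| \ge \tfrac{\D}{2}\sum_i|w_i|(1-x_i)$, which is the claimed inequality. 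For tightness, when $a_i=\brR$ for all $i$ we have $d_i - \brR \in [-\D/2,\D/2]$, hence $|g(\bd)| \le \tfrac{\D}{2}\sum_i|w_i|(1-x_i)$ for \emph{every} $\bd$, matching the lower bound, so equality holds.

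\textbf{Anticipated obstacle.} There is no real obstacle, but the one point needing care is that the distortion bound must hold for \emph{all} $a_i\in\real$, including values far outside $\rI$, so one cannot simply bound $|d_i-a_i|$. The fix, as above, is to work with the \emph{spread} of the affine map $g$ over the box $\rI^n$, which equals $\D\sum_i|w_i|(1-x_i)$ independently of the $a_i$, and use that the sup of $|g|$ over the box is at least half the spread. The only other thing to pin down is the convention $\mathrm{Var}(\text{Lap}(\s)) = 2\s^2$, consistent with Lemma~\ref{lem:sensitivity}.
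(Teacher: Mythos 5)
Your proof is correct and follows essentially the same route as the paper's: the same sensitivity computation reducing part~1 to Lemma~\ref{lem:sensitivity}, and for part~2 the same decomposition of the distortion into the squared deterministic bias plus $2\s^2$, followed by the same ``max of $|g|$ is at least half the spread of the affine map over the box'' bound (the paper phrases this as $\max(|a-c|,|b-c|)\ge|a-b|/2$), with the tightness at $a_i=\brR$ checked in an equivalent way.
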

The proof of this lemma can be found in Appendix~\ref{app:proofs-3}. Note that the constants $a_i$ do not affect the differential privacy properties of $\hat{s}$. Moreover, among all estimators with given $\bx$, the distortion $\delta(s,\hat{s})$ is minimized when $a_i=\brR$ for all $i\in[n]$.  In other words, to minimize the distortion without affecting privacy, it is always preferable to interpolate between $d_i$ and $\brR$. This motivates us to define the family of Laplace estimator functions as follows.

\begin{mydef}
Given $w_i$, $i\in [n]$, the \emph{Laplace estimator function family} (LEF) for the linear predictor $s$ is  the set of functions $\hat{s}: \rI^n\rightarrow \real$, parameterized by $\bx$ and $\s$,  such that
\beq\label{eq:hats}
\hat{s}(\bd; \bx, \s) = \sum_{i=1}^n w_id_ix_i + \brR\sum_{i=1}^n w_i(1-x_i) + \text{Lap}(\s)
\eeq
%The function family is parameterized by $\bx$ and $\s$.
\end{mydef}

We call a LEF \emph{discrete} if $x_i\in\{0,1\}$.
%Consider a subclass of Laplace estimator functions where the noise is proportional to the ``residual weight" of individuals. We call this class \emph{canonical} Laplace estimator functions.
%\begin{mydef}
%The \emph{canonical Laplace estimator function family} for inner product $s$ is  defined to be the set of Laplace estimator functions  $\hat{s}: \{0,1\}^n\rightarrow \real_+$ for which
%\beq\label{eq:canon-sigma}
%\s = \s(\bx) := \sum_{i=1}^n w_i(1-x_i)
%\eeq
%Since $\s$ is a function of $\bx$, the function family is parameterized by $\bx$.
%\end{mydef}
%Observe that for a canonical Laplace estimator $\hat{s}(\bd; \bx)$,
%\beq\label{eq:clef-dist-expr}
%\d(s,\hat{s}) = \frac94\left(\sum_{i=1}^n w_i(1-x_i)\right)^2
%\eeq
%(since $\s = \s(\bx) = \sum_{i=1}^n w_i (1-x_i)$). We will make heavy use of canonical Laplace estimator functions in this paper.
Furthermore, we call a LEF \emph{canonical} if the Laplace noise added to the estimator has a parameter of the form
\beq\label{eq:canon-sigma}
\s = \s(\bx) := \D \sum_{i=1}^n |w_i| (1-x_i)
\eeq
Recall that $x_i$ controls the dependence of $\hat{s}$ on the entry $d_i$; thus, intuitively,  the standard deviation of the noise added in a canonical Laplace estimator is proportional to the ``residual weight'' of  data entries. % excluded from the deterministic part of the estimator.
Note that, by Lemma~\ref{lem:privdistlap}, the distortion of a canonical Laplace estimator $\hat{s}$ has the following simple form:
\beq\label{eq:clef-dist-expr}
\d(s,\hat{s}) = \frac94 \D^2 \big( \sum_{i=1}^n |w_i| (1-x_i)\big)^2 = \frac94 \D^2 \big( W - \sum_{i=1}^n |w_i| x_i\big)^2.
\eeq

%As we will see in the next section, discrete canonical Laplace estimator functions (DCLEF, in shorthand)
%exhibit a near-optimal tradeoff between privacy and distortion. This will motivate our focus on DCLEF
%privacy auction mechanisms, \emph{i.e.},  mechanisms that output a DCLEF $\hat{s}$.
%

Our next result establishes that there exists a discrete canonical Laplace estimator function (DCLEF) with a small distortion and a high privacy index.
\begin{thm}[DCLEFs suffice]
\label{thm:acc-ub}
Let $0 < \a < 1$. Let
\[
\hat{s}^* := \argmax_{\hat{s}: \d(s, \hat{s})\le (\a W \D)^2/48} \b(\hat{s})
\]
 be an estimator with the highest privacy index among all $\hat{s}$ for which $\d(s, \hat{s})\le (\a W \D)^2/48$. There exists a DCLEF $\hat{s}^{\circ}: \rI^n\rightarrow \real$ such that $\d(s,\hat{s}^{\circ}) \le (9/4) (\a W  {\D})^2$, and $\b(\hat{s}^{\circ}) \ge \frac12 \b(\hat{s}^*)$.
\end{thm}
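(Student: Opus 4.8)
The plan is to build the required DCLEF directly from $\hat{s}^*$: switch off exactly the individuals that $\hat{s}^*$ already protects, and compensate with the canonical amount of Laplace noise. Write $\beta^*:=\b(\hat{s}^*)$ and fix a set $H^*\subseteq[n]$ attaining the maximum in \eqref{privindex}, so that $w(H^*)=\beta^*$ and $\sum_{i\in H^*}\e_i^*<\tfrac12$, where $\e_i^*$ are the privacy parameters of $\hat{s}^*$. I would take $\hat{s}^{\circ}$ to be the DCLEF with $x_i=0$ for $i\in H^*$, $x_i=1$ otherwise, and canonical parameter $\s=\s(\bx)=\D\,w(H^*)$ as prescribed by \eqref{eq:canon-sigma} (replacing $H^*$ by a suitable subset if $w(H^*)$ turned out to exceed $\a W$; see below).

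Granting this choice, the two claimed inequalities are bookkeeping. By \eqref{eq:clef-dist-expr}, $\d(s,\hat{s}^{\circ})=\tfrac94\D^2\,w(H^*)^2=\tfrac94\D^2(\beta^*)^2$, so the distortion bound $\d(s,\hat{s}^{\circ})\le\tfrac94(\a W\D)^2$ follows once we know $\beta^*\le\a W$. For the privacy index, Lemma~\ref{lem:privdistlap} gives $\e_i=\D|w_i|x_i/\s=0$ for every $i\in H^*$ (since $x_i=0$ there), so $H^*$ is feasible in \eqref{privindex} and hence $\b(\hat{s}^{\circ})\ge w(H^*)=\beta^*\ge\tfrac12\b(\hat{s}^*)$. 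Thus everything hinges on the single estimate $\beta^*\le\a W$.

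This estimate is the heart of the proof, and I expect it to be the main obstacle, precisely because the statement of Theorem~\ref{thm:acc-lb} (which only yields $\beta^*\le2\a W$) is not enough: if $H^*$ consisted of one individual $j$ with $|w_j|$ close to $2\a W$, no DCLEF of distortion at most $\tfrac94(\a W\D)^2$ could protect $j$, and the construction would collapse. To close this gap I would invoke the differential-privacy reasoning underlying Theorem~\ref{thm:acc-lb}, not merely its statement. Since $\sum_{i\in H^*}\e_i^*<\tfrac12$, composing the privacy guarantee of $\hat{s}^*$ over the coordinates in $H^*$ shows that the law of $\hat{s}^*(\bd)$ changes by a factor of at most $e^{1/2}$ between the database that puts every $d_i$, $i\in H^*$, at $R_{\min}$ and the one that puts each at $R_{\max}$ (signs matched to the $w_i$), whereas $s$ itself moves by $w(H^*)\D=\beta^*\D$; a Markov-inequality argument comparing the two mean-square errors then forces $\d(s,\hat{s}^*)\ge\frac{1}{4(1+\sqrt{e})}(\beta^*\D)^2$, and since $4(1+\sqrt{e})<48$ the hypothesis $\d(s,\hat{s}^*)\le(\a W\D)^2/48$ gives $\beta^*\le\a W$ (in fact strictly smaller), so that indeed $U=H^*$ and the two displayed inequalities both hold. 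The factor $1/2$ and the constant $9/4$ in the statement are deliberate slack; if one insists on using only $\beta^*\le2\a W$, the same conclusion can still be salvaged by taking $U$ to be a greedily chosen subset of $H^*$ of weight in $[\tfrac12\beta^*,\a W]$, but one must still use the privacy argument above to bound the weights $|w_i|$, $i\in H^*$, so that such a subset exists.
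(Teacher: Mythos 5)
Your proposal is correct, but it reaches the conclusion by a genuinely different route than the paper. The paper never tries to protect the optimal set $H^*$ achieving $\b(\hat{s}^*)$: it instead sets $x_i=0$ on $H^{\circ}:=\argmax\{w(H): w(H)\le \a W\}$, so the distortion bound is automatic, and then obtains the privacy-index bound by recycling the machinery from Theorem~\ref{thm:acc-lb} --- Lemma~\ref{lem:dist-acc} plus Lemma~\ref{lem:acc-lb} give $w(H(\hat{s}^*))\le\a W$ for the greedy knapsack set $H(\hat{s}^*)$, whence $w(H^\circ)\ge w(H(\hat{s}^*))\ge\frac12\b(\hat{s}^*)$ by Lemma~\ref{lem:approx-knapsack}; the factor $\frac12$ is exactly the greedy $2$-approximation loss. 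You instead protect $H^*$ itself and correctly identify that the published bound $\b(\hat{s}^*)\le 2\a W$ is insufficient, so you re-run the differential-privacy argument with sharper constants: choosing the two databases at the endpoints of $\rI$ lets you move $s$ by the full $w(H^*)\D$ (rather than the $\D\,w(H)/4$ the paper settles for, since Lemma~\ref{lem:acc-lb} starts from an arbitrary $\bd$ and perturbs only $H^+$ by $\D/2$), and the Markov/disjoint-intervals computation indeed yields $\d(s,\hat{s}^*)\ge (\b(\hat{s}^*)\D)^2/\bigl(4(1+\sqrt{e})\bigr)$, which with $4(1+\sqrt{e})<48$ forces $\b(\hat{s}^*)\le\a W$. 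I checked this step and it is sound (the composition over $H^*$ uses $\sum_{i\in H^*}\e_i^*<\frac12$ exactly as the paper does inside Lemma~\ref{lem:acc-lb}), though it is the one place you would need to write out the details rather than gesture at them. What your route buys is a strictly stronger conclusion, $\b(\hat{s}^{\circ})\ge\b(\hat{s}^*)$ with no factor $\frac12$, showing that the loss in the theorem statement is an artifact of the paper's reuse of the greedy-knapsack lemma; what the paper's route buys is brevity, since all three lemmas are already in place from Theorem~\ref{thm:acc-lb} and no new privacy argument is needed. Your fallback remark (greedily extracting a subset of $H^*$ of weight in $[\frac12\b^*,\a W]$ if one only knows $\b^*\le 2\a W$) is unnecessary given your main estimate, and as you note it would anyway require bounding the individual weights $|w_i|$, $i\in H^*$, so it should be dropped rather than relied upon.
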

In other words, there exists a DCLEF that is within a constant factor, in terms of both its distortion and its privacy index, from an optimal estimator $\hat{s}^*$.
Theorem \ref{thm:acc-ub} is proved in Appendix~\ref{app:proofofacc-ub} and has the following immediate corollary:
\begin{cor}Consider an arbitrary estimator $\hat{s}$ with distortion $\delta(s,\hat{s})<(W \D)^2/48$. Then, there exists a DCLEF $\hat{s}^\circ$ such that
$\delta(s,\hat{s}^{\circ}) \le 108 \delta(s,\hat{s})$ and
$\b(\hat{s}^{\circ}) \ge \frac12 \b(\hat{s}).$
\end{cor}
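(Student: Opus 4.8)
The plan is to obtain this as a direct reparametrization of Theorem~\ref{thm:acc-ub}: choose the free parameter $\a$ in that theorem so that the theorem's distortion budget coincides with the distortion of the given estimator. Concretely, given an arbitrary $\hat{s}$ with $\delta(s,\hat{s}) < (W\D)^2/48$, set
\[
\a := \frac{\sqrt{48\,\delta(s,\hat{s})}}{W\D},
\]
so that $\delta(s,\hat{s}) = (\a W \D)^2/48$. First I would check that $0 < \a < 1$: the upper bound is precisely the hypothesis $\delta(s,\hat{s}) < (W\D)^2/48$, and positivity holds as long as $\delta(s,\hat{s}) > 0$ (the boundary case $\delta(s,\hat{s}) = 0$ is treated separately below). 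Note $W > 0$ because $w_i \ne 0$ for all $i$ and $n > 0$, and $\D > 0$ since $\rI$ is a nondegenerate interval, so $\a$ is well defined.

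Next I would invoke Theorem~\ref{thm:acc-ub} with this value of $\a$. It yields a DCLEF $\hat{s}^{\circ}$ with $\delta(s,\hat{s}^{\circ}) \le (9/4)(\a W \D)^2$ and $\b(\hat{s}^{\circ}) \ge \tfrac12\,\b(\hat{s}^*)$, where $\hat{s}^*$ has the largest privacy index among estimators of distortion at most $(\a W \D)^2/48$. Substituting $(\a W\D)^2 = 48\,\delta(s,\hat{s})$ into the distortion bound gives $\delta(s,\hat{s}^{\circ}) \le (9/4)\cdot 48 \cdot \delta(s,\hat{s}) = 108\,\delta(s,\hat{s})$, the first claimed inequality. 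For the second, observe that by our choice of $\a$ the given estimator $\hat{s}$ itself satisfies $\delta(s,\hat{s}) \le (\a W \D)^2/48$, so it lies in the feasible set over which $\hat{s}^*$ is a maximizer; hence $\b(\hat{s}^*) \ge \b(\hat{s})$, and chaining the two inequalities gives $\b(\hat{s}^{\circ}) \ge \tfrac12\,\b(\hat{s}^*) \ge \tfrac12\,\b(\hat{s})$.

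It remains to handle the degenerate case $\delta(s,\hat{s}) = 0$, which Theorem~\ref{thm:acc-ub} does not literally cover since it assumes $\a > 0$. Here I would simply take $\hat{s}^{\circ}$ to be the DCLEF with all $x_i = 1$ (equivalently $\s(\bx) = 0$), which equals $s$ exactly, so $\delta(s,\hat{s}^{\circ}) = 0 = 108\,\delta(s,\hat{s})$; and since $\delta(s,\hat{s}) = 0 \le (\a' W\D)^2/48$ for every $\a' \in (0,1)$, Theorem~\ref{thm:acc-lb} forces $\b(\hat{s}) \le 2\a' W$ for all such $\a'$, hence $\b(\hat{s}) = 0$ and $\b(\hat{s}^{\circ}) \ge \tfrac12\,\b(\hat{s})$ holds trivially. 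There is no real obstacle in this corollary — all the mathematical content sits in Theorem~\ref{thm:acc-ub}; the only points needing care are confirming $0 < \a < 1$ so the theorem applies, recognizing that $\hat{s}$ is feasible for the optimization defining $\hat{s}^*$, and disposing of the zero-distortion boundary case.
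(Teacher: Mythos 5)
Your proof is correct and follows essentially the same route as the paper: choose $\a = \sqrt{48\,\delta(s,\hat{s})}/(W\D)$ so that the distortion budget in Theorem~\ref{thm:acc-ub} matches $\delta(s,\hat{s})$, note that $\hat{s}$ is then feasible for the maximization defining $\hat{s}^*$ so $\b(\hat{s}^*)\ge\b(\hat{s})$, and chain the inequalities. Your additional checks that $0<\a<1$ and your separate treatment of the degenerate case $\delta(s,\hat{s})=0$ are sound refinements that the paper omits.
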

\begin{proof}
Apply Theorem \eqref{thm:acc-ub} with $\alpha = \sqrt{48\d(s,\hat{s}})/(W \D)$. In particular, for this $\alpha$ and $\hat{s}$ as in the theorem statement, we have that
$\hat{s}^* := \argmax_{\hat{s}': \d(s, \hat{s}')\le \d(s, \hat{s})} \b(\hat{s}')$, hence $\beta(\hat{s}^*)\geq \beta(\hat{s})$. Therefore, there exists a DCLEF $\hat{s}^\circ$ such that $\delta(s,\hat{s}^{\circ}) \le (9/4) (\a W  {\D})^2 \le 108 \delta(s,\hat{s})$, and $\b(\hat{s}^{\circ}) \ge \frac12 \b(\hat{s}^*) \ge \frac12 \beta(\hat{s})$.
\end{proof}
%
%Theorem~\ref{thm:acc-ub} shows that there exists a DCLEF that is at most a small constant factor away in terms of both distortion and privacy index compared to the estimator with the highest privacy index among all ``low-distortion" estimators.
%
Theorems~\ref{thm:acc-lb} and~\ref{thm:acc-ub} imply that, when searching for estimators with low distortion and high privacy index, it suffices (up to constant factors) to focus on DCLEFs.  %serve the same purpose as
Similar results were derived in \cite{ghosh-roth:privacy-auction} for estimators of unweighted sums of bits.%; namely,  they simplify the space of estimator functions we need to focus on.
%Specifically, they %We prove Theorem~\ref{thm:acc-lb} and Theorem~\ref{thm:acc-ub} in Section~\ref{sec:estimators}.

%There are additional reasons why DCLEFS are interesting to work with.
%First, observe that for a DCLEF $\hat{s}(\ \cdot\ ;\bx)$, the vector $\bx$ partitions $[n]$ into two sets---those who have been ``selected" (i.e. those for whom $x_i = 1$) and those who weren't.
%Let $H := \{i : x_i = 1\}$, i.e. $H$ is the set of individuals in $[n]$ ``selected" by $\hat{s}$.
%Second, for each $i\in H$,  the privacy guarantee $\e_i$ is proportional to $w_i$, i.e. $\e_i = w_i/\s(\bx)$ (follows from Corollary~\ref{cor:canon-diff-priv}), whereas for all $i\notin H,\ \e_i = 0$, i.e. the remaining individuals get perfect privacy. This has the intuitive appeal that individuals with greater weight (and therefore, a greater influence on the estimate) incur a greater loss of privacy and consequently receive larger payments.
%These reasons make DCLEFs easy to understand and explain to users whose privacy is being bought.

\section{Privacy Auction Mechanism}\label{sec:mechanism}
Motivated by Theorems~\ref{thm:acc-lb} and~\ref{thm:acc-ub}, we design a truthful, individually rational, budget-feasible DCLEF mechanism (\ie, a mechanism that outputs a DCLEF) and show that it is 5-approximate in terms of accuracy compared with the optimal, individually rational, budget-feasible DCLEF mechanism.
Note that a DCLEF is fully determined by the vector $\bx\in \{0,1\}^n$.
Therefore, we will simply refer to the output of the DCLEF mechanisms described below as $(\bx, \bp)$, as the latter characterize the released estimator  and the compensations to individuals.

\subsection{An Optimal DCLEF Mechanism}

Consider the problem of designing a DCLEF mechanism $M$ that is individually rational  and budget feasible (but not necessarily truthful), and minimizes $\d_M$.
Given a DCLEF $\hat{s}$,  define  $H(\hat{s}) := \{i : x_i = 1\}$ to be the set of individuals that receive non-zero differential privacy guarantees.
Eq.~\eqref{eq:clef-dist-expr} implies that  $\d(s,\hat{s}) = \frac94 \D^2( W - w(H(\hat{s})) )^2$.
Thus, minimizing $\d(s,\hat{s})$ is equivalent to maximizing $w(H(\hat{s}))$.
Let $(\bx_{opt},\bp_{opt})$ be an optimal solution to the following problem:
\begin{align}
\label{eq:budget-constr-prob}
\begin{split}
\text{maximize } \quad& S(\bx; \bw) = \sum_{i=1}^n |w_i| x_i  \\
\text{subject to:} \quad&  p_i \ge  v_i\e_i(\bx),\quad \forall i\in [n],\quad \text{ (individual rationality)}\\
& \sum_{i=1}^n p_i \le B\qquad\qquad\quad \text{ (budget feasibility)}\\
 & x_i \in \{0,1\},~ \forall i\in[n]\quad\text{ (discrete estimator function)}
\end{split}
\end{align}
where, by Lemma~\ref{lem:privdistlap} and \eqref{eq:canon-sigma}, \begin{align}\label{eq:ei}\e_i(\bx) = \frac{\D|w_i| x_i}{\s(\bx)} = \frac{|w_i| x_i }{\sum_i |w_i|(1-x_i)} \text{ (canonical property)}.\end{align}
A mechanism $M_{opt}$ that outputs $(\bx_{opt},\bp_{opt})$ will be an optimal, individually rational, budget feasible (but not necessarily truthful) DCLEF mechanism.
Let $OPT := S(\bx_{opt}; \bw)$ be the optimal objective value of \eqref{eq:budget-constr-prob}.
We use $OPT$ as the benchmark to which we compare the (truthful) mechanism we design below. Without loss of generality, we make the following assumption: % about the inputs to the mechanism.
\begin{assumption}\label{asmp:small-weights}
For all $i\in [n],\ |w_i|v_i/(W - |w_i|) \le B$.
\end{assumption}
\noindent Observe that if an individual $i$ violates this assumption, then $c_i(\e_i(\bx)) > B$ for any $\bx$ output by a  DCLEF mechanism that sets $x_i = 1$. In other words, no DCLEF mechanism (including $M_{opt}$) can compensate this individual within the analyst's budget and, hence,  will set $x_i = 0$. Therefore, it suffices to focus on the subset of individuals for whom the assumption holds.

\subsection{A Truthful DCLEF Mechanism}
To highlight the challenge behind designing a truthful DCLEF mechanism, observe that if the privacy guarantees were given by $\e_i(\bx) = x_i$ rather than \eqref{eq:ei},  the optimization problem \eqref{eq:budget-constr-prob} would be identical to the budget-constrained mechanism design problem for knapsack studied by Singer  \cite{budget-feasible-mechanisms}. In the reverse-auction setting of \cite{budget-feasible-mechanisms}, an auctioneer purchases items valued at fixed costs $v_i$ by the individuals that sell them.
Each item $i$ is worth $|w_i|$ to the auctioneer,  while the auctioneer's  budget is $B$.
The goal of the auctioneer is to maximize the total worth of the purchased set of items, \ie, $S(\bx;\bw)$. Singer presents a truthful mechanism that is 6-approximate with respect to $OPT$.
 However, in our setting, the privacy guarantees $\e_i(\bx)$ given by \eqref{eq:ei} introduce \emph{externalities} into the auction. In contrast to \cite{budget-feasible-mechanisms}, the $\e_i$'s couple the cost incurred by an individual $i$ to the weight of other individuals that are compensated  by the auction, making the mechanism design problem harder.
This difficulty is overcome by our mechanism, which we call FairInnerProduct, described in Algorithm~\ref{mech:FairInnerProduct}.

\begin{algorithm}[t]
\begin{algorithmic}
\STATE Let $k$ be the largest integer such that $\frac{B}{w([k])}\ge \frac{v_k}{W - w([k])}$.
\STATE Let $i^* := \argmax_{i\in[n]} |w_i|$.
\STATE Let $\hat{p}$ be as defined in \eqref{def:hatp}.
\IF{$|w_{i^*}| > \sum_{i\in[k]\setminus\{i^*\}} |w_i|$}
	\STATE Set $O = \{i^*\}$.
	\STATE Set $p_{i^*} = \hat{p}$ and $p_i = 0$ for all $i\ne i^*$.
\ELSE
	\STATE Set $O = [k]$.
	\STATE Pay each $i \in O,\ p_i = |w_i|\min\{\frac{B}{w([k])}, \frac{v_{k+1}}{W - w([k])}\}$, and for $i \notin O,\ p_i = 0$.
\ENDIF
\STATE Set $x_i = 1$ if $i\in O$ and $x_i = 0$ otherwise.
\end{algorithmic}
\caption{FairInnerProduct($\bv,\bw, B$)}
\label{mech:FairInnerProduct}
\end{algorithm}

The mechanism takes as input the budget $B$, the weight vector $\bw$, and the vector of unit-costs $\bv$, and outputs a set $O\subset [n]$, that receive $x_i=1$ in the DCLEF, as well as a set of payments for each individual in $O$. Our construction uses  a greedy approach similar to the Knapsack mechanism in \cite{budget-feasible-mechanisms}. In particular, it identifies users that are the ``cheapest'' to purchase. To ensure truthfulness, it compensates them within budget based on the unit-cost of the last individual that was not included in the set of compensated users. As in greedy solutions to \knapsack{}, this construction does not necessarily yield a constant approximation w.r.t.~OPT; for that, the mechanism needs to sometimes compensate only the user with the highest absolute weight $|w_i|$. In such cases, the payment of the user of the highest weight is selected so that she has no incentive to lie about here true unit cost.

Recall that $v_1 \le \dotsc \le v_n$. The mechanism defines  $i^* := \argmax_{i\in[n]} |w_i|$ as the individual with the largest $|w_i|$, and $k$ as the largest integer such that $\frac{B}{w([k])}\ge \frac{v_k}{W - w([k])}$. Subsequently, the mechanism either sets $x_i = 1$ for the first $k$ individuals, or, if $|w_{i^*}| > \sum_{i\in[k]\setminus\{i^*\}} |w_i|$, sets $x_{i^*} = 1$. In the former case, individuals  $i\in [k]$ are compensated \emph{in proportion to their absolute weights} $|w_i|$.  If, on the other hand, only $x_{i^*}=1$, the individual $i^*$ receives a payment $\hat{p}$ defined as follows:   Let %$S_{-i^*}$ be
\begin{align*}
S_{-i^*} :=& \Big\{t\!\in\! [n]\!\setminus\!\{i^*\} : \frac{B}{\sum_{i\in[t]\setminus\{i^*\}}\!\!\! |w_i|} \ge \frac{v_t}{W - \sum_{i\in[t]\setminus\{i^*\}}\!\! |w_i|} \text{ and }\!\!\!\!\!\!\!\!\! \sum_{i\in [t]\setminus\{i^*\}}\!\!\!\!\!\! |w_i| \ge |w_{i^*}| \Big\}.
\end{align*}
If $S_{-i^*} \ne \emptyset$, then let $r := \min\{i : i\in S_{-i^*}\}$. Define
\beq\label{def:hatp}
\hat{p} := \left\{
\begin{array}{rl}
B,& \text{ if } S_{-i^*} = \emptyset\\
\frac{|w_{i^*}|v_r}{W - |w_{i^*}|},& \text{ otherwise}
\end{array}
\right.
\eeq
The next theorem states that FairInnerProduct  has the properties we desire.
\begin{thm}\label{thm:budget-constr}
FairInnerProduct is truthful, individually rational and budget feasible. It is 5-approximate with respect to $OPT$. Further, it is 2-approximate when all weights are equal.
\end{thm}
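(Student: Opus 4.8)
I will verify the four assertions in turn. Budget feasibility and individual rationality are direct computations; the two approximation bounds follow from a combinatorial comparison with $OPT$; truthfulness follows from the monotonicity‑plus‑threshold‑payment characterization of single‑parameter mechanisms, and this is where the real work lies.

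\emph{Budget feasibility and individual rationality.} In the branch $O=[k]$ the total payment is $\sum_{i\in[k]}|w_i|\min\{B/w([k]),\,v_{k+1}/(W-w([k]))\}\le w([k])\cdot B/w([k])=B$; and for $i\le k$ the per‑unit price $\min\{B/w([k]),v_{k+1}/(W-w([k]))\}$ is at least $v_i/(W-w([k]))$, since the first term dominates it by the inequality defining $k$ together with $v_i\le v_k$, and the second because $v_i\le v_k\le v_{k+1}$; multiplying by $|w_i|$ gives $p_i\ge v_i\e_i(\bx)$. In the branch $O=\{i^*\}$: if $S_{-i^*}=\emptyset$ then $\hat p=B$ and individual rationality is exactly Assumption~\ref{asmp:small-weights}; otherwise one first checks that $r:=\min S_{-i^*}$ satisfies $r>i^*$ — any $t<i^*$ has $[t]\setminus\{i^*\}=[t]\subseteq[k]\setminus\{i^*\}$, so $w([t])\le\sum_{i\in[k]\setminus\{i^*\}}|w_i|<|w_{i^*}|$, contradicting the membership requirement $w([t])\ge|w_{i^*}|$ — hence $v_r\ge v_{i^*}$ and $\hat p=|w_{i^*}|v_r/(W-|w_{i^*}|)\ge v_{i^*}\e_{i^*}(\bx)$, while $\hat p\le B$ follows by chaining the two membership conditions for $r$ (writing $A=\sum_{i\in[r]\setminus\{i^*\}}|w_i|$, one has $\tfrac{v_r}{W-A}\le\tfrac{B}{A}$ and $A\ge|w_{i^*}|$).

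\emph{Approximation.} Substituting the minimal feasible payments $p_i=v_i\e_i(\bx)=v_i|w_i|/(W-w(H))$ shows that a set $H$ supports some individually rational, budget‑feasible DCLEF mechanism iff $\sum_{i\in H}v_i|w_i|\le B(W-w(H))$, so $OPT=\max\{w(H):\sum_{i\in H}v_i|w_i|\le B(W-w(H))\}$. Let $H^*$ attain this and write $H^*=(H^*\cap[k])\uplus H'$ with $H'=H^*\setminus[k]$. Since $v_i\ge v_{k+1}$ for $i\in H'$ and $w(H^*)\ge w(H')$, feasibility of $H^*$ gives $(v_{k+1}+B)w(H')\le BW$, whereas maximality of $k$ makes $[k+1]$ infeasible, i.e.\ $(v_{k+1}+B)w([k+1])>BW$; hence $w(H')<w([k+1])=w([k])+|w_{k+1}|\le w([k])+|w_{i^*}|$ and $OPT<2w([k])+|w_{i^*}|$. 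If the mechanism returns $O=[k]$ the branch test gives $|w_{i^*}|\le\sum_{i\in[k]\setminus\{i^*\}}|w_i|\le w([k])$, so $OPT<3w([k])=3\,S(\bx;\bw)$; if it returns $O=\{i^*\}$ the branch test gives $w([k])<2|w_{i^*}|$, so $OPT<5|w_{i^*}|=5\,S(\bx;\bw)$; either way $S(\bx;\bw)>OPT/5$. When all $|w_i|$ equal $\omega$, the returned objective is $k\omega$ in both branches and $OPT/\omega=\max\{m:\sum_{i\le m}v_i\le B(n-m)\}=:m^*$; here one sharpens the estimate: if $m^*\ge k+1$, bounding the tail $\sum_{i=k+1}^{m^*}v_i\ge(m^*-k)v_{k+1}$ against the feasibility inequality for $[m^*]$ and combining with the failure of the $k$‑inequality at $k+1$ yields, after arithmetic, $m^*<2k+1$; in either case $m^*\le 2k$, so $OPT\le 2\,S(\bx;\bw)$.

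\emph{Truthfulness (the main obstacle).} Each individual is a single‑parameter agent with private unit‑cost, nonnegative ``allocation'' $\e_i(\bx)$, payment $p_i$, and quasi‑linear utility $p_i-v_i\e_i(\bx)$, so it suffices to show that, with $v_{-i}$ fixed, (a) $b\mapsto\e_i$ is a nonincreasing step function of $i$'s report $b$ — a fixed value $\e_i^{\circ}$ for $b$ below a critical threshold $\theta_i(v_{-i})$ and $0$ above it — and (b) the mechanism's payment coincides with the induced Myerson payment, which for a step allocation means $p_i=\e_i^{\circ}\theta_i$ when selected and $0$ otherwise. The supporting facts are: the identity of $i^*$ is report‑independent; $\hat p$ depends only on $v_{-i^*}$ and $\bw$; the selected set $O$ (hence $\e_i=|w_i|/(W-w(O))$) is unchanged as $i$ perturbs her report while remaining in $O$; and the thresholds implicit in the definitions of $k$ and of $S_{-i^*}$ and $r$ are exactly the critical unit‑costs. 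The delicate case is $i=i^*$, whose report can flip the outcome between the branch $O=[k]$ (proportional payment) and the branch $O=\{i^*\}$ (payment $\hat p$): one must verify that $\hat p$ — with $S_{-i^*}$ serving as the ``$i^*$‑deleted'' analogue of $[k]$ — is precisely the critical‑value payment that glues the two branches into a single step allocation with a constant on‑value. A secondary case is an agent $i\ne i^*$ shut out under truthful reporting ($O=\{i^*\}$) who could be pulled into some $[k']$ by under‑reporting; there one bounds the proportional payment she would then receive and shows it does not exceed her true cost $v_i\e_i$, so the deviation is unprofitable. Collecting monotonicity and the critical‑payment identity over all deviation types gives truthfulness.
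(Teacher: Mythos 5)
Your budget-feasibility, individual-rationality, and approximation arguments are sound; in fact your approximation argument takes a genuinely different and arguably cleaner route than the paper. Where the paper passes through the fractional relaxation of \eqref{eq:budget-constr-prob} and characterizes its optimum via KKT conditions (Lemma~\ref{lem:opt-canon-laplace}, Propositions~\ref{lem:l-k} and~\ref{prop:wk+1}), you observe directly that $OPT=\max\{w(H):\sum_{i\in H}v_i|w_i|\le B(W-w(H))\}$ and obtain the same key bound $OPT<2w([k])+|w_{i^*}|$ by splitting the optimal set at $[k]$ and playing the feasibility of $H^*$ against the infeasibility of $[k+1]$; the branch analysis and the uniform-weight sharpening then match the paper's.

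The truthfulness argument, however, has a genuine gap, and it is exactly the point the paper flags: because $\e_i(\bx)=|w_i|/(W-w(O))$ couples $i$'s privacy loss to the weight of \emph{everyone else} in $O$, Myerson's single-parameter characterization does not apply in the form you invoke. Your supporting claim that ``the selected set $O$ (hence $\e_i$) is unchanged as $i$ perturbs her report while remaining in $O$'' is false: the paper's own case analysis (Case I(b) of the truthfulness proof, the subcase $i=k$, $z<v_i$) shows that when the marginal selected individual under-reports, some $j<k$ can be pushed out of the new selected set $M_\pi$, so $w(M_\pi)<w([k])$ and $i$'s allocation changes to $|w_i|/(W-w(M_\pi))$ while she remains selected. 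Thus $b\mapsto\e_i(b)$ is not the two-level step function (a constant $\e_i^\circ$ below a threshold, $0$ above) on which both your monotonicity claim (a) and your threshold-payment identity (b) rest; with a multi-level allocation the Myerson payment would be an integral over levels, which the mechanism's payments do not obviously equal. The paper circumvents this by a direct utility comparison in each case (e.g., bounding the deviator's payment by $|w_i|v_j/(W-w(M_\pi))$ for a dropped $j$ with $v_j\le v_i$, so the deviation utility is nonpositive), and by a separate direct argument for the $i^*$ branch using the report-independence of $S_{-i^*}$ and $\hat p$. Your outline names the delicate cases ($i=i^*$ flipping branches; an excluded $i$ buying its way into $[k']$) but does not carry them out, and the framework you propose to carry them out in does not hold, so the truthfulness claim is not established as written.
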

The theorem is proved in Appendix~\ref{app:mechanism}.
We note that the truthfulness of the knapsack mechanism in \cite{budget-feasible-mechanisms} is established via Myerson's characterization of truthful single-parameter auctions (\emph{i.e.}, by showing that the allocation is monotone and the payments are threshold).
In contrast, because of the coupling of costs induced by the Laplace noise in DCLEFs, we are unable to use Myerson's characterization and,
instead, give a direct argument about truthfulness.

%Here is an informal description of the truthfulness argument: say an individual $i$ misreports his unit-cost $v_i$.  First, assume $i\ne i^*$.  If as a result of the misreport, the mechanism selects $\{i^*\}$, then clearly $i$ did not benefit from misreporting.  So assume that the mechanism selects a set $T$, and $i\in T$.  Then, we show that if $i < k$, then $T = [k]$, and $i$'s payment (and cost) remains unchanged because the selected set is unchanged.  Also, if $i \ge k$, then $T\setminus \{i\}\subseteq [k]$. Therefore, $i$'s payment is no more than his cost.  Next assume $i = i^*$.  If the mechanism selects $\{i\}$ in both cases (\ie, whether $i$ reports truthfully or not), then his payment remains the same.  So first assume that the mechanism selects $[k]$ when $i$ reports truthfully but selects $\{i\}$ when he misreports.  Then we show that there exists another individual $j$ such that $j\in S_{-i^*}$ and $v_j \le v_i$. So $i$'s payment $p_i \le |w_i|v_j/(W - w_i)\le c_i(\e_i)$.  Next assume that the mechanism selects $\{i\}$ when $i$ reports truthfully but selects a set $T$ when he misreports, and $i\in T$.  Then we show that $i\notin [k]$, and therefore $v_i \ge v_{k+1}$. Moreover, we show that $T\setminus\{i\}\subseteq [k]$ and therefore $p_i \le |w_i|v_{k+1}/(W - w(T))\le c_i(\e_i)$.

We prove a 5-approximation by using the optimal solution of the fractional relaxation of \eqref{eq:budget-constr-prob}.
This technique can also be used to show that the knapsack mechanism in \cite{budget-feasible-mechanisms} is 5-approximate instead of 6-approximate.
FairInnerProduct  generalizes the Ghosh-Roth mechanism; in the special case when all weights are equal FairInnerProduct reduces to the Ghosh-Roth mechanism, which, by Theorem~\ref{thm:budget-constr}, is 2-approximate with respect to $OPT$.
In fact, our next theorem, proved in Appendix~\ref{sec:hardness}, states that the approximation ratio of a truthful mechanism is lower-bounded by~2.
\begin{thm}[Hardness of Approximation]\label{thm:hardness}
For all  $\varepsilon>0$, there is no truthful, individually rational, budget feasible  DCLEF mechanism that is also $2-\varepsilon$-approximate with respect to $OPT$.
\end{thm}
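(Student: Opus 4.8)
The plan is to exhibit a tiny family of hard instances --- three individuals with equal weights --- on which truthfulness, individual rationality, and budget feasibility are jointly incompatible with an approximation ratio strictly below $2$. Fix a budget $B>0$, take $n=3$ with $w_1=w_2=w_3=1$ (so $W=3$), and set the unit-cost of individual $3$ to $v_3=2B$; this exactly meets Assumption~\ref{asmp:small-weights}, since $|w_3|v_3/(W-|w_3|)=B$. For this weight vector a DCLEF with accepted set $O=\{i:x_i=1\}$ has $\e_i(\bx)=|w_i|/(W-w(O))$ for $i\in O$, so individual rationality and budget feasibility together force $\sum_{i\in O} v_i|w_i|/(W-w(O))\le B$. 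Hence $O=[3]$ is infeasible (the denominator is $0$), and any two-element $O$ containing individual $3$ needs total payment at least $v_3=2B>B$, also infeasible. Therefore, whenever $v_1+v_2\le B$, the unique feasible two-element set is $\{1,2\}$ --- with objective value $2$ --- while every feasible set of size at most $1$ has value at most $1$; thus $OPT=2$, attained only at $O=\{1,2\}$.

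Next, any $(2-\ve)$-approximate mechanism $M$ must output $O=\{1,2\}$ on every such instance: its objective must exceed $OPT/(2-\ve)>1$, and since the objective equals $|O|\in\{0,1,2\}$ (it can never output the infeasible $[3]$), it must equal $2$, which forces $O=\{1,2\}$, the only affordable two-set. Consequently, along the one-parameter family of profiles $(v_1,\tau,2B)$ with $\tau\in(0,B/2)$ fixed and $v_1$ ranging over $(0,B-\tau]$, the mechanism always accepts $\{1,2\}$, so $\e_1=1/(W-2)=1$ is independent of individual $1$'s report. Comparing, for a true cost $t$ in this range, the utility $p_1(v_1)-t\,\e_1=p_1(v_1)-t$ of reporting $v_1$ against that of reporting the truth, truthfulness yields that $p_1(\cdot)$ is constant on $(0,B-\tau]$; individual rationality $p_1\ge v_1$ then gives, in particular, $p_1\ge B-\tau$. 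The symmetric argument applied to individual $2$ (varying $v_2$ over $(0,B-\tau]$ with $v_1=\tau$ fixed) gives $p_2\ge B-\tau$. Evaluating both conclusions at the honest profile $(\tau,\tau,2B)$ yields $p_1+p_2\ge 2(B-\tau)>B$ whenever $\tau<B/2$, contradicting budget feasibility. Hence no truthful, individually rational, budget-feasible DCLEF mechanism is $(2-\ve)$-approximate, for any $\ve>0$. If randomized allocation and payment rules are allowed, the same argument runs with expected payments and the in-expectation form of truthfulness: since $OPT=2$ and no feasible set has value exceeding $2$, such a mechanism still accepts $\{1,2\}$ with probability $1$, and the rest is unchanged.

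The main obstacle is the same one that blocks a direct appeal to Myerson's characterization: the externality means that $\e_i$ --- and hence individual $i$'s cost --- a priori depends on which other individuals are served, so there is no fixed ``item'' whose threshold price one can reason about. The construction is designed precisely to neutralize this: by choosing $v_3=2B$ and using individual $3$ only to inflate $W$ so that $\{1,2\}$ (and nothing larger, and no other pair) is affordable, the accepted set is pinned to $\{1,2\}$ across an entire interval of reports, which freezes $\e_1$ and $\e_2$ at the constant $1$ and lets the standard ``constant payment on the accepted interval'' argument go through. Verifying that these parameter choices simultaneously force $OPT=2$, uniqueness of the optimal set, and the required range of reports is the delicate bookkeeping step, but it is routine once the instance is set up as above.
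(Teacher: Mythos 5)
Your proof is correct and takes essentially the same route as the paper's: exhibit a uniform-weight hard instance on which any $(2-\ve)$-approximate, individually rational, budget-feasible mechanism is forced to accept one particular two-element set, so that truthfulness makes each accepted individual's payment constant over an interval of reports and individual rationality then pushes the total payment above $B$. The paper's instance is slightly different (four players with $v_1=a$, $v_2=v_3=v_4=2$, $B=1+a/2$, deriving the contradiction from the forced payment to the cheap player plus the cost of one expensive co-accepted player), whereas your three-player instance extracts $B-\tau$ from each of players $1$ and $2$; both work, and yours has the minor advantage that the budget is fixed independently of the report being varied.
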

Our benchmark $OPT$ is stricter than that used in \cite{ghosh-roth:privacy-auction}. In particular, Ghosh and Roth show that their mechanism is optimal among all truthful, individually rational, budget-feasible, and \emph{envy-free} mechanisms.
In fact, the example we use to show hardness of approximation is a uniform weight example, implying that the lower-bound also holds for uniform weight case.
Indeed, the mechanism in \cite{ghosh-roth:privacy-auction} is 2-approximate with respect to $OPT$, although it is optimal among individually rational, budget feasible mechanisms that are also truthful \emph{and} envy free.

\section{Discussion on Linear Predictors}
\label{sec:prediction}
%In our setup, we study the scenario where an analyst wishes to release a statistic $s(\bd)$ of the form \eqref{def:s}. In the case where the publicly known vector $\bw$ is non-negative, $s(\bd)$ has a clear interpretation as a weighted average over the data. Nevertheless,
As discussed in the introduction, a statistic $s(\bd)$ of the form \eqref{def:s} can be viewed as a \emph{linear predictor} and is thus of particular interest in the context of recommender systems. We elaborate on this interpretation in this section.
 Assume that each individual $i\in [n] =\{1,\ldots,n\}$ is endowed with a public vector $\by_i\in \real^m$, which includes $m$ publicly known features about this individual. These could be, for example, demographic information such as age, gender or zip code, that the individual discloses in a public online profile.  Note that, though features $\by_i$ are public, the data $d_i$ %(\emph{e.g.}, movie rating, clickthrough rate, \emph{etc.})
is perceived as private.

Let $\bY = [\by_i]_{i\in [n]}\in \real^{n \times m}$ be a matrix comprising public feature vectors. Consider a new individual, not belonging to the database, whose public feature profile is $\by\in \real^m$. Having access to $\bY$, $\bd$, and $\by$, the data analyst wishes to release a prediction for the unknown value $d$  for this new individual. Below, we give several examples where this prediction takes the form $ s(\bd) = \langle \bw, \bd\rangle  $, for some $\bw=\bw(\by,\bY)$. All examples are textbook  inference examples; we refer the interested reader to, for example, \cite{learning} for details.

\emph{$k$-Nearest Neighbors.}
In $k$-Nearest Neighbors prediction, the feature space $\real^m$ is endowed with a distance metric (\emph{e.g.}, the $\ell_2$ norm), and the predicted value is given by an average among the $k$ nearest neighbors of the feature vector $\by$ of the new individual. \emph{I.e.},
$s(\bd)  = \frac{1}{k} \sum_{i \in \mathcal{N}_k(\by)} d_i$
where $N_k(\by)\subset [n]$ comprises the $k$  individuals whose feature vectors $y_i$ are closest to $\by$. %This is a (somewhat trivial) statistic of the form \eqref{def:s}, where $w_i=1/k$ if $i\in N_k(\by)$ and zero otherwise.

\emph{Nadaranya-Watson Weighted Average.}
 The Nadaranya-Watson weighted average %(sometimes called ``collaborative filtering'' in recommender systems literature)
leverages all data in the database, weighing more highly data closer to $\by$. The general form of the prediction is
5$s(\bd) =\textstyle {\sum_{i=1}^n K(\by,\by_i) d_i}/\sum_{i=1}^n K(\by,\by_i) $
where the \emph{kernel} $K:\real^m\times\real^m\to\real_+$ is a function decreasing in the distance between its argument (\emph{e.g.}, $K(\by,\by')=e^{-\|\by-\by'\|^2}$).

\emph{Ridge Regression.}
%A simple way to predict the private value of the new user is using linear regression. Essentially, this amounts to assuming that
%the values $\bd$ are related to the features $\bY$ through a linear model, \emph{i.e.},
In ridge regression, the analyst first fits a linear model to the data, \emph{i.e.}, solves
%$$d_i \simeq \langle\by_i,\bb\rangle, \quad i\in [n]  $$
%for some unknown vector $\bb\in\real^m$. In this case, a common approach is to estimate $\bb$  by solving
 the optimization problem
\begin{align}\textstyle\min_{\bb\in\real^m} \sum_{i=1}^n \big(d_i-\langle \by_i,\bb\rangle\big)^2+\lambda \|\bb\|_2^2,  \label{linfit}\end{align}
where $\lambda\geq 0$ is a regularization parameter, enforcing that the vector $\bb$ takes small values. The prediction is then given by the inner product $\langle \by , \bb \rangle$.  The solution to \eqref{linfit} is given by
$\bb = (\bY^T\bY+\lambda \bI)^{-1}\bY^T \bd;$
 as such, the predicted value for a new user with feature vector $\by$ is given by
$s(\bd) = \langle \by, \bb \rangle = \by^T(\bY^T\bY+\lambda \bI)^{-1}\bY^T \bd$. %=\langle \bw,\bd \rangle. $$
%where $\bw = \bw(\by,\bY)=\by^T(\bY^T\bY+\lambda I)^{-1}\bY^T$.

\emph{Support Vector Machines.}
A more general regression model assumes that the private values $d_i$ can be expressed in terms of the public vectors $\by_i$ as a linear combination of a set of basis functions $h_\ell:\real^m\to \real$, $\ell=1,\ldots,L$, \emph{i.e.}, the analyst first solves
%$$d_i \simeq  \sum_{k=1}^K b_k h_k(\by_i). $$
%for, again, some unknown vector $\bb\in \real^K$. The latter is again obtained from the data $(\bY,\bd)$ by solving
the optimization problem
\begin{align}\label{minkern}\textstyle\min_{\bb\in\real^L} \sum_{i=1}^{n} \big(d_i -  \sum_{\ell=1}^L b_\ell h_\ell(\by_i)\big)^2 +\lambda \|\bb\|_2^2\end{align}
For $\by,\by'\in \real^m$, denote by $K(\by,\by')=\sum_{\ell=1}^Lh_\ell(\by)h_\ell(\by)$ the kernel of the space spanned by the basis functions. Let $\bK(\bY)=[K(\by_i,\by_j)]_{i,j\in [n]}\in \real^{n\times n}$ be the $n\times n$ matrix comprising the kernel values evaluated at each pair of feature vectors in the database, and $\bk(\by,\bY)=[K(\by,\by_i)]_{i\in[n]}\in \real^{n}$ the kernel values w.r.t. the new user. The solution to \eqref{minkern} yields a predicted value for the new individual of the form:
$s(\bd) = (\bk(\by,\bY))^T(\bK(Y)+\lambda \bI)^{-1} \bd $. %=\langle \bw(\by,\bY), \bd\rangle  $$
%for $\bw(\by,\bY) =(\bk(\by,\bY))^T(\bK(Y)+\lambda \bI)^{-1}$.

In all four examples, the prediction $s(\bd)$ is indeed of the form \eqref{def:s}. %, where the weights depend on the publicly known feature vectors of the individuals in the database as well as the new individual.
Note that the weights are non-negative in the first two examples, but may assume negative values in the latter two.

\section{Conclusion and Future Work}
We considered the setting of an auction, where a data analyst wishes to buy, from a set of $n$ individuals, the right to use their private data $d_i\in \real,\ i\in [n]$, in order to \emph{cheaply} obtain an \emph{accurate} estimate of a statistic.
Motivated by recommender systems and, more generally, prediction problems, the statistic we consider is a linear predictor with publicly known weights. 
The statistic can be viewed as a prediction of the unknown data of a new individual based on the database entries.
We formalized the trade-off between privacy and accuracy in this setting; we showed that obtaining an accurate estimate necessitates giving poor differential privacy guarantees to individuals whose cumulative weight is large.
We showed that DCLEF estimators achieve an order-optimal trade-off between privacy and accuracy, and,  consequently, it suffices to focus on DCLEF mechanisms. 
We use this observation to design a truthful, individually rational, budget feasible  mechanism under the constraint that the analyst has a fixed budget. 
Our mechanism can be viewed as a proportional-purchase mechanism, \ie, the privacy $\e_i$ guaranteed by the mechanism to individual $i$ is proportional to her weight $|w_i|$.
We show that our mechanism is 5-approximate in terms of accuracy compared to an optimal (possibly non-truthful) mechanism, and that no truthful mechanism can achieve a $2-\ve$ approximation, for any $\ve > 0$.

Our work is the first studying privacy auctions for asymmetric statistics, and can be extended in a number of directions.
An interesting direction to investigate is %designing of privacy auctions for more general statistics. 
characterizing the most general class of statistics for which truthful privacy auctions that achieve order-optimal accuracy can be designed.
An orthogonal direction is to study the release of asymmetric statistics in other settings such as (a) using a different notion of privacy, (b) allowing costs to be correlated with the data values, and (c) survey-type settings where individuals first decide whether to participate and then reveal their private data.

\bibliographystyle{splncs}
\bibliography{recsys_privacy_bib}

\appendix

\section{Proof of Theorem~\protect\lowercase{\ref{thm:acc-lb}} (Trade-off between Privacy-index and Distortion)}\label{app:proofofacc-lb}
By Definition~\ref{def:priv-index}, the privacy index  $\b(\hat{s})$ for an estimator $\hat{s}$ is the optimal objective value of the following optimization problem: maximize $\sum_{i=1}^n |w_i| x_i$ where $\sum_{i=1}^n \e_ix_i < \frac12$ and for all $i\in [n],\ x_i \in \{0,1\}$.

Interpreting  $|w_i|$ as the value, and $\e_i$ as the size of object $i$, the above problem can be viewed as a 0/1 knapsack problem where the size of the knapsack is 1/2.
%The value 1/2 is an artifact of the definition of $k$-accuracy as given in \cite{ghosh-roth:privacy-auction} (see below).
Assume for this proof, without loss of generality, that $\frac{\e_1}{|w_1|} \le \dotsc \le \frac{\e_n}{|w_n|}$. We define some notation that is needed in the proof. 
Let $h(\hat{s}):= \max\left\{j\in [n] : \frac{\e_j}{|w_j|}  < \frac1{2w([j])} \right\}$ if $\frac{\e_1}{|w_1|} <\frac1{2 |w_1|}$ and $h(\hat{s}):= 0$ otherwise.
%Define
%\[
%h(\hat{s}) :=
%\left\{
%\begin{array}{rl}
%0,& \text{ if } \frac{\e_1}{|w_1|} \ge \frac1{2 |w_1|}\\
% \max\left\{j\in [n] : \frac{\e_j}{|w_j|}  < \frac1{2w([j])} \right\},& \text{ otherwise}
% \end{array}
% \right.
%\]
Observe that $0 \le h(\hat{s}) \le n$. Next, define 
\[
\hat{i} := \argmax_{i\in [n]: \e_i < 1/2} |w_i|,
%\]
%Finally, define
%\[
\quad\text{and}\quad
H(\hat{s}) := \left\{\begin{array}{rl}
[h(\hat{s})],& \text{ if }w([h(\hat{s})]) \ge |w_{\hat{i}}|, \\
\{\hat{i}\},& \text{ otherwise}.
\end{array}
\right.
\]
The following then holds.
\begin{lem}\label{lem:approx-knapsack}
$2w(H(\hat{s})) \ge \b(\hat{s})$.
\end{lem}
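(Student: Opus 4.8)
The plan is to read the statement as the textbook greedy $2$‑approximation for the $0/1$ knapsack instance implicit in $\b(\hat{s})$: with the re‑indexing already fixed so that $\e_1/|w_1|\le\dots\le\e_n/|w_n|$, the quantity $\b(\hat{s})$ is the optimum of ``maximize $\sum_i|w_i|x_i$ subject to $\sum_i\e_ix_i<1/2$, $x_i\in\{0,1\}$'', where objects have value $|w_i|$ and size $\e_i$ and the capacity is $1/2$, and $H(\hat{s})$ is exactly the greedy output --- the best‑density prefix $[h(\hat{s})]$, or the single heaviest feasible object $\{\hat{i}\}$, whichever is heavier. I would first note that both candidates are feasible sets: $\{\hat{i}\}$ because $\e_{\hat{i}}<1/2$; and $[h(\hat{s})]$ because the density ordering gives $\sum_{i\le j}\e_i=\sum_{i\le j}|w_i|\,(\e_i/|w_i|)\le(\e_j/|w_j|)\,w([j])$, which at $j=h(\hat{s})$ is $<1/2$ by the definition of $h(\hat{s})$. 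Hence $w(H(\hat{s}))=\max\{w([h(\hat{s})]),|w_{\hat{i}}|\}$, and it suffices to show $\b(\hat{s})\le w([h(\hat{s})])+|w_{\hat{i}}|$, since then $\b(\hat{s})\le 2\max\{w([h(\hat{s})]),|w_{\hat{i}}|\}=2w(H(\hat{s}))$.

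To bound $\b(\hat{s})$ I would pass to the LP relaxation of the knapsack and, for any feasible fractional $x$, write $\sum_i|w_i|x_i=\sum_{i\le h(\hat{s})}|w_i|x_i+\sum_{i>h(\hat{s})}|w_i|x_i\le w([h(\hat{s})])+\sum_{i>h(\hat{s})}|w_i|x_i$. The crux is the tail $\sum_{i>h(\hat{s})}|w_i|x_i$, and here the density ordering combined with the failure of the defining inequality at index $h(\hat{s})+1$ does the work: for every $j>h(\hat{s})$ one has $\e_j/|w_j|\ge\e_{h(\hat{s})+1}/|w_{h(\hat{s})+1}|\ge 1/(2w([h(\hat{s})+1]))$, so $\e_j\ge|w_j|/(2w([h(\hat{s})+1]))$; summing against $x$ and using $\sum_i\e_ix_i\le 1/2$ gives $\sum_{i>h(\hat{s})}|w_i|x_i\le w([h(\hat{s})+1])=w([h(\hat{s})])+|w_{h(\hat{s})+1}|$. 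This already yields $\b(\hat{s})\le 2w([h(\hat{s})])+|w_{h(\hat{s})+1}|$; to sharpen to the claimed $w([h(\hat{s})])+|w_{\hat{i}}|$ I would observe that the single ``overflow'' weight $|w_{h(\hat{s})+1}|$ is either already inside the prefix or, since that object is individually feasible, at most $|w_{\hat{i}}|$, and split on the two cases $w([h(\hat{s})])\ge|w_{\hat{i}}|$ and $w([h(\hat{s})])<|w_{\hat{i}}|$ from the definition of $H(\hat{s})$. The boundary regime $h(\hat{s})=n$ is immediate (all objects fit, so $\b(\hat{s})=W=w([n])$), and as a harmless preliminary one may discard every individual with $\e_i\ge1/2$ --- such an individual lies in no feasible set, so $\b(\hat{s})$ is unchanged --- which guarantees $\e_1<1/2$ and hence $h(\hat{s})\ge1$, so $[h(\hat{s})]$ is nonempty.

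The step I expect to be the main obstacle is this sharpening. Because the threshold in the definition of $h(\hat{s})$ is normalized by the cumulative weight $w([j])$ --- precisely the externality that the Laplace noise introduces, in contrast to the usual rule ``stop when the running sum of sizes exceeds the capacity'' --- the prefix $[h(\hat{s})]$ can be strictly shorter than the ordinary greedy prefix, and a single heavy object just beyond it may carry a large weight while consuming only a sliver of the $\e$‑budget. The delicate point is to show such an object is always charged against $|w_{\hat{i}}|$; this is what forces the constant $2$ rather than a larger one, and is the reason the argument must case on how $w([h(\hat{s})])$ compares with $|w_{\hat{i}}|$.
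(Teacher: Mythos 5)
Your setup is sound up to and including the tail bound: both candidate sets are feasible, $w(H(\hat{s}))=\max\{w([h]),|w_{\hat{i}}|\}$ (writing $h$ for $h(\hat{s})$), and the density ordering together with the failure of the defining inequality at $h+1$ gives $\e_j\ge |w_j|/(2w([h+1]))$ for all $j>h$, hence $\sum_{j>h}|w_j|x_j\le w([h+1])$ and $\b(\hat{s})\le 2w([h])+|w_{h+1}|$. The gap is exactly where you anticipated it: the ``sharpening'' to $\b(\hat{s})\le w([h])+|w_{\hat{i}}|$ is not just unproven, it is false, and no case split on $w([h])$ versus $|w_{\hat{i}}|$ can rescue it. Take ten items of weight $1$ and negligible $\e_i$, followed by several items of weight $1.5$ and $\e_i=0.066$ each. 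The condition $\e_j/|w_j|<1/(2w([j]))$ holds for $j\le 10$ and fails at $j=11$ (since $0.044\ge 1/23$), so $h=10$, $w([h])=10$, $|w_{\hat{i}}|=1.5$, and $w(H(\hat{s}))=10$. Yet the first ten items plus seven heavy ones have total $\e$-size about $0.462<1/2$ and total weight $20.5$, so $\b(\hat{s})\ge 20.5>2w(H(\hat{s}))=20$ and far exceeds $w([h])+|w_{\hat{i}}|=11.5$. (Your rigorous bound $2w([h])+|w_{h+1}|=21.5$ does hold; it only yields a factor $3$, and the parenthetical ``$|w_{h+1}|$ is either already inside the prefix'' is vacuous since $h+1\notin[h]$ by definition.)

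The underlying reason is that $h$ as defined here is \emph{not} the textbook greedy break point: the threshold $1/(2w([j]))$ is normalized by the cumulative weight, so $[h]$ can be a strict subset of the largest prefix $[g]$ with $\sum_{i\le j}\e_i<1/2$, and the example above exploits precisely this slack. The paper's own proof is a one-line citation to the Martello--Toth greedy $2$-approximation, which tacitly assumes the break-point prefix; your LP argument, run with $g$ in place of $h$ (and after discarding individually infeasible items so that $|w_{g+1}|\le|w_{\hat{i}}|$), does give $\b(\hat{s})\le w([g])+|w_{g+1}|\le 2\max\{w([g]),|w_{\hat{i}}|\}$ in the standard way. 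So the correct conclusion of your analysis is that the statement holds for the break-point greedy but, as literally defined via $h(\hat{s})$, only a factor-$3$ bound $\b(\hat{s})\le 2w([h])+|w_{h+1}|\le 3\,w(H(\hat{s}))$ is available; you should flag this rather than attempt the sharpening.
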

\begin{proof}
 $H(\hat{s})$ is a  2-approximate greedy solution to the 0/1 knapsack problem  given by \cite[Section 2.4]{martello1990knapsack}. %, which is shown to be 2-approximate.
\end{proof}

Now we are ready to prove that if the distortion $\d(s, \hat{s})$ is small, then $w(H(\hat{s}))$ is also small,  which, together with Lemma~\ref{lem:approx-knapsack}, proves the theorem. In our proof, we make use of the notion of $k$-accuracy defined in \cite[Definition 2.6]{ghosh-roth:privacy-auction}.
%\begin{mydef}\label{def:k-accurate}
For $\hat{s}: \rI^n\rightarrow \real$, let % is $k_{\hat{s}}$-accurate for $k_{\hat{s}}$  given by
%\[
\begin{align}
\label{def:k-accurate}
k_{\hat{s}} := \min\left\{k\in \real_+: \forall \bd\in \rI^n, \prob[|s(\bd) - \hat{s}(\bd)| \ge k] \le \frac13 \right\}
\end{align}
%\end{mydef}

\begin{lem}\label{lem:acc-lb}
Let $0 < \a < 1$. If $w(H(\hat{s})) > \a W$ then $k_{\hat{s}} > \a W \Delta/4$.
\end{lem}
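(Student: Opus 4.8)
\textbf{Proof proposal for Lemma~\ref{lem:acc-lb}.}
The plan is the standard differential-privacy packing argument: exhibit two databases whose $s$-values are far apart but whose $\hat{s}$-distributions are forced to be close. First I would unpack the definition of $k_{\hat{s}}$ in \eqref{def:k-accurate}. The set $\{k : \forall\,\bd\in\rI^n,\ \prob[|s(\bd)-\hat{s}(\bd)|\ge k]\le 1/3\}$ is upward closed, so $k_{\hat{s}} > \a W\D/4$ is \emph{equivalent} to the statement that $k := \a W\D/4$ does not belong to it, i.e.\ that there exists $\bd\in\rI^n$ with $\prob[|s(\bd)-\hat{s}(\bd)|\ge k] > 1/3$. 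Hence it suffices to produce such a $\bd$, and I will find one among two explicit candidates. (Combined with the Chebyshev bound relating $\d(s,\hat s)$ to $k_{\hat s}$ and with Lemma~\ref{lem:approx-knapsack}, this lemma is exactly what is needed to finish Theorem~\ref{thm:acc-lb}.)

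The only property of $H(\hat{s})$ used here is $\sum_{i\in H(\hat{s})}\e_i < 1/2$. Under the ordering $\e_1/|w_1|\le\dots\le\e_n/|w_n|$ this is immediate from the construction: if $H(\hat{s})=[h(\hat{s})]$ then for each $j\le h(\hat{s})$ we have $\e_j \le |w_j|\,\e_{h(\hat{s})}/|w_{h(\hat{s})}| < |w_j|/(2\,w([h(\hat{s})]))$, and summing over $j\in[h(\hat{s})]$ gives $\sum_j \e_j < 1/2$; if instead $H(\hat{s})=\{\hat{i}\}$ then $\sum_{i\in H(\hat{s})}\e_i = \e_{\hat{i}} < 1/2$ by the definition of $\hat{i}$. (Equivalently, $H(\hat{s})$ is a greedily packed feasible solution of the knapsack of capacity $1/2$ — the very set used in Lemma~\ref{lem:approx-knapsack}.)

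Write $H := H(\hat{s})$. Define $\bd^0,\bd^1\in\rI^n$ that agree off $H$ (say both equal $R_{\min}$ on $[n]\setminus H$) and on $H$ take opposite extremes chosen according to the sign of $w_i$: for $i\in H$ with $w_i>0$ set $d_i^0 := R_{\min}$, $d_i^1 := R_{\max}$; for $i\in H$ with $w_i<0$ set $d_i^0 := R_{\max}$, $d_i^1 := R_{\min}$ (recall $w_i\ne 0$). Then each coordinate of $H$ contributes $+|w_i|\D$ to $s(\bd^1)-s(\bd^0)$, so $|s(\bd^1)-s(\bd^0)| = \sum_{i\in H}|w_i|\D = w(H)\D > \a W\D = 4k$, using the hypothesis $w(H(\hat{s}))>\a W$. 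Since $\bd^0$ and $\bd^1$ differ only in the coordinates of $H$, applying the definition of $(\e_1,\dots,\e_n)$-differential privacy once for each such coordinate (every intermediate data vector lies in $\rI^n$) gives, for all $S\subseteq\real$,
\[
\prob[\hat{s}(\bd^1)\in S] \le e^{\sum_{i\in H}\e_i}\,\prob[\hat{s}(\bd^0)\in S] \le e^{1/2}\,\prob[\hat{s}(\bd^0)\in S].
\]
Now suppose for contradiction that $\prob[|s(\bd^j)-\hat{s}(\bd^j)|\ge k]\le 1/3$ for \emph{both} $j\in\{0,1\}$, and set $B_j := (s(\bd^j)-k,\ s(\bd^j)+k)$, so that $\prob[\hat{s}(\bd^j)\in B_j]\ge 2/3$. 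Since $|s(\bd^0)-s(\bd^1)| > 4k\ge 2k$, the intervals $B_0$ and $B_1$ are disjoint, hence $\prob[\hat{s}(\bd^0)\in B_1]\le \prob[\hat{s}(\bd^0)\notin B_0]\le 1/3$; taking $S=B_1$ in the displayed inequality yields $\prob[\hat{s}(\bd^1)\in B_1]\le e^{1/2}/3 < 2/3$ (as $e^{1/2}<2$), contradicting $\prob[\hat{s}(\bd^1)\in B_1]\ge 2/3$. Therefore at least one of $\bd^0,\bd^1$ satisfies $\prob[|s(\bd)-\hat{s}(\bd)|\ge k] > 1/3$, which by the first paragraph gives $k_{\hat{s}} > \a W\D/4$.

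There is no deep obstacle; the points requiring care are (i) the direction of the reduction (the admissible set of $k$'s is upward closed, so ruling out the single value $\a W\D/4$ suffices), (ii) letting the extremes in $\bd^0,\bd^1$ depend on the sign of $w_i$ so that each coordinate of $H$ adds $+|w_i|\D$ rather than cancelling (this is the one place negative weights intervene), and (iii) the numeric slack $e^{1/2}/3<2/3$, which is exactly what fixes the constant $1/4$ — it comes from the threshold $1/2$ in the privacy index together with the probability level $1/3$ in the definition of $k$-accuracy. The bound $\sum_{i\in H}\e_i<1/2$ is used solely in the group-privacy step.
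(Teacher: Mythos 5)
Your proof is correct and rests on the same packing argument as the paper's: two databases differing only on the coordinates of $H(\hat{s})$, whose $s$-values are forced far apart, while $(\e_1,\dotsc,\e_n)$-differential privacy together with $\sum_{i\in H(\hat{s})}\e_i<1/2$ keeps the two output distributions within a factor $e^{1/2}$ of each other, contradicting concentration at both databases; your verification of $\sum_{i\in H(\hat{s})}\e_i<1/2$, the group-privacy chaining, and the numeric check $e^{1/2}/3<2/3$ all mirror the paper (which uses the equivalent $2/(3\sqrt{e})>1/3$). The one genuine difference is how the pair of databases is built. The paper starts from an \emph{arbitrary} $\bd$, partitions $H(\hat{s})$ into $H^+$ and $H^-$ according to the signs of the $w_i$ and the positions of the $d_i$ relative to $\brR$, keeps the heavier half, and shifts those entries by $\D/2$, which yields a separation of only $\frac{\D}{4}w(H(\hat{s}))>\a W\D/4$. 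You instead place the $H$-coordinates of the two databases at opposite extremes of $\rI$ (with the extremes chosen by the sign of $w_i$), obtaining the full separation $w(H(\hat{s}))\,\D>\a W\D=4k$. This buys a cleanly valid disjointness step: two open intervals of radius $k$ are disjoint once their centers are $2k$ apart, and your separation exceeds $4k$, whereas the separation the paper derives only exceeds $k_{\hat{s}}$ rather than $2k_{\hat{s}}$ --- so on the one point where the constants actually matter, your construction is tighter than (and effectively repairs) the argument as written in the appendix.
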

\begin{proof}
Assume for the sake of contradiction that $w(H(\hat{s})) > \a W$ and $k_{\hat{s}} \le \a W \Delta /4$. For a data vector $\bd$, let $z = s(\bd) = \sum_{i} w_id_i$ and $\hat{z} = \hat{s}(\bd)$. Also, let $S := \{y\in \real: |y - z| < k_{\hat{s}}\}$. Then, by \eqref{def:k-accurate}, $\prob[\hat{z}\in S] \ge 2/3$.

The set $H(\hat{s})$ can be partitioned as follows: $H(\hat{s})= H^+(\hat{s}) \cup H^-(\hat{s})$, with $H^+(\hat{s}) \cap H^-(\hat{s}) = \{ \emptyset \}$, where the disjoint subsets $H^+(\hat{s})$ and $ H^-(\hat{s})$ are defined by
\begin{equation}
\begin{split}
H^+(\hat{s})&= \{ i\in[n] : d_i \leq \brR \: \text{ and } \: w_i \leq 0 \} \cup \{ i\in[n] : d_i > \brR \: \text{ and } \:  w_i > 0 \},\\
H^-(\hat{s})&= \{ i\in[n] : d_i \leq \brR \: \text{ and } \: w_i > 0 \} \cup \{ i\in[n] : d_i > \brR \: \text{ and } \:  w_i \leq 0 \}.
\end{split}
\end{equation}
Then $w(H(\hat{s}))=w(H^+(\hat{s})) + w(H^-(\hat{s}))$. Thus, one of the subsets $H^+(\hat{s})$ and $ H^-(\hat{s})$ must have a total weight greater or equal to $w(H(\hat{s})) /2$. Without loss of generality, assume that $w(H^+(\hat{s})) \geq w(H(\hat{s})) /2 $.

Consider another data vector $\bd'$ where $d'_i = d_i$ if $i\in [n] \setminus H^+(\hat{s})$, while if $i\in H^+(\hat{s})$,
\begin{equation}\label{eq:bip}
d'_i = \left\{
\begin{array}{cc}
  d_i + \frac{\Delta}{2}, & \mbox{ if } d_i \leq \brR \: \text{ and } \: w_i \leq 0 \\
  d_i - \frac{\Delta}{2}, &  \mbox{ if } d_i > \brR \: \text{ and } \:  w_i > 0
\end{array}
\right.
\end{equation}
Let $z' := s(\bd') = \sum_{i=1}^n w_i d'_i$ and let $\hat{z}' = \hat{s}(\bd')$. Also, let $S' := \{y\in \real: |y - z'| < k_{\hat{s}}\}$. From eq. (\ref{eq:bip}), we have
\begin{equation}\label{eq:zz}
\begin{split}
|z - z'|
&\!=\! \big| \sum_{i\in H^+(\hat{s})}\!\!\!\!\!\! w_i (d_i - d'_i ) \big|\!=\! \big| \sum_{i\in H^+(\hat{s})} \!\!\!\!\!\!|w_i| \Delta /2 \big| %\\
%&
\!=\! \frac{\Delta}{2} w(H^+(\hat{s})) \geq \frac{\Delta}{4} w(H(\hat{s})) > \a \frac{\Delta}{4} W.
\end{split}
\end{equation}
Since $k_{\hat{s}} \le \a W \Delta/4$, eq. (\ref{eq:zz}) implies that $S$ and $S'$ are disjoint.

Since $\hat{s}$ is $(\e_1,\dotsc, \e_n)$-differentially private, and $\bd$ and $\bd'$ differ in exactly the entries in $H^+(\hat{s})$,
%\[
%\frac{\prob[\hat{z}\in S]}{\prob[\hat{z}'\in S]} \le \exp\left(\sum_{i\in H^+(\hat{s})} \e_i\right).
%\]
%In other words,
%\begin{align*}
$\prob[\hat{z}'\in S] \ge \exp\left(-\sum_{i\in H^+(\hat{s})} \e_i\right)\prob[\hat{z}\in S] \ge \exp\left(-\sum_{i\in H^+(\hat{s})} \e_i\right)\frac23.$
%\end{align*}
Note that $\sum_{i\in [h(\hat{s})]} \e_i < \sum_{i\in [h(\hat{s})]} \frac{|w_i|}{2w([h(\hat{s})])} = \frac12$, and also $\e_{\hat{i}} < 1/2$. Therefore, $\sum_{i\in H(\hat{s})} \e_i < 1/2$. Since $H^+ (\hat{s}) \subset H(\hat{s})$, we have $\sum_{i\in H^+(\hat{s})} \e_i \leq \sum_{i\in H(\hat{s})} \e_i < 1/2$. 

This implies
%\[
$\prob[\hat{z}'\in S]  \ge \exp\left(-\sum_{i\in H^+(\hat{s})} \e_i\right)\frac23 > \exp\left(-\frac12\right)\frac23 = \frac{2}{3\sqrt{e}} > \frac13.$
%\]
Given that $S$ and $S'$ are disjoint, $\prob[\hat{z}'\in S] > 1/3$ implies that $\prob[\hat{z}'\notin S'] > 1/3$, which contradicts the assumption that $k_{\hat{s}} \le \a W \Delta /4$.
\end{proof}

Next we relate $k_{\hat{s}}$-accuracy to the distortion $\delta(s,\hat{s})$: % in order to translate the $k$-accuracy lower bound obtained in Lemma~\ref{lem:acc-lb} into a distortion lower bound.

\begin{lem}\label{lem:dist-acc}
For $s(\bd)$ as defined in \eqref{def:s} and a function $\hat{s}: \rI^n\rightarrow \real$, %the accuracy of $\hat{s}$,
 $k_{\hat{s}} \le \sqrt{3\d(s,\hat{s})}$.
\end{lem}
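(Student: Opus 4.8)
The plan is to obtain the inequality by a direct application of Markov's inequality, using the fact that distortion is precisely a (worst-case) second-moment bound on the estimation error. Fix an arbitrary $\bd\in\rI^n$ and consider the nonnegative random variable $Y_{\bd} := |s(\bd)-\hat{s}(\bd)|^2$. For any $k>0$, Markov's inequality gives
\[
\prob\big[\,|s(\bd)-\hat{s}(\bd)|\ge k\,\big] = \prob\big[\,Y_{\bd}\ge k^2\,\big] \le \frac{\expect[Y_{\bd}]}{k^2} \le \frac{\d(s,\hat{s})}{k^2},
\]
where the last step uses $\expect[Y_{\bd}] \le \max_{\bd'\in\rI^n}\expect\big[|s(\bd')-\hat{s}(\bd')|^2\big] = \d(s,\hat{s})$, which is exactly Definition~\ref{def:distortion}.

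Next I would set $k := \sqrt{3\,\d(s,\hat{s})}$, so that the right-hand side above equals $1/3$; hence $\prob[\,|s(\bd)-\hat{s}(\bd)|\ge k\,]\le 1/3$ holds simultaneously for every $\bd\in\rI^n$. By the definition \eqref{def:k-accurate} of $k_{\hat{s}}$ as the smallest value in $\real_+$ for which this uniform tail bound holds, this yields $k_{\hat{s}}\le \sqrt{3\,\d(s,\hat{s})}$, as claimed.

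There is essentially no obstacle here: the only point worth a remark is the degenerate case $\d(s,\hat{s})=0$, in which the tail bound holds for all $k>0$ and therefore $k_{\hat{s}}=0$, consistent with the claimed inequality. Combined with Lemmas~\ref{lem:approx-knapsack} and~\ref{lem:acc-lb}, this lemma completes the proof of Theorem~\ref{thm:acc-lb}.
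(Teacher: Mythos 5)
Your proof is correct and follows essentially the same route as the paper: apply Markov's inequality to the squared error $|s(\bd)-\hat{s}(\bd)|^2$, bound its expectation uniformly over $\bd$ by the distortion, and evaluate at $k=\sqrt{3\d(s,\hat{s})}$ to get a tail probability of $1/3$. The remark on the degenerate case $\d(s,\hat{s})=0$ is a harmless addition; no issues.
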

\begin{proof}
Observe that for all $k \ge \sqrt{3\d(s,\hat{s})}$,
%\begin{align*}
$
\prob[|s(\bd) - \hat{s}(\bd)| \ge k] \le \prob[|s(\bd) - \hat{s}(\bd)| \ge \sqrt{3\d(s,\hat{s})}]%\\
 \le \frac{\expect[|s(\bd) - \hat{s}(\bd)|^2]}{3\d(s,\hat{s})}  \le \frac13
$
%\end{align*}
where the second step follows from Markov's inequality.
This implies $k_{\hat{s}} \le \sqrt{3\d(s,\hat{s})}$.
\end{proof}

\begin{cor} \label{cor:dist-lb}
If $w(H(\hat{s})) > \a W$ then $\d(s,\hat{s}) > (\a W \Delta )^2/48$.
\end{cor}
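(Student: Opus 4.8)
The plan is to chain the two immediately preceding lemmas; no new ideas are needed. First I would invoke Lemma~\ref{lem:acc-lb} with the given $\a$: the hypothesis $w(H(\hat{s})) > \a W$ yields directly $k_{\hat{s}} > \a W \D/4$. Then I would apply Lemma~\ref{lem:dist-acc}, which holds for \emph{any} estimator $\hat{s}$, to obtain $k_{\hat{s}} \le \sqrt{3\,\d(s,\hat{s})}$.

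Combining the two bounds gives $\a W \D/4 < \sqrt{3\,\d(s,\hat{s})}$. Since $\a>0$, $W>0$ and $\D\ge 0$, both sides are nonnegative, so squaring is order-preserving, and we obtain $\a^2 W^2 \D^2/16 < 3\,\d(s,\hat{s})$, i.e. $\d(s,\hat{s}) > (\a W\D)^2/48$, which is exactly the claim. The only fine points to check are that the strict inequality propagates (it comes from Lemma~\ref{lem:acc-lb}) and that squaring is monotone on the range in question, both of which are immediate.

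There is essentially no obstacle here: the substance lives in Lemmas~\ref{lem:acc-lb} and~\ref{lem:dist-acc}. The one thing worth stating explicitly is that Lemma~\ref{lem:dist-acc} is unconditional, so the corollary inherits exactly the hypotheses of Lemma~\ref{lem:acc-lb} and nothing more. Finally, I would note that this corollary is the last step toward Theorem~\ref{thm:acc-lb}: read contrapositively, $\d(s,\hat{s}) \le (\a W\D)^2/48$ forces $w(H(\hat{s})) \le \a W$, and then Lemma~\ref{lem:approx-knapsack} upgrades this to $\b(\hat{s}) \le 2\,w(H(\hat{s})) \le 2\a W$.
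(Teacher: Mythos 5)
Your proof is correct and is exactly the paper's argument: the paper's own proof of this corollary is the one-line observation that it follows by combining Lemma~\ref{lem:acc-lb} with Lemma~\ref{lem:dist-acc}, which is precisely the chain (and the squaring step) you spell out.
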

\begin{proof}
The corollary follows from Lemma~\ref{lem:acc-lb} and Lemma~\ref{lem:dist-acc}.
\end{proof}

Thus from Corollary~\ref{cor:dist-lb}, we have that if $\d(s,\hat{s}) \le (\a W \Delta)^2/48$, then $w(H(\hat{s})) \le \a W$. Since $w(H(\hat{s})) \ge \frac12\b(\hat{s})$ (from Lemma~\ref{lem:approx-knapsack}), it implies if $\d(s,\hat{s}) \le (\a W \Delta)^2/48$, then $\frac12\b(\hat{s}) \le \a W$. This concludes the proof of Theorem~\ref{thm:acc-lb}.\qed

\section{Proof of Lemma~\protect\lowercase{\ref{lem:privdistlap}}}\label{app:proofs-3}
For the first part of this lemma, observe that the sensitivity of $\sum_{i}w_i[x_id_i +(1-x_i)a_i]$ w.r.t.~$i$ is $S_i(\hat{s})=\D|w_i| x_i $. The differential privacy guarantee therefore follows from Lemma~\ref{lem:sensitivity}.

To obtain the lower bound on the distortion, observe that
substituting the expressions for $s$ and $\hat{s}$ in the expression for $\d(s,\hat{s})$, we get
\begin{align*}
%\d(s(\bd),\hat{s}(\bd,\ba,\bx, \s)) &= \max_{\bd\in\rI^n} \expect[\|s(\bd) - \hat{s}(\bd,\ba,\bx, \s)\|_2^2]\\
\d(s ,\hat{s}) &= \max_{\bd\in\rI^n} \expect[|s(\bd) - \hat{s}(\bd;\ba,\bx, \s)|^2]
\\
%&=  \max_{\bd\in\rI^n}\expect\left[\left(\sum_{i=1}^n w_id_i - \left(\sum_{i=1}^nw_id_ix_i + \sum_{i=1}^n w_ia_i(1-x_i)\right.\right.\right.\displaybreak[0]\\
%&\qquad\qquad\left.\left.\left. + z\right)\right)^2\right] \text{ (where $z \sim \text{Lap}(\s)$)}\displaybreak[0]\\
 &=  \max_{\bd\in\rI^n}\expect\big[\big(\sum_{i=1}^n w_id_i(1-x_i) - \sum_{i=1}^n w_ia_i(1-x_i) - z\big)^2\big]\text{ (where $z \sim \text{Lap}(\s)$)}\displaybreak[0]\\
% &=  \max_{\bd\in\rI^n}\expect\left[\left(\sum_{i=1}^n w_i(1-x_i)(d_i-a_i) - z\right)^2\right]\displaybreak[0]\\
% &= \max_{\bd\in\rI^n}\left(\sum_{i=1}^n w_i(1-x_i)(d_i-a_i)\right)^2 - 2\sum_{i=1}^n w_i(1-x_i)(d_i-a_i)\expect[z] + \expect[z^2]\displaybreak[0]\\
 &= \max_{\bd\in\rI^n}\big(\sum_{i=1}^n w_i(1-x_i)(d_i-a_i)\big)^2 + 2\s^2 \text{ (since $\expect[z] = 0; \expect[z^2] = 2\s^2$)}\displaybreak[0]\\
  &=2\s^2 +\max_{\bd\in\rI^n}\big(\sum_{i=1}^n \g_i(d_i-a_i)\big)^2 %\\ %\text{ (where $\g_i := w_i(1-x_i)$)}\\
  %&
= 2\s^2 +\big(\max_{\bd\in\rI^n}\big\vert\sum_{i=1}^n \g_i(d_i-a_i)\big\vert\big)^2
\end{align*}
Observe that %\[
$
\max_{\bd\in\rI^n}\vert f(\bd)\vert = \max\big\{\big\vert\max_{\bd\in\rI^n} f(\bd)\big\vert,\big\vert\min_{\bd\in\rI^n} f(\bd)\big\vert\big\}
$
%\]
for any continuous function $f:\rI^n\rightarrow \real$.  Therefore,
%\begin{equation}\label{eq:dhg}
\[
\begin{split}
\d(s ,\hat{s})
&= 2\s^2 +\big(\max\big\{\big\vert\max_{\bd\in\rI^n}\sum_{i=1}^n \g_i(d_i-a_i)\big\vert,\big\vert\min_{\bd\in\rI^n}\sum_{i=1}^n \g_i(d_i-a_i)\big\vert\big\}\big)^2\\
&= 2\s^2 +\big(\max\big\{\big\vert \gamma^{(+)}R_{\max}  + \gamma^{(-)}R_{\min} - \sum_{i=1}^n \gamma_i a_i \big\vert, \big\vert  \gamma^{(+)} R_{\min} + \gamma^{(-)}R_{\max} - \sum_{i=1}^n \gamma_i a_i \big\vert\big\}\big)^2,%\\
%&= 2\s^2 + (h(g(\ba)))^2.
\end{split}
\]
%\end{equation}
where $\gamma^{(+)} := \sum_{i:\gamma_i \geq 0}^n \gamma_i$,  %\text{and}\quad 
and $\gamma^{(-)} := \sum_{i: \gamma_i<0}^n \gamma_i.$
Observe that, for any $a,b,c,\in \real$, it is true that
$\max(|a-c|, |b-c|) \geq \frac{|a-b|}{2}$
with equality attained at $c= \frac{a+b}{2}$. Applying this for $a=\gamma^{(+)}R_{\max}  + \gamma^{(-)}R_{\min}$, $b= \gamma^{(+)} R_{\min} + \gamma^{(-)}R_{\max}$ and  $c= \sum_{i=1}^n \gamma_i a_i $ we get
%where the second equality holds by Lemma~\ref{lem:bmin-bmax}, and the third equality holds by the definition of $h$ and $g$ in Lemma~\ref{lem:hg}. Minizing (\ref{eq:dhg}) over $\ba \in \real^n$ yields
%\begin{equation}
%\begin{split}
%\min_{\ba \in \real^n}\d(s ,\hat{s}) = \d(s(\bd),\hat{s}(\bd;\ba^*,\bx, \s))= 2\s^2 + \left(\frac{\Delta}{2} \sum_{i=1}^n |w_i| (1-x_i)\right)^2,
%\begin{align*}
$
\min_{\ba \in \real^n}\d(s ,\hat{s}) \geq  2\s^2 + \frac{(\gamma^{+}-\gamma^{-})(R_{\max}-R_{\min})}{2} = 2\s^2 + \big(\frac{\Delta}{2} \sum_{i=1}^n |w_i| (1-x_i)\big)^2,
$
%\end{align*}
%\end{split}
%\end{equation}
with equality attained when $\sum_{i}\gamma_ia_i={(\gamma^{+}+\gamma^{-})(R_{\max}+R_{\min})}/{2}= \sum_{i}\gamma_i \bar{R},$
which holds for $a_i=\bar{R}$.
%where we recall that $\ba^*$ is defined by $a^*_i = \bar{R}$, $\forall i \in [n]$.
%This completes the proof of Lemma~\ref{lem:privdistlap}.
%\qedhere

%Next, observe that since $\g_i \ge 0$ for all $i$, $\sum_{i=1}^n \g_i(d_i-a_i)$ is maximized when $d_i = 1$ for all $i$, and $\sum_{i=1}^n \g_i(d_i-a_i)$ is minimized when $d_i = 0$ for all $i$. Therefore,
%\begin{align*}
%\d(s(\bd),\hat{s}(\bd,\ba,\s)) &= 2\s^2 + \left(\max\left\{\bigg\vert\sum_{i=1}^n \g_i(1-a_i)\bigg\vert,\bigg\vert\sum_{i=1}^n \g_ia_i\bigg\vert\right\}\right)^2
%\end{align*}

%Let $g:\real^n\rightarrow \real$ be defined as $g(\bx) = \sum_{i=1}^n \g_ix_i$. Let $h:\real \rightarrow \real$ be defined as $h(x) = \max\{|x|,|\r - x|\}$, where $\r := \sum_i \g_i$. Then, $\min_xh(x) = \r/2$. Therefore,
%\begin{align*}
%\min_{\ba}  \d(s(\bd),\hat{s}(\bd,\ba,\s)) &= 2\s^2 + \min_{\ba} \left(\max\left\{\bigg\vert\sum_{i=1}^n \g_i(1-a_i)\bigg\vert,\bigg\vert\sum_{i=1}^n \g_ia_i\bigg\vert\right\}\right)^2\\
%&= 2\s^2 + \min_{\ba} \left(\max\left\{\vert \r - g(\ba)\vert,\vert g(\ba)\vert\right\}\right)^2\\
%&= 2\s^2 + \min_{\ba} \left(h(g(\ba)\right)^2\\
%&\ge 2\s^2 + \min_x (h(x))^2 \text{ (since $\text{Range}(g)\subseteq \real$)} \\
%&= 2\s^2 + \frac{\r^2}{4}
%\end{align*}
%Also,
%\[
% \d(s(\bd),\hat{s}(\bd,\ba,\s)) = 2\s^2 + \frac{\r^2}{4}
%\]
%when $a_i = 1/2$ for all $i$. This completes the proof.%\qedhere
%\end{proof}

\section{Proof of Theorem~\protect\lowercase{\ref{thm:acc-ub}} (DCLEFs Suffice)}\label{app:proofofacc-ub}
Consider the function $\hat{s}^{\circ}(\bd) := \sum_{i\notin H^{\circ}} w_id_i + \brR \sum_{i\in H^{\circ}} w_i + Lap(w(H^{\circ}))$, where $H^{\circ}$ is defined as
%\[
$
H^{\circ} := \argmax\left\{w(H) : H\subseteq [n] \text{ and } w(H) \le \a W\right\}.
$
%\]
We can write $\hat{s}^{\circ}$ as $\hat{s}^{\circ}(\bd; \bx) := \sum_{i=1}^n w_id_ix_i + \brR \sum_{i=1}^n w_i(1-x_i) + \text{Lap}(w(H^{\circ}))$, where $x_i = 0$ for all $i\in H^{\circ}$ and $x_i = 1$ otherwise. Observe that $\hat{s}^{\circ}$ is a DCLEF and
%\begin{align*}
$\d(s,\hat{s}^{\circ}) \stackrel{\text{Lem.~\ref{lem:privdistlap}}}{=} \frac94 {\D}^2 \left(\sum_{i=1}^n |w_i| (1-x_i)\right)^2 %\text{ (from Lemma~\ref{lem:privdistlap})}\\
%&
= \frac94 {\D}^2 \left(w(H^{\circ})\right)^2 \le \frac94(\a W \D)^2.$
%\end{align*}
Since $\d(s,\hat{s}^*) \le (\a W \D)^2/48$, it follows from Lemma~\ref{lem:dist-acc} that $k_{\hat{s}^*} \le \a W \D/4$. Then, it follows from Lemma~\ref{lem:acc-lb} that $w(H(\hat{s}^*)) \le \a W$, where $H(\hat{s}^*)$ is as defined in the proof of Theorem~\ref{thm:acc-lb}. Further, it follows that $w(H^{\circ}) \ge w(H(\hat{s}^*)) \ge \frac12 \b(\hat{s}^*)$ (the first inequality follows by definition of $H^{\circ}$ and the fact that $w(H(\hat{s}^*)) \le \a W$, and the second from Lemma~\ref{lem:approx-knapsack}). Since $\b(\hat{s}^{\circ}) \ge w(H^{\circ})$, it follows that $\b(\hat{s}^{\circ}) \ge \frac12 \b(\hat{s}^*)$.

\section{Proof of Theorem~\protect\lowercase{\ref{thm:budget-constr}}}\label{app:mechanism}

\subsection{Truthfulness, Individual Rationality, and Budget Feasibility}
In this section, we prove that FairInnerProduct is truthful, individually rational, and budget feasible. %We first define two sets $S_1$ and $S_2$ that we use throughout this proof: %Define
%\begin{align*}
%S_1 &:= \left\{t\in [n]\setminus\{i^*\} : \frac{B}{\sum_{i\in[t]\setminus\{i^*\}} |w_i|} \ge \frac{v_t}{W - \sum_{i\in[t]\setminus\{i^*\}} |w_i|}\right\};\\\ S_2 &:= \left\{t\in [n]\setminus\{i^*\} : \sum_{i\in [t]\setminus\{i^*\}} |w_i| \ge |w_{i^*}| \right\}.
%\end{align*}
We first define $$S_1 := \left\{t\in [n]\setminus\{i^*\} : \frac{B}{\sum_{i\in[t]\setminus\{i^*\}} |w_i|} \ge \frac{v_t}{W - \sum_{i\in[t]\setminus\{i^*\}} |w_i|}\right\}$$ and $$S_2 := \left\{t\in [n]\setminus\{i^*\} : \sum_{i\in [t]\setminus\{i^*\}} |w_i| \ge |w_{i^*}| \right\}.$$
Observe that $S_{-i^*} = S_1 \cap S_2$.
\begin{prop}
FairInnerProduct is budget feasible.
\end{prop}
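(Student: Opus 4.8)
The plan is to verify $\sum_{i\in[n]} p_i \le B$ directly, by case analysis on which branch of the algorithm is executed. The \textbf{else} branch is immediate: there $O=[k]$, each $i\in[k]$ is paid $p_i=|w_i|\cdot\min\{\frac{B}{w([k])},\frac{v_{k+1}}{W-w([k])}\}$, and everyone else is paid $0$; since the minimum is at most its first argument $\frac{B}{w([k])}$ (whatever the value of the second argument, even when $k=n$), summing yields $\sum_{i\in[n]}p_i=w([k])\cdot\min\{\ldots\}\le w([k])\cdot\frac{B}{w([k])}=B$. (If $k=0$ the sum is trivially $0\le B$, but note that Assumption~\ref{asmp:small-weights} applied to $i=1$ says exactly $\frac{|w_1|v_1}{W-|w_1|}\le B$, which is the defining inequality of $k$ at $k=1$, so in fact $k\ge1$.) No externality subtlety arises here, since the per-unit payment rate is uniform across $O$.

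The substantive branch is the \textbf{if} branch, where $O=\{i^*\}$, $p_{i^*}=\hat p$, and all other payments are $0$; it suffices to show $\hat p\le B$. If $S_{-i^*}=\emptyset$ then $\hat p=B$ by definition and we are done, so assume $S_{-i^*}\ne\emptyset$ and $\hat p=\frac{|w_{i^*}|v_r}{W-|w_{i^*}|}$ with $r=\min\{i:i\in S_{-i^*}\}$. I would set $A:=\sum_{i\in[r]\setminus\{i^*\}}|w_i|$ and first dispose of the denominators: since $S_{-i^*}=S_1\cap S_2\ne\emptyset$ forces the existence of an index in $[n]\setminus\{i^*\}$, we have $n\ge2$, hence $W>|w_{i^*}|$ because all weights are nonzero; moreover $0<|w_{i^*}|\le A\le W-|w_{i^*}|<W$ (the lower bound from $r\in S_2$, the upper bound because $[r]\setminus\{i^*\}\subsetneq[n]$). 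Now $r\in S_1$ gives $\frac{B}{A}\ge\frac{v_r}{W-A}$, i.e. $v_r\le\frac{B(W-A)}{A}$, so
\[
\hat p=\frac{|w_{i^*}|\,v_r}{W-|w_{i^*}|}\le\frac{|w_{i^*}|}{W-|w_{i^*}|}\cdot\frac{B(W-A)}{A}=B\cdot\frac{|w_{i^*}|}{W-|w_{i^*}|}\cdot\frac{W-A}{A}.
\]
Since $A\mapsto\frac{W-A}{A}=\frac{W}{A}-1$ is decreasing and $A\ge|w_{i^*}|$, we get $\frac{W-A}{A}\le\frac{W-|w_{i^*}|}{|w_{i^*}|}$, and the right-hand side above is therefore at most $B$, as required.

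The only real obstacle is this last sub-case: one has to combine the two inequalities defining $S_1$ and $S_2$ in the right order and be careful that every denominator appearing ($W-|w_{i^*}|$, $A$, $W-A$) is strictly positive, which rests on the standing convention $w_i\ne0$ for all $i$ together with the non-emptiness of $S_{-i^*}$. Everything else is a one-line computation. I would also remark that Assumption~\ref{asmp:small-weights} is not actually needed for budget feasibility beyond guaranteeing $k\ge1$; its main role is in the later approximation analysis.
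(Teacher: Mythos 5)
Your proof is correct and follows essentially the same route as the paper's: the $O=[k]$ branch is handled by the identical one-line summation, and in the $O=\{i^*\}$ branch you combine $r\in S_1$ with $|w_{i^*}|\le\sum_{i\in[r]\setminus\{i^*\}}|w_i|$ (from $r\in S_2$) exactly as the paper does, merely reordering the same chain of inequalities $\frac{|w_{i^*}|v_r}{W-|w_{i^*}|}\le\frac{|w_{i^*}|v_r}{W-A}\le\frac{|w_{i^*}|B}{A}\le B$ into a substitution-plus-monotonicity argument. The extra care about denominators and the remark on $k\ge1$ are fine but not needed beyond what the paper already assumes.
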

\begin{proof}
When $O = \{i^*\}$ and $\hat{p} = B$,  the mechanism is trivially budget feasible. If $\hat{p} = \frac{|w_{i^*}|v_r}{W - |w_{i^*}|}$ then observe that since $r\in S_{-i^*}$, this implies $r\in S_1$ and $r\in S_2$. Therefore,
%\[
$\hat{p} = \frac{|w_{i^*}|v_r}{W - |w_{i^*}|} \le \frac{|w_{i^*}|v_r}{W - \sum_{i\in [r]\setminus\{i^*\}} |w_i|} \le \frac{|w_{i^*}|B}{\sum_{i\in[r]\setminus\{i^*\}} |w_i|} \le B$
%\]
where the second inequality holds because $r\in S_1$ and the last inequality because $r\in S_2$. When $O = [k]$, the sum of the payments made by the mechanism is given by
%\[
$\sum_{i\le k} p_i \le \sum_{i\le k} |w_i|\frac{B}{w([k])} = \frac{B}{w([k])}\sum_{i\le k} |w_i| = B$. %\quad\qed
%\]
\end{proof}

\begin{prop}\label{prop:empty-si*}
If $i^* > k+1$ and $|w_{i^*}| > \sum_{i\in[k]\setminus\{i^*\}} |w_i|$, then $S_{-i^*} = \emptyset$.
\end{prop}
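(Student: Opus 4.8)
The plan is to suppose $S_{-i^*}\neq\emptyset$, take any $t\in S_{-i^*}=S_1\cap S_2$, and contradict the maximality of $k$. First I would record the structural fact that makes $k$ usable: set $K:=\bigl\{m\in[n-1]:\tfrac{B}{w([m])}\ge\tfrac{v_m}{W-w([m])}\bigr\}$. The map $m\mapsto\tfrac{B}{w([m])}$ is strictly decreasing (the $|w_i|$ are positive) and $m\mapsto\tfrac{v_m}{W-w([m])}$ is nondecreasing ($v_m$ nondecreasing, $W-w([m])$ strictly decreasing and positive for $m<n$), so $m\mapsto\tfrac{B}{w([m])}-\tfrac{v_m}{W-w([m])}$ is nonincreasing; hence $K$ is an initial segment, and $K=\{1,\dots,k\}$ by the definition of $k$, with $1\in K$ by Assumption~\ref{asmp:small-weights}. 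Since $i^*>k+1$ and $i^*\le n$, we have $k\le n-2$, so $k+1\in[n-1]$ but $k+1\notin K$. Thus it suffices to show that $t\in S_{-i^*}$ would force $k+1\in K$.

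The first step is to rule out the case $i^*>t$. If $i^*>t$, then $[t]\setminus\{i^*\}=[t]$, so the defining inequality of $S_1$ reads $\tfrac{B}{w([t])}\ge\tfrac{v_t}{W-w([t])}$, i.e.\ $t\in K$, whence $t\le k$ and $w([t])\le w([k])$. On the other hand $t\in S_2$ gives $w([t])=\sum_{i\in[t]\setminus\{i^*\}}|w_i|\ge|w_{i^*}|$, and the hypothesis $|w_{i^*}|>\sum_{i\in[k]\setminus\{i^*\}}|w_i|=w([k])$ (the last equality because $i^*>k+1>k$) then yields $w([k])\ge w([t])\ge|w_{i^*}|>w([k])$, a contradiction. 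Hence $i^*\le t$.

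Now $i^*\le t$ and $i^*>k+1$ give $t\ge i^*>k+1$, so $k+1\le t$ and $i^*\notin[k+1]$; therefore $[k+1]\subseteq[t]\setminus\{i^*\}$, which yields $w([k+1])\le w([t]\setminus\{i^*\})$ (and the right side is $<W$, since $[t]\setminus\{i^*\}$ is a proper subset of $[n]$ and every $|w_i|>0$), while $v_{k+1}\le v_t$ because $k+1\le t$. Feeding these into the $S_1$-inequality for $t$ gives
\[
\frac{B}{w([k+1])}\ \ge\ \frac{B}{w([t]\setminus\{i^*\})}\ \ge\ \frac{v_t}{W-w([t]\setminus\{i^*\})}\ \ge\ \frac{v_{k+1}}{W-w([k+1])},
\]
so $k+1\in K$ — contradicting the maximality of $k$. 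Hence $S_{-i^*}=\emptyset$.

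I expect the main obstacle to be recognizing the role of each of the two conditions defining $S_{-i^*}$. The condition $t\in S_2$ (i.e.\ $\sum_{i\in[t]\setminus\{i^*\}}|w_i|\ge|w_{i^*}|$), together with the hypothesis $|w_{i^*}|>w([k])$, is precisely what kills the ``small-$t$'' case $i^*>t$; once $i^*\le t$ is forced, the hypothesis $i^*>k+1$ is exactly what guarantees that deleting $i^*$ from $[t]$ still leaves all of $[k+1]$ intact, which is what lets the $S_1$-feasibility of $[t]\setminus\{i^*\}$ ``push down'' to the $k$-feasibility of $[k+1]$ and contradict the maximality of $k$. The monotonicity claim for $K$ is routine but must be stated, since the whole argument rests on $K$ being an initial segment $\{1,\dots,k\}$.
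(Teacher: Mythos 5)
Your proof is correct and follows essentially the same route as the paper's: the paper's one-line argument asserts that under the hypotheses $S_1=[k]$ and $S_2\cap[k]=\emptyset$, and your two cases ($i^*>t$ forcing $t\in[k]\setminus S_2$ via the initial-segment property of $K$, and $i^*\le t$ forcing $k+1\in K$) are exactly the details behind those two assertions. You simply make explicit the monotonicity argument and the inclusion $[k+1]\subseteq[t]\setminus\{i^*\}$ that the paper leaves implicit.
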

\begin{proof}
Observe that if $i^* > k+1$ and $|w_{i^*}| > \sum_{i\in[k]\setminus\{i^*\}} |w_i|$, then $S_1 = [k]$ and $S_2 \cap [k] = \emptyset$.
\end{proof}

\begin{prop}\label{prop:r>i}
If  $|w_{i^*}| > \sum_{i\in[k]\setminus\{i^*\}} |w_i|$ and $S_{-i^*} \ne \emptyset$, then $r > i^*$.
\end{prop}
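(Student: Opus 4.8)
Since $S_{-i^*}\subseteq[n]\setminus\{i^*\}$ by definition, $i^*\notin S_{-i^*}$, so to conclude $r>i^*$ (where $r=\min S_{-i^*}$, which exists because $S_{-i^*}\ne\emptyset$ by hypothesis) it suffices to rule out any index $t<i^*$ from $S_{-i^*}=S_1\cap S_2$. The plan is to show that every such $t$ violates the membership condition for $S_2$, namely $\sum_{i\in[t]\setminus\{i^*\}}|w_i|\ge|w_{i^*}|$.

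The first step is to reduce to the case $i^*\le k+1$. We are given the standing hypothesis $|w_{i^*}|>\sum_{i\in[k]\setminus\{i^*\}}|w_i|$; if in addition $i^*>k+1$ held, Proposition~\ref{prop:empty-si*} would force $S_{-i^*}=\emptyset$, contradicting $S_{-i^*}\ne\emptyset$. Hence $i^*\le k+1$.

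Next, fix any $t<i^*$. Then $t\le i^*-1\le k$, so $[t]\subseteq[k]$; moreover $i^*\notin[t]$, so $[t]\subseteq[k]\setminus\{i^*\}$ (this holds whether $i^*\le k$ or $i^*=k+1$). Therefore
\[
\sum_{i\in[t]\setminus\{i^*\}}|w_i|=\sum_{i\in[t]}|w_i|=w([t])\le\sum_{i\in[k]\setminus\{i^*\}}|w_i|<|w_{i^*}|,
\]
the final strict inequality being the standing hypothesis. Thus $t\notin S_2$, so $t\notin S_{-i^*}$. Since this holds for all $t<i^*$ and also $i^*\notin S_{-i^*}$, the minimum $r$ of the nonempty set $S_{-i^*}$ must satisfy $r>i^*$.

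The argument is short and almost entirely a matter of unwinding the definitions of $S_1$, $S_2$, and $r$. The one point that needs care — and the only place where the hypotheses are genuinely used — is the reduction step: one must invoke Proposition~\ref{prop:empty-si*} at the outset to legitimately assume $i^*\le k+1$, since otherwise the inequality $t\le k$ for $t<i^*$ (which drives the set inclusion $[t]\subseteq[k]\setminus\{i^*\}$) would not be available. A minor bookkeeping check is that this inclusion is valid in both subcases $i^*\le k$ and $i^*=k+1$; everything else is routine.
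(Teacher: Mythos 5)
Your proof is correct and follows essentially the same route as the paper's: both invoke Proposition~\ref{prop:empty-si*} in contrapositive form to get $i^*\le k+1$, and both then use the hypothesis $|w_{i^*}|>\sum_{i\in[k]\setminus\{i^*\}}|w_i|$ to show that small indices fail the $S_2$ condition (the paper excludes all of $[k]$ and treats $i^*=k+1$ separately; you exclude just the indices $t<i^*$, which is a marginally tidier bookkeeping of the same argument).
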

\begin{proof}
From Proposition~\ref{prop:empty-si*}, $S_{-i^*} \ne \emptyset$ implies either $i^* \le k+1$ or $|w_{i^*}| \le \sum_{i\in[k]\setminus\{i^*\}} w_i$. Since the latter is false, it must be that $i^* \le k+1$. In that case, $S_2 \cap [k] = \emptyset$. Therefore $r > k$. If $i^* = k+1$, then for all $j\in  S_2,\ j\ge k+2$. Therefore $r \ge k+2$.
\end{proof}

\begin{prop}
FairInnerProduct is individually rational.
\end{prop}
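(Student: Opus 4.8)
The plan is to check the inequality $p_i \ge v_i\e_i(\bx)$ individual by individual, exploiting the three structural facts already in hand: the defining inequality of $k$, the ordering $v_1\le\cdots\le v_n$, and Proposition~\ref{prop:r>i}. First, for any $i\notin O$ we have $x_i=0$, so by \eqref{eq:ei} $\e_i(\bx)=0$, and since the mechanism sets $p_i=0$, the individual-rationality inequality holds with equality. Thus it remains to treat $i\in O$, and here we split along the two branches of the algorithm.

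Suppose first that $O=[k]$ (the case $|w_{i^*}|\le\sum_{i\in[k]\setminus\{i^*\}}|w_i|$). Then for $i\in[k]$, $\sum_j|w_j|(1-x_j)=W-w([k])$, so $\e_i(\bx)=|w_i|/(W-w([k]))$, while $p_i=|w_i|\min\bigl\{\tfrac{B}{w([k])},\tfrac{v_{k+1}}{W-w([k])}\bigr\}$. Dividing the desired inequality $p_i\ge v_i\e_i(\bx)$ through by $|w_i|$, it suffices to show $\min\bigl\{\tfrac{B}{w([k])},\tfrac{v_{k+1}}{W-w([k])}\bigr\}\ge \tfrac{v_i}{W-w([k])}$. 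The second term dominates because $v_i\le v_k\le v_{k+1}$ (using the ordering of costs and $i\le k$), and the first dominates because, by the maximality defining $k$, $\tfrac{B}{w([k])}\ge \tfrac{v_k}{W-w([k])}\ge\tfrac{v_i}{W-w([k])}$. Hence $p_i\ge v_i\e_i(\bx)$ for every $i\in[k]$.

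Now suppose $O=\{i^*\}$, so $x_{i^*}=1$ and all other $x_i=0$, giving $\sum_j|w_j|(1-x_j)=W-|w_{i^*}|$ and $\e_{i^*}(\bx)=|w_{i^*}|/(W-|w_{i^*}|)$; we must show $\hat p\ge v_{i^*}|w_{i^*}|/(W-|w_{i^*}|)$. If $S_{-i^*}=\emptyset$ then $\hat p=B$, and Assumption~\ref{asmp:small-weights} applied to $i^*$ gives exactly $|w_{i^*}|v_{i^*}/(W-|w_{i^*}|)\le B=\hat p$. If $S_{-i^*}\ne\emptyset$ then $\hat p=|w_{i^*}|v_r/(W-|w_{i^*}|)$ with $r=\min\{i:i\in S_{-i^*}\}$, so after cancelling $|w_{i^*}|/(W-|w_{i^*}|)$ the claim reduces to $v_r\ge v_{i^*}$. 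Since we are in the branch $|w_{i^*}|>\sum_{i\in[k]\setminus\{i^*\}}|w_i|$ and $S_{-i^*}\ne\emptyset$, Proposition~\ref{prop:r>i} yields $r>i^*$, and the ordering $v_1\le\cdots\le v_n$ then gives $v_r\ge v_{i^*}$, completing this case. In all cases $p_i\ge v_i\e_i(\bx)$, so FairInnerProduct is individually rational.

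I do not expect a genuine obstacle here: the nontrivial combinatorial content has been offloaded to the defining property of $k$, to Assumption~\ref{asmp:small-weights}, and above all to Proposition~\ref{prop:r>i}, which is the only place the subtle interaction between $i^*$ and the greedy prefix matters. The one point requiring mild care is bookkeeping the two possible values of $\hat p$ and making sure the $\min$ in the $O=[k]$ payment is handled on both coordinates; the argument is otherwise a short chain of inequalities.
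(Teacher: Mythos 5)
Your proof is correct and follows essentially the same route as the paper's: in the $O=[k]$ branch you bound both arguments of the $\min$ below by $v_k/(W-w([k]))$ using the defining inequality of $k$ and the cost ordering, and in the $O=\{i^*\}$ branch you invoke Assumption~\ref{asmp:small-weights} when $\hat p=B$ and Proposition~\ref{prop:r>i} (giving $v_r\ge v_{i^*}$) otherwise. The only difference is that you make explicit the trivial case $i\notin O$, which the paper leaves implicit.
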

\begin{proof}
We divide the proof into two cases:
%\begin{itemize}
\\\textbf{Case I:} $O =[k]$. We know that $B/w([k]) \ge v_k/(W - w([k]))$ (by construction) and $v_{k+1} \ge v_k$ (by definition). Therefore, for all $i\le k$,
%\[
$p_i \ge \frac{|w_i|v_k}{W - w([k])} \ge \frac{|w_i|v_i}{W - w([k])} = c_i(\e_i).$
%\]
\\\textbf{Case II:} $O = \{i^*\}$. If $p_{i^*} = B$, then the mechanism is individually rational by Assumption~\ref{asmp:small-weights}. If $p_{i^*} = \frac{|w_{i^*}|v_r}{W - |w_{i^*}|}$, then, by Proposition~\ref{prop:r>i}, $v_r\ge v_{i^*}$ and therefore the mechanism is individually rational.
\qed
%\end{itemize}
\end{proof}

\begin{prop}
FairInnerProduct is dominant-strategy truthful.
\end{prop}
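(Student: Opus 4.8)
The plan is to show that no individual $i$ can strictly improve her utility $p_i(v_i',v_{-i}) - v_i \e_i(v_i',v_{-i})$ by reporting a false unit-cost $v_i' \ne v_i$. Since the mechanism's allocation is binary ($x_i \in \{0,1\}$) and an individual with $x_i=0$ receives payment $0$ and incurs cost $0$, such an individual can only gain by a misreport that flips her to $x_i = 1$ \emph{with positive net utility}; conversely an individual with $x_i = 1$ and nonnegative utility (guaranteed by individual rationality, already proved) can only lose by dropping to $x_i = 0$. So the crux is: (a) an individual currently allocated cannot raise her payment or lower her incurred privacy cost by misreporting while staying allocated, and (b) an individual currently unallocated cannot, by lying, become allocated at a profit. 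I would organize the argument by the same case split used for individual rationality, according to whether the true-report outcome is $O=[k]$ or $O=\{i^*\}$, and track how $k$, $i^*$, the set $S_{-i^*}$, the threshold index $r$, and the payments move as $v_i$ varies (note $\bw$, $B$ are fixed and only $i$'s reported cost changes, and recall $v_1\le\dots\le v_n$, so changing $v_i$ reorders the sorted list).

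The key observations I would establish are the following. First, $i^*=\argmax_i|w_i|$ and the weights $w$ do not depend on reported costs, so a misreport never changes $i^*$; similarly the quantity $|w_{i^*}| > \sum_{i\in[k]\setminus\{i^*\}}|w_i|$ and the membership of indices in $S_2$ (which only involves weights) are unaffected except through how $k$ moves. Second, in the branch $O=[k]$, each allocated $i\le k$ is paid $p_i = |w_i|\min\{B/w([k]),\, v_{k+1}/(W-w([k]))\}$ and incurs $c_i(\e_i) = |w_i| v_i/(W-w([k]))$ — crucially the payment depends on $i$'s own report \emph{only} through whether $i$ is among the cheapest $k$ and through the ``threshold'' cost $v_{k+1}$ of the first excluded individual, in the classic greedy-knapsack threshold-payment style. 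So an allocated individual reporting lower keeps the same allocation set and same payment (unless she pushes someone else out, which can only weakly help the threshold); reporting higher can only push her out of $[k]$, giving utility $0 \le$ her current (individually rational) utility. An unallocated individual (index $>k$ under truth, or $i^*$ in the single-item branch) who lies downward to enter $[k]$ would then be paid the threshold rate $v_{k+1}/(W-w([k]))$ per unit weight, which is at most her true per-unit cost $v_i/(W-w([k]))$ precisely because she was genuinely outside the cheapest-$k$ set — so entering is not profitable. Third, in the branch $O=\{i^*\}$, the payment $\hat p$ to $i^*$ is, by \eqref{def:hatp}, either $B$ (when $S_{-i^*}=\emptyset$) or $|w_{i^*}| v_r/(W-|w_{i^*}|)$ where $r=\min S_{-i^*}$ depends only on the \emph{other} individuals' reports (the index $i^*$ is excluded from every set in the definition of $S_{-i^*}$), hence $\hat p$ is independent of $v_{i^*}$; so $i^*$ cannot affect her own payment at all, and by Proposition~\ref{prop:r>i} and Assumption~\ref{asmp:small-weights} this payment already covers her cost, so she cannot gain by lying (the only other possibility, that her report moves the mechanism into the $O=[k]$ branch, is handled by the first case analysis, and there she would again face a threshold payment no better than her cost).

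Concretely, the steps in order would be: (1) reduce to the two questions (a)/(b) above using the binary allocation and the already-proved individual rationality; (2) handle misreports by any $i$ with $i \ne i^*$ when the truthful outcome is $O=[k]$ — showing that lowering $v_i$ keeps the allocation and (weakly) cannot increase her payment beyond the budget-capped threshold, and raising it only risks exclusion; (3) handle misreports by an individual outside $[k]$ (under truth) trying to enter — show the threshold payment rate $\min\{B/w([k]),\,v_{k+1}/(W-w([k]))\}$ is $\le v_i/(W-w([k]))$ for any such $i$, by the definition of $k$ as the \emph{largest} feasible index and the sorted order of costs, so entering yields nonpositive utility; (4) handle the $O=\{i^*\}$ branch by arguing $\hat p$ is a function of $v_{-i^*}$ only and that $i^*$'s only alternative is to trigger the $O=[k]$ branch, already covered; (5) finally handle the ``boundary'' misreports that change the value of $k$ itself or flip the condition $|w_{i^*}| > \sum_{i\in[k]\setminus\{i^*\}}|w_i|$, checking in each such transition that the mover's utility does not strictly increase. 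I expect the main obstacle to be step (5) together with the externality bookkeeping in step (2)–(3): because $\e_i(\bx)$ couples $i$'s cost to $w([k])$ (the total weight of \emph{all} compensated individuals), a misreport that changes the allocated set changes the normalizing denominator $W - w([k])$ in everyone's cost simultaneously, so I cannot simply invoke Myerson's monotone-allocation / threshold-payment characterization as in \cite{budget-feasible-mechanisms}; I have to verify directly that no such self-induced change in the denominator can be exploited, which is exactly why a hands-on case analysis is needed.
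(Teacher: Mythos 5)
Your overall strategy---a direct case analysis on how the allocated set and payments move under a misreport, explicitly avoiding Myerson's characterization because of the externality in $\e_i(\bx)$---is exactly the route the paper takes. You also correctly isolate the two structural facts that make the $O=\{i^*\}$ branch and the ``stay allocated'' subcases manageable: $\hat p$ is a function of $\bv_{-i^*}$ only, and the payment to an allocated individual is a threshold payment. However, the proposal defers precisely the steps where the proof could fail, and some of the intermediate claims you do state are not the right invariants, so there are genuine gaps.

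First, in your step (2) the claim that pushing another individual out of the allocated set ``can only weakly help the threshold'' is not what needs to be shown, and is not obviously true: when the allocated set changes from $[k]$ to some $M_\pi\subsetneq[k]$, \emph{both} the payment $|w_i|\min\{B/w(M_\pi),\,v_t/(W-w(M_\pi))\}$ and the cost $v_i|w_i|/(W-w(M_\pi))$ are renormalized by the new denominator, and the new threshold index $t$ is some previously allocated $j<k$. The correct invariant (which the paper proves) is that the new threshold cost satisfies $v_t\le v_i$, so that payment minus cost is nonpositive \emph{after} renormalization; your formulation compares payments across different normalizations. The same slip appears in step (3), where you compare the entering outsider's payment rate against $v_i/(W-w([k]))$ even though the relevant set is $M_\pi\neq[k]$ (the paper shows $k+1\notin M_\pi$ there). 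Second, step (4) dismisses the transition from a truthful outcome $O=[k]$ to a misreported outcome $O_\pi=\{i^*\}$ as ``already covered,'' but it is not: since $\hat p$ does not depend on $v_{i^*}$, one must separately show $\hat p\le |w_{i^*}|v_{i^*}/(W-|w_{i^*}|)$, which the paper does by exhibiting a concrete index $j^*\in S_{-i^*}$ ($j^*=k$ or $k-1$) with $v_{j^*}\le v_{i^*}$; nothing in your outline produces this witness. Finally, your step (5) is where the paper does most of its work (the bookkeeping via the permutation $\pi$, the sets $P_j$, $M_\pi$, and the implications relating $M_\pi$ to $[k]$ for $z>v_i$ versus $z<v_i$), and you state it only as an expectation of difficulty rather than an argument. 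So the skeleton matches the paper, but the load-bearing verifications are missing.
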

\begin{proof}
Fix any $\bv$ and assume that user $i$ reports a value $z\neq v_i$, while the remaining values $\bv_{-i}$ remain the same. Let $\bu$ be the resulting vector of values, \emph{i.e.},
%$$u_j=\begin{cases}v_j, &j\neq i \\z,& j=i.\end{cases}$$
$u_i=z$ and $u_j=v_j$,  for $j\neq i.$ 
The vector $\bu$ induces a new ordering of the users in terms of their reported values $u_i$, $i\in[n]$; let $\pi:[n]\to [n]$ be the permutation indicating the position of users under the new ordering. That is, $\pi$ is 1-1 and onto such that
if $u_{j}< u_{j'}$ then $\pi(j)<\pi(j')$, for all $j,j'\in [n]$.
For given $j\in[n]$, we denote the set of users preceding $j$ under this ordering by $P_j = \{j': \pi(j')\leq \pi(j)\}$. Note that al $j'\in P_j$ satisfy $u_j'\leq u_j$.
%Let for $j\in[n]$, let
%$$ w_{\pi}([j]) = \sum_{\ell: \pi(\ell)\leq \pi(j)}w_\ell. $$
Observe that if $z>v_i$ then
\begin{align}\label{wplzup}
w(P_j) = \begin{cases}
w([j]),& \text{for all }j<i\\
w([j])-|w_i|, &\text{for all }j>i\text{ s.t.}~\pi(j)<\pi(i)\\
w([i])+w(\{\ell:\ell>i\land\pi(\ell)<\pi(i)\}), &\text{for }j=i\\
w([i]), &\text{for all}j>i\text{ s.t.}~\pi(j)>\pi(i)
\end{cases}
\end{align}
while if $z<v_i$ then
\begin{align}\label{wplzdown}
w(P_j) = \begin{cases}
w([j]),& \text{for all }j<i \text{ s.t.}~\pi(j)<\pi(i)
\\
w([i])-w(\{\ell:\ell<i\land\pi(\ell)>\pi(i)\}), &\text{for }j=i\\
w([j])+|w_i|, &\text{for all }j<i\text{ s.t.}~\pi(j)>\pi(i)\\
w([i]), &\text{for all }j>i\end{cases}
\end{align}
Let
$M_{\pi} = \left\{j\in[n]:\frac{B}{w(P_j)}\geq \frac{u_{j}}{W-w(P_j)} \right\} $
  where $W=w([n])$. Then, by \eqref{wplzup}, if $z>v_i$ then $w(P_j)\leq w([j])$ for $j\neq i$ while $w(P_i)\geq w([i])$. As a result, if $z>v_i$, then
\begin{subequations}
\label{mpiup}
\begin{align}
\text{for } j\neq i, \text{if } j\in [k],\ &\text{ then } j\in M_{\pi}\\
\text{if }i\notin[k],\ &\text{ then }i\notin M_{\pi}
\end{align}
\end{subequations}
Similarly, from \eqref{wplzdown}, if $z<v_i$, then
\begin{subequations}
\label{mpidown}
\begin{align}
\text{for } j\neq i, \text{if } j\notin [k],\ &\text{ then } j\notin M_{\pi}\\
\text{if }i\in[k],\ &\text{ then }i\in M_{\pi}
\end{align}
\end{subequations}

Observe that, given the value vector $\bu$, the mechanism will output
$O_{\pi} = \{i^*\}$, if $|w_i^*| > w(M_{\pi}\setminus \{i^*\})$, and $O_{\pi} = M_{\pi}$ otherwise.
%$$O_{\pi} = \begin{cases}\{|w_i^*|\}, &\text{if }|w_i^*| > w(M_{\pi}\setminus \{i^*\})\\
%M_{\pi}, &\text{o.w}.\end{cases}
% $$
If $O_{\pi}=M_{\pi}$, users $j\in M_{\pi}$ are compensated by $p_j=w_j \min \left\{\frac{B}{w(M_{\pi})}, \frac{\min_{\ell:\ell\notin M_{\pi}} u_\ell}{W-w(M_{\pi})}\right\}.$ If $O_{\pi}=\{i^*\}$, the latter is compensated by $\hat{p}$ given by \eqref{def:hatp}.
We consider the following cases:\\
\textbf{Case I:} $O_{\pi}=M_{\pi}$. If $i\notin M_{\pi}$, then $p_i=\epsilon_i=0$, so since FairInnerProduct is individually rational, $i$ has no incentive to report $z$. Suppose thus that $i\in M_{\pi}$.
We consider the following subcases:\\
\textbf{Case I(a)}: $i\notin[k]$. Then $v_i\geq v_{k+1}$. Since $i\in M_{\pi}$ but $i\notin[k]$, \eqref{mpiup} implies that $z<v_i$.  By \eqref{mpidown}  $k+1\notin M_{\pi}$. Thus $p_i\leq |w_i| v_{k+1}/w(M_{\pi})\leq |w_i| v_i/w(M_{\pi})$.\\
\textbf{Case I(b)}: $i\in[k]$.  We will first show that $M_{\pi}\setminus [k] = \emptyset$. Suppose, for the sake of contradiction, that $M_{\pi}\setminus [k]\neq \emptyset$. Then $M_\pi\setminus[k]$ must contain an element different than $i$; this, along with \eqref{mpidown} implies that $z>v_i$. If $\pi(i)<\pi(k+1)$, then by \eqref{wplzup} $w (P_j) =w([j])$ and $j\notin M_{\pi}$ for all $j\geq k+1$, which contradicts that $M_\pi\setminus [k]$ is non-empty. Hence, $\pi(i)>\pi(k+1)$; this however implies that
$w(P_i)\geq w([k+1])$, by \eqref{wplzup}, and that $z\geq v_{k+1}$. Thus
$\frac{B}{w(P_i)} \leq \frac{B}{ w([k+1]) } < \frac{v_{k+1}}{W-w([k+1])}\leq \frac{z}{W-w(P_i)},$ so $i\notin M_{\pi}$, a contradiction.
Hence $M_{\pi}\setminus [k]=\emptyset$.

Next we will show that the original output $O = [k]$. Suppose, for the sake of contradiction, that $O=\{i^*\}$. Then $|w_{i^*}|> w([k]\setminus \{i^*\})$ while $|w_{i^*}|\leq w(M_{\pi}\setminus \{i^*\}).$ Thus, $M_{\pi}\setminus [k]\neq \emptyset$, a contradiction.  Thus, $O=[k]$.

If $O_{\pi}=M_{\pi}=[k]$, then since $O=[k]$, user $i$ receives the same payoff, so it has no incentive to report $z$. Suppose that $M_{\pi}\neq [k]$. Since $M_{\pi}\setminus [k]=\emptyset$, it must be that $[k]\setminus M_{\pi}\neq \emptyset$. By \eqref{mpiup}, this implies $z<v_i$. If $i<k$, \eqref{wplzdown} implies that $k\in M_{\pi}$ and so do all $j$ s.t.~$\pi(j)<\pi(k)$. Thus, $[k]=M_{\pi}$, a contradiction. If $i=k$ and $z<v_i$, then it is possible that $j\notin M_{\pi}$ for some $j<k$. Thus,
%\[
$p_i \le \frac{|w_i| v_k}{w(M_{\pi})} = \frac{|w_i| v_i}{w(M_{\pi})}$
%\]
 and so $i$ has no incentive to report $z$.

\textbf{Case II}. $O_{\pi}=\{i^*\}$. If $i\neq i^*$, then $i$'s payoff is obviously zero, so it has no incentive to report $z$. Suppose thus that $i=i^*$.
We consider the following two subcases. \\
\textbf{Case II(a).} $O=\{i^*\}$. Observe that $S_{-i^*}$ and $\hat{p}$ do not depend on $v_{i^*}$. Thus, since $O=\{i^*\}$,  $i$ receives the same payment $\hat{p}$, so it has no incentive to misreport its value.\\
\textbf{Case II(b)} $O=[k]$. Then $|w_{i^*}|\leq w([k]\setminus \{i^*\})$ while $|w_{i^*}|> w(M_{\pi}\setminus \{i^*\}).$ Thus, $[k]\setminus M_{\pi}$ must contain an element different than $i^*$. From \eqref{mpiup}, this implies that $z<v_i$. If $i<k$,  \eqref{wplzdown} implies that $k\in M_{\pi}$ and so do all $j$ s.t.~$\pi(j)<\pi(k)$. Thus, $[k]=M_{\pi}$, a contradiction.

Assume thus that $i\geq k$. Then $v_i\geq v_k$. Let
%$$j^*=\begin{cases}k ,&\text{ if }i>k\\
%k-1,&\text{ if }i=k\\
%\end{cases}$$
$j^*=k$ if $i>k$ and $j^*=k-1$ if $i=k$.
Observe that $j^*\in S_{-i^*}$: indeed, it is in $S_1$ since $i\in [k]$, by the definition of $k$, and it is in $S_2$ because  $|w_{i^*}|\leq w([k]\setminus \{i^*\})$. Hence
$\hat{p}\leq \frac{|w_i| v_{j^*}}{W-|w_i|}\leq \frac{w_i v_{k}}{W-|w_i|} \le \frac{|w_i| v_{i}}{W-|w_i|}$
so $i$'s payoff is at most zero, so it has no incentive to misreport its value.
\end{proof}

\subsection{Approximation Ratio}
In this section we prove that FairInnerProduct is 5-approximate with respect to $OPT$.

\subsubsection{Optimal Continuous Canonical Laplace Mechanism}
We first characterize an individually rational, budget feasible, continuous canonical Laplace mechanism that has optimal distortion. Consider the fractional relaxation of \eqref{eq:budget-constr-prob}.
\begin{subequations}
\label{P1}
\begin{align}
\text{maximize} \quad&\sum_{i=1}^n |w_i|x_i  &\displaybreak[0]\\
\text{subject to}\quad &
\label{p1':indiv-rationality}  p_i \ge c_i(\e_i) = v_i\e_i(\bx),\quad \forall i \in [n]\\
\label{p1':budget-feasibility} & \sum_{i=1}^n p_i \le B \\
&\ 0\le x_i \le 1,\quad\forall i\in[n]
\end{align}
\end{subequations}
where
%\begin{align*}
$\e_i(\bx) = \frac{|w_i|x_i}{\sum_i |w_i|(1-x_i)}$.
%\end{align*}
A budget feasible, individually rational, (but not necessarily discrete or truthful)  canonical Laplace mechanism for the inner product has a minimal distortion among all such mechanisms if given input  $(\bv, \bw, B)$ it outputs $(\bx^*, \bp^*)$, where the latter constitute an  optimal solution to the above problem.  This characterization will yield the approximation guarantee of the DCLEF mechanism\footnote[1]{An analogous characterization of the budget-limited knapsack mechanism in \cite{budget-feasible-mechanisms} can be used to show that the mechanism is 5-approximate instead of 6-approximate.}.
%Observe that the cost, $c_i(\e_i)$ incurred by individual $i$ under a canonical Laplace mechanism  is given by $c_i(\e_i) = v_iw_ix_i/\s$. So \eqref{p1:indiv-rationality} corresponds to individual rationality. Also, \eqref{p1:budget-feasibility} corresponds to budget feasibility. Therefore, there is a one-to-one correspondence between feasible points of P1 and individually rational, budget feasible, canonical Laplace mechanisms. Further, the objective function of problem P1 is exactly the distortion of a canonical Laplace mechanisms. Thus, the optimal point of P1 corresponds to an individually rational, budget feasible canonical Laplace mechanism that minimizes distortion. Next, we will characterize the optimal point of P1.

\begin{lem}\label{lem:opt-canon-laplace}
Recall that $v_1\leq v_2\leq \ldots \leq v_n$. For  $0 \le k \le n$,  define $p(k) := \sum_{i=k+1}^n |w_i|$, if $0 \le k \le n-1$, and $p(n) := 0$. For  $0 \le k \le n$,  define $q(0) := 0$, and $q(k) := \sum_{i=1}^k v_i|w_i|$, if $1 \le k \le n$. Define %$\ell$ as
%\[
$\ell := \min\left\{k:\ \forall i > k,q(i)-B p(i)>0 \right\}$
%\]
%Let $a\in [0,1]$ such that $h(a) = \min\left\{\frac{p(k^*)}{q(k^*)}, \frac{2\sqrt{2}v_{k^*+1}}{B}\right\}$.
 and let %let $x^*_i,\ i\in [n],$ to be
\[	
x^*_i := \left\{
\begin{array}{rl}
1,& \text{ if }i \le \ell\\
\frac{Bp(\ell)-q(\ell)}{(v_{\ell+1}+B)|w_{\ell+1}|},& \text{ if }i = \ell+1\\
0,& \text{ if }i > \ell+1
\end{array},~\text{and}~p^*_i=v_i|w_i|x_i^*/\sigma(\bx^*)\quad i\in [n].
\right.
\]
Then $(\bx^*,\bp^*)$ is an optimal solution to \eqref{P1}.
\end{lem}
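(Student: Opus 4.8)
The plan is to prove optimality of $(\bx^*, \bp^*)$ for the fractional program \eqref{P1} by a direct exchange/greedy argument, exploiting the fact that once we substitute the individual rationality constraints at equality, \eqref{P1} becomes a fractional knapsack-type problem. First I would observe that in any optimal solution we may assume $p_i = v_i \e_i(\bx) = v_i |w_i| x_i / \sigma(\bx)$ exactly (lowering $p_i$ to this value only relaxes the budget constraint \eqref{p1':budget-feasibility} and does not affect the objective), so the budget constraint becomes $\sum_i v_i|w_i|x_i \le B\,\sigma(\bx) = B\,\D\sum_i|w_i|(1-x_i)$. Dividing through by $\D$ (it is harmless since $\D>0$) and rearranging, this is equivalent to
\[
\sum_{i=1}^n (v_i + B)|w_i| x_i \;\le\; B\,W,
\]
using $W = \sum_i |w_i|$. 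So \eqref{P1} is exactly: maximize $\sum_i |w_i| x_i$ over the box $[0,1]^n$ subject to a single linear ``budget'' constraint $\sum_i (v_i+B)|w_i|x_i \le BW$.

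**Next I would apply the classical greedy solution to the fractional knapsack.** Each ``item'' $i$ has value $|w_i|$ and weight $(v_i+B)|w_i|$ in the knapsack constraint, so its value-to-weight ratio is $\frac{|w_i|}{(v_i+B)|w_i|} = \frac{1}{v_i+B}$, which is a decreasing function of $v_i$. Since $v_1 \le v_2 \le \cdots \le v_n$, the greedy rule fills items in index order $1, 2, \ldots$ until the budget $BW$ is exhausted: set $x_i=1$ for $i \le \ell$, fractionally fill item $\ell+1$, and set $x_i = 0$ afterward, where $\ell$ is the largest index for which the first $\ell$ items fit, i.e. $\sum_{i=1}^\ell (v_i+B)|w_i| \le BW$. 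I would then verify that this threshold coincides with the $\ell$ defined in the lemma: the condition $\sum_{i=1}^k (v_i+B)|w_i| \le BW$ rewrites (using $\sum_{i=1}^k(v_i+B)|w_i| = q(k) + B(W - p(k))$, since $\sum_{i=1}^k |w_i| = W - p(k)$) as $q(k) - Bp(k) \le 0$, and $\ell := \min\{k : \forall i>k,\ q(i)-Bp(i)>0\}$ is precisely the last index where this holds, since $q(k)-Bp(k)$ is nondecreasing in $k$ (each increment adds $(v_{k+1}+B)|w_{k+1}| > 0$). For the fractional coordinate, solving $q(\ell) + B(W-p(\ell)) + (v_{\ell+1}+B)|w_{\ell+1}| x = BW$ gives $x = \frac{Bp(\ell) - q(\ell)}{(v_{\ell+1}+B)|w_{\ell+1}|}$, matching $x^*_{\ell+1}$; I would also check $0 \le x^*_{\ell+1} \le 1$ using the definition of $\ell$ (numerator $\ge 0$ because $q(\ell)-Bp(\ell)\le 0$; $\le$ denominator because $q(\ell+1)-Bp(\ell+1)>0$).

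**Finally I would argue optimality rigorously via an exchange argument or LP duality.** The cleanest route: since \eqref{P1} (after the reduction) is a fractional knapsack, the greedy solution is provably optimal — I can cite the standard argument, or exhibit the dual certificate. Taking the dual multiplier $\mu = \frac{1}{v_{\ell+1}+B}$ for the budget constraint and $\nu_i = |w_i|(\frac{1}{v_i+B} - \frac{1}{v_{\ell+1}+B})^+$ for the constraints $x_i \le 1$, one checks complementary slackness: for $i \le \ell$, $\frac{1}{v_i+B} \ge \frac{1}{v_{\ell+1}+B}$ so $\nu_i$ absorbs the slack and $x_i^*=1$; for $i = \ell+1$, $\nu_{\ell+1}=0$ and the budget is tight; for $i>\ell+1$, reduced cost is $\le 0$ and $x_i^*=0$. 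This certifies optimality of $\bx^*$, and then $\bp^*$ defined by $p_i^* = v_i|w_i|x_i^*/\sigma(\bx^*)$ satisfies \eqref{p1':indiv-rationality} with equality and \eqref{p1':budget-feasibility} with equality (or slack if $\ell = n$), so $(\bx^*,\bp^*)$ is feasible and optimal. **The main obstacle** I anticipate is purely bookkeeping: carefully justifying the substitution $p_i = v_i\e_i(\bx)$ and handling the coupled denominator $\sigma(\bx) = \D\sum_i |w_i|(1-x_i)$ cleanly in the linearization — once the constraint is shown to be genuinely linear in $\bx$, the rest is textbook fractional knapsack, plus the edge cases $\ell = 0$ and $\ell = n$ (where there is no fractional coordinate and the formulas degenerate sensibly).
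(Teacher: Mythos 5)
Your proposal is correct and takes essentially the same route as the paper: the paper likewise tightens the individual-rationality constraints to equality, reduces \eqref{P1} to the linear program \eqref{P2} (your single knapsack constraint $\sum_i (v_i+B)|w_i|x_i \le BW$ is just \eqref{P2}'s constraint rearranged), identifies $\ell$ via the monotonicity of $g(k)=q(k)-Bp(k)$ with $g(0)<0<g(n)$, and certifies optimality through KKT/duality with the same multiplier $\lambda^* = 1/(v_{\ell+1}+B)$. One small slip: your dual variables for the box constraints should be $\nu_i = |w_i|\,(v_{\ell+1}-v_i)^+/(v_{\ell+1}+B)$ (the paper's $\mu_i^*$), not $|w_i|\bigl(\tfrac{1}{v_i+B}-\tfrac{1}{v_{\ell+1}+B}\bigr)^+$, which is off by a factor of $v_i+B$ and does not satisfy dual feasibility as written; your fallback of invoking the standard fractional-knapsack greedy argument is nonetheless valid.
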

\begin{proof}
We show first that the quantities $\ell$ and $x_i^*$ are well defined. For $p(i)$, $q(i)$, $i\in \{0,\ldots,n\}$, as defined in the statement of the theorem, observe that $g(i)=q(i)-Bp(i)$ is strictly increasing and that $g(0)<0$ while $g(n)>0$. Hence, $\ell$ is well defined; in particular, $\ell\leq n-1$. The monotonicity of $g$ implies that $g(i)\leq 0 $ for all $0\leq i\leq \ell$ and $g(i)>0$ for $i>\ell$.
For $a\in [0,1]$, let $h(a)= q(\ell)+v_{\ell+1}|w_{\ell+1}|a -B(p(\ell+1)+|w_{\ell+1}|(1-a)).$ Then $h(0)= g(\ell)\leq 0$ and $h(1)=g(\ell+1)>0$. As $h(a)$ is continuous and strictly increasing in the reals, there exists a unique $a^*\in [0,1]$ s.t.~$h(a)=0$; since $h$ is linear, it is easy to verify that $a^*=q(\ell)-Bp(\ell)/(v_{\ell}+B)|w_{\ell+1}|=x^*_{\ell+1}$ and, hence, $x^*_{\ell+1}\in [0,1]$.
%Armed with these observations, we show that \eqref{P1} is equivalent to a convex optimization problem.
%\begin{lem}\label{lem:equiv-P1-P2}
To solve \eqref{P1},  we need only consider cases for which constraint \eqref{p1':indiv-rationality} is tight, \ie, $p_i = v_i\e_i(\bx)$.  Any solution for which \eqref{p1':indiv-rationality} is not tight can be converted to a solution where it is; this will only strengthen constraint \eqref{p1':budget-feasibility}, and will not affect the objective.
Thus, \eqref{P1} is equivalent to:
\begin{subequations}
\label{P2}
\begin{align}
\text{Max.} \quad& F(\bx)=\sum_{i=1}^n |w_i|x_i \\
\text{subj.~to}\quad &\sum_{i=1}^n  v_i|w_i|x_i - B\sum_{i=1}^n w_i(1-x_i)\leq 0, \quad
{\bx\in [0,1]^n}
\end{align}
\end{subequations}
%Observe that the objective of \eqref{P1} is given by $W - \sqrt{F(\bx)}$, which implies that maximizing the objective of \eqref{P1} is equivalent to minimizing $F(\bx)$. Given a feasible solution $\bf{x}$ of \eqref{P2}, let $p_i=v_iw_ix_i/\sigma(\bf{x})$; then  $\bf{x},\bf{p}$ is a feasible solution of $\eqref{P1}$, and the objective for both problems is $F(\bf{x})$. Similarly, given a feasible solution $\bf{x},\bf{p}$ of P1, this can always be converted to a solution $\bf{x},\bf{p}'$ where $p_i'=v_iw_ix_i/\sigma(\bf{x})$ (\emph{i.e.}, the constraints \eqref{p1':indiv-rationality} are tight), which is feasible as it strengthens constraint \eqref{p1':budget-feasibility}. Hence, $\bf{x}$ is also a feasible solution of P2. Since any solution of one problem can be converted to a solution of the other problem that evaluates to the same objective, \eqref{P1} and \eqref{P2} are equivalent.
%\end{lem}
%\begin{proof}
%\end{proof}

It thus suffices to show that $\bx^*$ is an optimal solution to \eqref{P2}. The latter is a linear program and its Lagrangian is % Therefore, KKT conditions are necessary and sufficient for optimality. %, \emph{i.e.} any $\bx\in [0,1]^n$ for which KKT conditions hold must be an optimal solution to P2.

%\begin{align*}
$$L(\bx,\l,{ \mu,\nu}) =  -F(\bx) + \lambda\big(\sum_{i=1}^n  v_i|w_i|x_i - B\sum_{i=1}^n |w_i|(1-x_i)\big)
  + \sum_{i=1}^n \mu_i(x_i-1) -\sum_{i=1}^n \nu_ix_i.$$
%\end{align*}
It is easy to verify that $\bx^*$ satisfies the KKT conditions of \eqref{P2} with
$\lambda^* = \frac{1}{v_{\ell+1}+B}$, $\mu_i^*= \id_{(i\le\ell)}\cdot \frac{v_{\ell+1}-v_i }{v_{\ell+1}+B} |w_i|$, and $\nu_i^* =\id_{(i>\ell+1)}\cdot\frac{v_i - v_{\ell+1}}{v_{\ell+1}+B} |w_i|$. 
%$$\mu_i^* := \begin{cases}
%\frac{v_{\ell+1}-v_i }{v_{\ell+1}+B} |w_i|,  &i\le\ell\\
%0, &i > \ell
%\end{cases};\quad 
%\nu_i^* :=
%\begin{cases}
%0, & i\leq \ell+1\\
%\frac{v_i - v_{\ell+1}}{v_{\ell+1}+B} |w_i|, &i> \ell+1
%\end{cases}\qed
%~~~~~~~~~~~~~~~~~~~~~~~~~~\qedhere
%$$
\end{proof}

A canonical Laplace mechanism that outputs $(\bx^*, \bp^*)$ given by Lemma~\ref{lem:opt-canon-laplace} would be optimal. Moreover, the objective value $S(\bx^*;\bw) \ge OPT$.

\begin{prop}\label{lem:l-k}
Let $\ell$ be as is defined in Lemma~\ref{lem:opt-canon-laplace}, and $k$ as defined in FairInnerProduct. Then, $\ell \ge k$.
\end{prop}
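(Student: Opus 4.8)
The plan is to reduce the claim to a single sign computation: it suffices to show $g(k)\le 0$, where $g(i):=q(i)-Bp(i)$ is the strictly increasing function from the proof of Lemma~\ref{lem:opt-canon-laplace}. Indeed, that proof establishes that $g(0)<0<g(n)$, that $g$ is strictly increasing, and hence that $g(i)\le 0$ exactly for $i\le \ell$ while $g(i)>0$ for $i>\ell$. Therefore, once we know $g(k)\le 0$, we cannot have $k>\ell$, and so $\ell\ge k$, which is the claim.

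To bound $g(k)$, first recall the identifications $p(k)=\sum_{i=k+1}^n|w_i|=W-w([k])$ and $q(k)=\sum_{i=1}^k v_i|w_i|$. By the definition of $k$ in FairInnerProduct we have $\frac{B}{w([k])}\ge \frac{v_k}{W-w([k])}$; note that $W-w([k])>0$ here, since the defining inequality fails at $k=n$ (its right-hand side would be infinite), so $k\le n-1$. Cross-multiplying by the two positive denominators gives $B\bigl(W-w([k])\bigr)\ge v_k\, w([k])=v_k\sum_{i=1}^k|w_i|$. On the other hand, since $v_1\le \cdots \le v_n$, we have $q(k)=\sum_{i=1}^k v_i|w_i|\le v_k\sum_{i=1}^k|w_i|=v_k\,w([k])$. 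Chaining these two inequalities yields $q(k)\le B\bigl(W-w([k])\bigr)=Bp(k)$, i.e.\ $g(k)=q(k)-Bp(k)\le 0$, which completes the argument.

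I do not expect any real obstacle here beyond bookkeeping. The crux is simply the observation that the threshold condition defining $k$ is precisely $v_k\,w([k])\le B\,p(k)$, which is stronger than $q(k)\le B\,p(k)$ by monotonicity of the unit-costs, and the latter is exactly $g(k)\le 0$. The only points requiring a word of care are that $k$ is well defined and at least $1$ (which follows from Assumption~\ref{asmp:small-weights} applied to $i=1$, ensuring the defining inequality holds at $k=1$) and that $k\le n-1$, so that $p(k)>0$ and the cross-multiplication step is valid.
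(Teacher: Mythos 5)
Your proof is correct and is essentially the paper's argument in contrapositive form: the paper assumes $\ell<k$ and chains $B(W-w([k]))\le B(W-w([\ell+1]))<q(\ell+1)\le q(k)\le v_k\,w([k])$ to contradict the defining inequality of $k$, whereas you show directly that $q(k)\le v_k\,w([k])\le B\,p(k)$, i.e.\ $g(k)\le 0$, and invoke the monotonicity of $g$ from Lemma~\ref{lem:opt-canon-laplace}. The key inequalities (monotonicity of the $v_i$ and the threshold condition defining $k$) are identical, so this is the same proof with slightly tidier bookkeeping.
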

\begin{proof}
%By contradiction. 
Assume that $\ell < k$. Then
%\begin{align*}
$$
B(W - w([k])) \le B(W - w([\ell+1])) < \sum_{i=1}^{\ell+1} |w_i| v_i 
\le \sum_{i=1}^k |w_i|v_i \le v_k\sum_{i\le k} w_i = v_kw([k]).
$$
%\end{align*}
However, this contradicts the fact that $B/w(k) \ge v_k/(W - w([k]))$.
\end{proof}

\begin{prop}\label{prop:wk+1}
Let $\{x^*_i\}$ and $\ell$ be as defined in Lemma~\ref{lem:opt-canon-laplace}, and $k$ as defined in FairInnerProduct. Then, $w([k+1]) > \sum_{i=k+1}^{\ell+1} |w_i|x^*_i$.
\end{prop}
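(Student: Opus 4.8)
The plan is to combine the budget feasibility of the fractional optimum $(\bx^*,\bp^*)$ with the maximality of the index $k$ in FairInnerProduct. First I would record the shape of $\bx^*$: by Proposition~\ref{lem:l-k} we have $\ell\ge k$, so $x_i^*=1$ for every $i\le k$ and $x_i^*=0$ for every $i>\ell+1$. Writing $T:=\sum_{i=k+1}^{\ell+1}|w_i|x_i^*$ for the quantity to be bounded, it follows that $\sum_{i=1}^n|w_i|x_i^*=w([k])+T$, and hence $\sum_{i=1}^n|w_i|(1-x_i^*)=W-w([k])-T$.

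Next I would use feasibility. Since $(\bx^*,\bp^*)$ is a feasible point of \eqref{P1}, constraints \eqref{p1':indiv-rationality} and \eqref{p1':budget-feasibility} give $\sum_{i=1}^n v_i\e_i(\bx^*)\le\sum_{i=1}^n p_i^*\le B$; substituting $\e_i(\bx^*)=|w_i|x_i^*/\sum_j|w_j|(1-x_j^*)$ and clearing denominators yields
\[
\sum_{i=1}^n v_i|w_i|x_i^*\;\le\; B\big(W-w([k])-T\big).
\]
On the other hand, since $v_1\le\cdots\le v_n$ and all terms are nonnegative, the left-hand side is at least $\sum_{i=k+1}^{\ell+1}v_i|w_i|x_i^*\ge v_{k+1}T$. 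Combining the two estimates gives $v_{k+1}T\le B(W-w([k])-T)$, hence $(v_{k+1}+B)T\le B\,(W-w([k]))\le BW$, i.e. $T\le BW/(v_{k+1}+B)$.

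Finally I would invoke the definition of $k$. Because $k$ is the \emph{largest} integer satisfying $B/w([k])\ge v_k/(W-w([k]))$, the index $k+1$ fails this inequality, so $B/w([k+1])<v_{k+1}/(W-w([k+1]))$; cross-multiplying and rearranging gives $BW<(v_{k+1}+B)\,w([k+1])$, that is $BW/(v_{k+1}+B)<w([k+1])$. Chaining with the bound on $T$ from the previous step gives $\sum_{i=k+1}^{\ell+1}|w_i|x_i^*=T\le BW/(v_{k+1}+B)<w([k+1])$, which is the claim.

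The calculation itself is short; the points that need attention are (i) using Proposition~\ref{lem:l-k} so that the partial sum $\sum_{i\le k}|w_i|x_i^*$ really equals $w([k])$ and the index bookkeeping for $\bx^*$ is valid, (ii) the degenerate case $k+1=n$, where $W-w([k+1])=0$ and the rearrangement to $BW<(v_{k+1}+B)w([k+1])$ must instead be read off from $0<v_{k+1}w([k+1])$ (alternatively one checks $\ell=k$ here and bounds $T=|w_{k+1}|x_{k+1}^*\le|w_{k+1}|<w([k+1])$ directly), and (iii) the harmless division by $v_{k+1}+B>0$. I do not anticipate a genuine obstacle beyond handling these boundary situations.
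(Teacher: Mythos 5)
Your argument is correct and is essentially the paper's own proof in contrapositive form: both rest on the budget/individual-rationality constraint of the fractional optimum $\bx^*$, the monotonicity $v_i\ge v_{k+1}$ for $i\ge k+1$, and the maximality of $k$ in FairInnerProduct (the paper derives $B/w([k+1])\ge v_{k+1}/(W-w([k+1]))$ as a contradiction, while you chain the two bounds through $BW/(v_{k+1}+B)$ directly). Your explicit treatment of the boundary case $k+1=n$ is a harmless refinement of the paper's "$\ell=k$ is trivial" remark.
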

\begin{proof}
If $\ell = k$, the statement is trivially true. Consider thus the case  $\ell > k$. %The proof is by contradiction. 
Assume that $\sum_{i=1}^{k+1} w_i \le \sum_{i=k+1}^{\ell+1} |w_i|x^*_i$. Then,
\begin{align*}
\frac{B(W - w([k+1]))}{w([k+1])} &\ge \frac{B(W - \sum_{i=k+1}^{\ell+1} |w_i|x^*_i)}{\sum_{i=k+1}^{\ell+1} |w_i|x^*_i} % \text{ (by assumption)}\\
%&
\ge \frac{B(W - \sum_{i=1}^{\ell+1} |w_i|x^*_i)}{\sum_{i=k+1}^{\ell+1} |w_i|x^*_i}\displaybreak[0]\\
%&= \frac{\sum_{i=1}^{\ell+1} |w_i|v_ix^*_i}{\sum_{i=k+1}^{\ell+1} |w_i|x^*_i} %\text{ (by definition of $\ell$ and $x^*_i$'s)}\\
&
\ge \frac{\sum_{i=k+1}^{\ell+1} |w_i|v_ix^*_i}{\sum_{i=k+1}^{\ell+1} |w_i|x^*_i}\ge v_{k+1}
\end{align*}
since $v_{k+1} \le v_i$ for all $(k+1) \le i \le \ell$.
 However, this contradicts the fact that $B/w([k+1]) < v_{k+1}/(W - w([k+1]))$.
\end{proof}

Now we will show that $S(\bx;\bw) \ge \frac15 OPT$ using Proposition~\ref{prop:wk+1}. First notice that since \eqref{P1} is a relaxation of \eqref{eq:budget-constr-prob}, $OPT \le S(\bx^*;\bw)$, where $\{x^*_i\}$ are defined in Lemma~\ref{lem:opt-canon-laplace}. Therefore, we have that
%\begin{align*}
$OPT %&
\le S(\bx^*;\bw) = \sum_{i\le k} |w_i| + \sum_{i=k+1}^{\ell+1} |w_i|x^*_i %\\
 \stackrel{\text{Prop.~\ref{prop:wk+1}}}{<} w([k]) + w([k+1]) %\text{ (by Proposition~\ref{prop:wk+1})}\\
%= 2w([k]) + w_{k+1} 
\le 2w([k]) + |w_{i^*}|
$
%\end{align*}
It follows that if $O = [k]$, it implies $w([k]) \ge |w_{i^*}|$ and therefore $w([k]) =  S(\bx;\bw) \ge \frac13 OPT$. On the other hand, if $O = \{i^*\}$, then $|w_{i^*}| > \sum_{j\in [k+1]\setminus\{i^*\}} |w_j|$, which implies $2w_{i^*} > w([k])$. Therefore, $OPT \le 2w([k]) + |w_{i^*}| < 5|w_{i^*}| = 5 S(\bx; \bw)$.%\qed

\subsection{The Uniform-Weight Case}
In this section, we prove that when all weights are equal, FairInnerProduct is 2-approximate with respect to $OPT$.

Let $|w_i| = u$ for all $i\in [n]$. First, observe that in this case, FairInnerProduct always outputs $O =[k]$. Therefore, $S(\bx; \bw) = ku$. We use this observation to prove the result.

\begin{lem}
\label{lem:2k>l-1}
Assume that for all $i\in [n],\ |w_i| = u$. Then, $S(\bx; \bw) \ge \frac12 OPT$.
\end{lem}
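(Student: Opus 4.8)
The plan is to bound $OPT$ from above by $2ku$; this suffices, since in the uniform-weight case FairInnerProduct outputs $O=[k]$ (as already observed), so that $S(\bx;\bw)=\sum_{i\in[k]}|w_i|=ku$. I will establish this bound in two steps: (i) $OPT\le u\ell$, where $\ell$ is the index defined in Lemma~\ref{lem:opt-canon-laplace}, and (ii) $\ell\le 2k$. Together they give $OPT\le u\ell\le 2ku=2S(\bx;\bw)$.

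For (i), note that \eqref{P1} is a continuous relaxation of \eqref{eq:budget-constr-prob}, so $OPT\le S(\bx^*;\bw)$, and by Lemma~\ref{lem:opt-canon-laplace} the optimal fractional solution satisfies $S(\bx^*;\bw)=\sum_{i=1}^n|w_i|x^*_i=u\big(\ell+x^*_{\ell+1}\big)$ in the uniform case, with $x^*_{\ell+1}=\frac{B(n-\ell)-\sum_{i\le\ell}v_i}{v_{\ell+1}+B}$. Since the definition of $\ell$ forces $\sum_{i\le\ell+1}v_i>B(n-\ell-1)$, a one-line rearrangement gives $x^*_{\ell+1}<1$, hence $OPT\le S(\bx^*;\bw)<u(\ell+1)$. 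But $OPT=\sum_i|w_i|x_i=u\cdot|\{i:x_i=1\}|$ is an integer multiple of $u$, so $OPT\le u\ell$.

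For (ii), I exploit the maximality in the two definitions. Unfolding the definition of $k$ in the uniform case, $k$ is the largest integer with $B(n-k)\ge kv_k$, and since $k+1$ does not satisfy this, $B(n-k-1)<(k+1)v_{k+1}$; on the other hand $\ell$ is the largest integer with $\sum_{i\le\ell}v_i\le B(n-\ell)$. Suppose, towards a contradiction, that $\ell\ge 2k+1$. Then, using $v_1\le\dotsc\le v_n$,
\[
\sum_{i=1}^{\ell}v_i\ \ge\ \sum_{i=k+1}^{\ell}v_i\ \ge\ (\ell-k)\,v_{k+1}\ \ge\ (k+1)\,v_{k+1}\ >\ B(n-k-1)\ \ge\ B(n-\ell),
\]
where the last inequality uses $\ell\ge k+1$. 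This contradicts $\sum_{i\le\ell}v_i\le B(n-\ell)$, so $\ell\le 2k$, completing the proof.

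I expect step (ii) to be the crux: it is the only place where one must simultaneously use that $k$ is \emph{maximal} (to lower-bound $v_{k+1}$) and that $\ell$ is \emph{maximal} (to upper-bound $\sum_{i\le\ell}v_i$), and glue the two with the sortedness of the costs. The remaining ingredients — the identity $S(\bx;\bw)=ku$, the relaxation inequality $OPT\le S(\bx^*;\bw)$, and integrality of $OPT$ — are routine given Lemma~\ref{lem:opt-canon-laplace} and the earlier discussion. One minor point to keep in mind is that the all-ones allocation $\bx=\mathbf{1}$, for which $\s(\bx)=0$ and the payments are infinite, is infeasible whenever some $v_i>0$, so it never interferes with the value of $OPT$.
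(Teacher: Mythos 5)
Your proof is correct. It shares the paper's skeleton --- upper-bound $OPT$ by the fractional optimum $S(\bx^*;\bw)$ of \eqref{P1} via Lemma~\ref{lem:opt-canon-laplace}, then exploit the fact that $OPT$ is an integer multiple of $u$ --- but you obtain the key intermediate inequality by a different argument. The paper writes $S(\bx^*;\bw)=w([k])+\sum_{i=k+1}^{\ell+1}|w_i|x^*_i$ and invokes Proposition~\ref{prop:wk+1} (a general-weight statement, proved by contradicting the maximality of $k$) to bound the tail by $w([k+1])$, giving $OPT<(2k+1)u$ and then integrality. You instead prove the uniform-weight fact $\ell\le 2k$ directly, by contradicting the feasibility of $\ell$ in the relaxed problem: if $\ell\ge 2k+1$, the sortedness of the $v_i$ together with the failure of index $k+1$ in the definition of $k$ forces $\sum_{i\le\ell}v_i\ge(\ell-k)v_{k+1}\ge(k+1)v_{k+1}>B(n-k-1)\ge B(n-\ell)$. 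In the uniform case the two intermediate facts are equivalent ($\ell+x^*_{\ell+1}<2k+1$ iff $\ell\le 2k$, given $0\le x^*_{\ell+1}<1$), so neither is stronger; your version is self-contained and slightly more elementary, needing only the threshold characterizations of $k$ and $\ell$ and the single fact $x^*_{\ell+1}<1$, whereas the paper's reuses machinery (Propositions~\ref{lem:l-k} and~\ref{prop:wk+1}) already established for the general 5-approximation. Your closing remark that $\bx=\mathbf{1}$ is infeasible is correct but not needed, since Lemma~\ref{lem:opt-canon-laplace} already yields $\ell\le n-1$.
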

\begin{proof}
Observe that
%\begin{align*}
$OPT \le S(\bx^*; \bw) =  \sum_{i=1}^{\ell+1} |w_i| x^*_i =  w([k]) + \sum_{i=k+1}^{\ell+1} |w_i| x^*_i 
< w([k]) + w([k+1])$ %\text{ (from Proposition~\ref{prop:wk+1})}
%\end{align*}
from Proposition~\ref{prop:wk+1},
where $\{x^*_i\}$ and $\ell$ are defined in Lemma~\ref{lem:opt-canon-laplace}. Substituting $|w_i| = u$ for all $i$, we get
%\[
$OPT < (2k+1)u$
%\]
Since $OPT$ is the objective value attained by the optimal DCLEF mechanism, $OPT = mu$ for some $m\in [n]$.
This implies $2k + 1> m$.  Since $k$ and $m$ are integers, it follows that $2k \ge m$, or equivalently, $S(\bx; \bw) \ge \frac12 OPT$.
\end{proof}

\section{Proof of Theorem~\protect\lowercase{\ref{thm:hardness}} (Hardness of Approximation)}\label{sec:hardness}
Consider the following example. Let $n = 4$. The private costs of the four individuals are given by $v_1 = a, v_2 = v_3 = v_4 = 2$, where $0 < a < 2$. The weights of the four individuals are given by $w_1 = w_2 = w_3 = w = d$, where $d > 0$. Let the budget $B = 1 + a/2 < 2$.

Observe that the optimal individually rational, budget-feasible, DCLEF  mechanism would set $x^*_1 = 1$ and exactly one of $x^*_2, x^*_3$ and $x^*_4$ to 1. 
Without loss of generality, assume that $x^*_1 = x^*_2 = 1$ and $x^*_3 = x^*_4 = 0$. 
Therefore, the optimal weight $OPT = 2d$.  
Consider a \emph{truthful} DCLEF mechanism that is $2-\ve$ approximate, for any $\ve > 0$. 
Such a mechanism must set $x_1 = 1$ (since it is truthful) and at least one more $x_i$ to 1 (since it is $2-\ve$ approximate). 
Therefore, for such a mechanism $\s(\bx) \le 2d$. 
This implies that for such a mechanism, the cost of individual 1, $c_1(\e_1) = v_1w_1/\s(\bx) \ge v_1d/(2d) \ge v_1/2$. 
Since the mechanism is truthful, the payment $p_1$ cannot depend on $v_1$. 
Also, for this mechanism to be individually rational, $p_1$ must be at least 1 (since $v_1$ can be arbitrarily close to 2), which implies that the remaining budget is strictly less than 1. 
However, for this mechanism, for $i\in \{2,3,4\},\ c_i(\e_i) = 2d/\s(\bx) \ge 1$. 
This means that this mechanism cannot be both individually rational and budget feasible. \qed

%\appendixhead{DANDEKAR}

% Acknowledgments
%\begin{acks}
%The authors would like to thank Dr. Maura Turolla of Telecom
%Italia for providing specifications about the application scenario.
%\end{acks}

% Bibliography
%\bibliographystyle{acmsmall}
%\bibliography{recsys_privacy_bib}

% History dates
%\received{February 2007}{March 2009}{June 2009}

% Electronic Appendix
%\elecappendix

\end{document}